\documentclass{amsart}


\usepackage{amsmath,amsfonts,amssymb,amsthm,amscd,comment,euscript}
\usepackage{mathbbol, stmaryrd}

\usepackage[all]{xy}
\usepackage{graphicx}
\usepackage{enumerate} 
\usepackage[colorlinks=true]{hyperref} 
\usepackage[usenames,dvipsnames]{xcolor}

\usepackage{hyperref} 

\theoremstyle{plain}
\newtheorem{lem}{Lemma}[section]
\newtheorem{cor}[lem]{Corollary}
\newtheorem{prop}[lem]{Proposition}
\newtheorem{thm}[lem]{Theorem}

\theoremstyle{definition}
\newtheorem{ex}[lem]{Example}
\newtheorem{rem}[lem]{Remark}
\newtheorem{dfn}[lem]{Definition}

\newtheorem{conj}[lem]{Conjecture}


\newcommand{\dis}{\displaystyle}




\newcommand{\al}{\alpha}






  %


   %

   %






\newcommand{\vac}{\mathbf 1}



\title[Some remarks on associated varieties]{Some remarks on associated varieties of vertex operator superalgebras}
\author{Hao Li}
\address{Department of Mathematics and Statistics, SUNY-Albany, Albany 12222, NY, USA}
\email{hli29@albany.edu}
\begin{document}
\begin{abstract} We study several families of vertex operator superalgebras from a jet (super)scheme point of view.
We provide new examples of vertex algebras which are "chiralizations" of their
Zhu's Poisson algebras $R_V$. Our examples come from affine $C_\ell^{(1)}$-series vertex algebras ($\ell \geq 1$), certain $N=1$ superconformal vertex algebras, Feigin-Stoyanovsky principal subspaces, Feigin-Stoyanovsky type subspaces, graph vertex algebras $W_{\Gamma}$, and extended Virasoro vertex algebra. We also give a counterexample to the chiralization property for the $N=2$ superconformal vertex algebra of central charge $1$.
\end{abstract}
\maketitle
\section{Introduction}






 Beilinson, Feigin and Mazur \cite{beilinson1991introduction} first introduced the notions of singular support and lisse representation in order to study Virasoro (vertex) algebra. Arakawa later extended these notions to any finitely strongly generated, non-negatively graded vertex algebra $V$. More precisely, via a canonical decreasing filtration $\left\{F_{p}(V)\right\}$ introduced in \cite{li2005abelianizing}, we can associate to $V$  a positively graded vertex Possion vertex algebra $gr^{F}(V)$. The spectrum of  $gr^{F}(V)$ is called singular support of $V$ and is denoted by $SS(V)$. With respect to this filtration, $V/F_{1}(V)$ is the Zhu $C_{2}$-algebra $R_V$. The reduced spectrum $X_{V}={\rm Specm}(R_{V})$ is a Poisson variety which is called the associated variety of $V.$ A large body of work has been devoted to
descriptions of associated variety for various  vertex operator algebras
 \cite {arakawa2012remark,arakawa2018joseph,arakawa2018irreducibility,arakawa2017sheets}.
Certainly the most prominent examples from this point of view are well-known lisse, or $C_2$-cofinite vertex algebras characterized by ${\rm dim}(X_{V})=0$. Arakawa and Kawasetsu relaxed this condition to quasi-lisse in \cite{arakawa2018quasi} which requires that $X_{V}$ has finitely many symplectic leaves. Associated varieties are important in the geometry of Higgs branches in 4d/2d dualities in physics \cite{beem2015infinite}.

According to \cite[Proposition 2.5.1]{arakawa2012remark}, the embedding \[R_{V}\hookrightarrow gr^{F}(V)\] can be extended to a 
surjective homomorphism of vertex Poisson algebras
 \[\psi:J_{\infty}(R_{V})\twoheadrightarrow gr^{F}(V)\] where $J_{\infty}(R_{V})$ is the (infinite) jet algebra of $R_{V}.$ The map $\psi$ induces an injection from the singular support
into the jet scheme of the associated scheme of $V$, $\widetilde{X}_{V}={\rm Spec}(R_{V})$,
\[\phi:SS(V)\hookrightarrow J_{\infty}(\widetilde{X}_{V}).\]In \cite{arakawa2018arc}, authors showed that $\phi$ is an isomorphism as varieties if $V$ is quasi-lisse. It was shown in \cite{van2018chiral} that if the map $\psi$ is an isomorphism, then one can compute Hochschild homology of the Zhu algebra via the chiral homology of elliptic curves. Proving that $\psi$ is an isomorphsim or finding the kernel of $\psi$ turns out to be subtle. In \cite{arakawa2018singular}, authors provided several examples for which $\psi$ is not an isomorphism, including the $\mathbb{Z}_2$-orbifold of the rank one Heisenberg algebra. Finding the kernel of $\psi$ even for this example seems quite hard (see also \cite{van2020singular}).

For a vertex algebra $V$ where $\psi$ is an isomorphism we obtain a very interesting (and important) consequence
$${\rm ch}[V](\tau)=HS_q(J_\infty(R_V)),$$
where the left-hand side is the graded dimension of $V$ and the right-hand side is the Hilbert series. The left-hand side has often combinatorial interpretations which
in turn can provide a non-trivial information about the jet scheme.

This work is our modest attempt to try to generalize above notions to vertex superalgebra case. We first generalize the notion of  associated variety to vertex superalgebras. Then we investigate the map $\psi$ in the cases of affine vertex algebras, rank one lattice vertex superalgebras including the simple $N=2$ superconformal vertex algebra at level 1, Feigin-Stoyanovsky principal subspaces, Feigin-Stoyanovsky type subspaces, simple $N=1$ vertex algebra associated with $(2,4k)-$minimal model and certain extended Virasoro vertex algebras. Along the way, we get some interesting character fomulas and the bases of vertex algebra. We provide an example which is simple $N=2$ vertex algebra at level 1, where $\psi$ is not an isomorphism. Moreover we make a conjecture about its kernel.
We end the paper with a brief glimpse at our plans for future research.



\section{Definitions and Preliminary results}

\begin{dfn}

Let $V$ be a superspace, i.e., a $\mathbb{Z}_{2}-$graded vector space. $V=V_{\overline{0}} \oplus V_{\overline{1}}$ where $\{ {\overline{0}}, \overline{1} \}=\mathbb{Z}_2$. If $a\in V_{p(a)}$, we say that the element $a$ has parity $p(a)\in \mathbb{Z}_2$.

A field is a formal series of the form $a(z)=\sum_{n\in \mathbb{Z}}a_{(n)}z^{-n-1}$ where $a_{n}\in $ End$(V)$ and for each $v\in V$ one has $$a_{(n)}v =0$$ for $n\gg 0$.

We say that a field $a(z)$ has parity $p(a)\in \mathbb{Z}_2$ if $$a_{(n)}V_{\alpha}\in V_{{\alpha}+p(a)}$$ for all $\alpha\in \mathbb{Z}_2$, $n\in \mathbb{Z}.$

A vertex superalgebra contains the following data: a vector space of states $V$, the vacuum vector $\mathbf{1} \in V_{\overline{0}},$ derivation $T$, and the state-field correspondence map $$a\longmapsto Y(a,z)=\sum_{n\in \mathbb{Z}} a_{(n)}z^{-n-1},$$ satisfying the following axioms:
\begin{itemize}
    \item \textbf{(translation coinvariance):} $[T,Y(a,z)]=\partial Y(a,z)$.
    \item \textbf{(vacuum):} $Y(\mathbf{1} ,z)=Id_{V}$, $Y(a,z)\mathbf{1} |_{z=0}=a,$
    \item \textbf{(locality):} $(z-w)^{N}Y(a,z)Y(b,w)=(-1)^{p(a)p(b)}(z-w)^{N}Y(b,w)Y(a,z)$
    for $N\gg 0$.
\end{itemize}

\end{dfn}

In particular, a vertex algebra $V$ is called supercommutative if $a_{(n)}=0$ for $n\geq 0$. It is well-known that the category of  commutative vertex superalgebras is equivalent with the category of unital  commutative associative superalgebra equipped with an even derivation.

We say a vertex algebra ${V}$ is generated by a subset $\mathcal{U} \subset {V}$ if any element of ${V}$ can be written as a finite linear combination of terms of the form

\begin{align*}
b^{1}_{(i_1)} {b^{2}}_{(i_2)} \ldots  {b^{n}}_{(i_n)} \mathbf{1}
\end{align*}

\noindent for $b^k \in \mathcal{U}$, $i_k \in \mathbb{Z}$, and $n \geq 0$.  If every element of $V$ can be written with $i_k <0$, we write ${V}=\langle \mathcal{U} \rangle_S$ and say $V$ is strongly generated by $\mathcal{U}$.

\begin{ex}(see for instance \cite{zheng2017vertex}) \label{aflie}
  Let $\mathfrak{g}$ be a finite dimensional Lie superalgebra with a nondegenerate even supersymmetric invariant bilinear form $(\cdot,\cdot) $. We can associate the affine Lie superalgebra $\widehat{\mathfrak{g}}$  to the pair $(\mathfrak{g},(\cdot,\cdot))$. Its universal vacuum representation of level $k$, $V_{\widehat{\mathfrak{g}}}(k,0)$, is a vertex superalgebra.
  In particular, when $\mathfrak{g}$ is a simple Lie algebra, $V_{\widehat{\mathfrak{g}}}(k,0)$ has an unique maximal ideal $I_{\widehat{\mathfrak{g}}}(k,0).$
And $L_{\widehat{\mathfrak{g}}}(k,0)=V_{\widehat{\mathfrak{g}}}(k,0)/I_{\widehat{\mathfrak{g}}}(k,0)$ is also a vertex algebra.
\end{ex}

\begin{ex}\cite{kac1998vertex}\label{fermion} 
To any $n$ dimensional superspace $A$ with a non-degenerate anti-supersymmetric bilinear form $(\cdot,\cdot)$, we can associate a Lie superalgebra $C_{A}$. If we fix a basis of $A$: \[\left\{\phi^{1},\ldots,\phi^{n}\right\},\] the free fermionic vertex algebra $\mathcal{F}$ associated to $A$, is a vertex superalgebra strongly generated by $\phi^{i}_{(-\frac{1}{2})}\mathbf{1}$ $(1\leq i\leq n)$ where $Y(\phi^{i}_{(-\frac{1}{2})}\mathbf{1},z)=\displaystyle \sum_{n\in \frac{1}{2}+\mathbb{Z}}\phi_{(n)}^i z^{-n-\frac{1}{2}}.$
\end{ex}

\begin{dfn}

A vertex superalgebra $V$ is called a vertex operator superalgebra if it is $\frac12 \mathbb{Z}$-graded,
$$V=\coprod_{n \in \frac12 \mathbb{Z}} V_{(m)},$$
with a conformal vector $\omega$ such that the set of operators $\left\{L_{(n)}, id_{V}\right\}_{n\in \mathbb{Z}}$ with 
$L_{(n)}=\omega_{(n+1)}$ defines a representation of Virasoro algebra on $V$; that is $$[L_{(n)},L_{(m)}]=(m-n)L_{(m+n)}+\frac{m^{3}-m}{12}\delta_{m+n,0}c_{V}$$ for $m,n\in \mathbb{Z}$. We call $c_{V}$ the central charge of $V$. We require that $L_{(0)}$ is diagonalizible and it defines the $\frac12 \mathbb{Z}$ grading - its eigenvalues are called (conformal) weights.  In several examples we will encounter $\frac12 \mathbb{Z}$-graded vertex superalgebras without a conformal vector.
For this reason, we define the character or graded dimension
as
$$ {\rm ch}[V](q)=\sum_{m \in \frac12 \mathbb{Z}} {\rm dim}(V_{(m)}) q^m.$$
As we do not care about modularity here, we suppress the $q^{-\frac{c}{24}}$ factor and also view $q$ as a formal variable.

\end{dfn}

\begin{ex}\cite{lepowsky2012introduction} We let $Vir$ denote the Virasoro Lie algebra.
Then the universal $Vir$-module $V_{Vir}(c,0)$ has a natural vertex operator algebra with central charge $c$.
\end{ex}

\begin{ex}\cite{kac1998vertex}
The universal vertex superalgebra associated with the $N=1$ Neveu-Schwarz Lie superalgebra will be denoted by
$V_{c}^{N=1}$, where $c$ is the central charge. It is a vertex operator superalgebra strongly generated by an odd vector $G_{(-\frac{3}{2})}\mathbf{1}$ and the conformal vector $L_{(-2)}\mathbf{1}$.
\end{ex}

\begin{ex}\cite{kac1998vertex},
The universal vertex superalgebra associated with the $N=2$ superconformal Lie algebra will be denoted by $V_{c}^{N=2}$.
It is a vertex operator superalgebra strongly generated by two odd vectors $G^{+}_{(-\frac{3}{2})}\mathbf{1}$, $G^{-}_{(-\frac{3}{2})}\mathbf{1}$ and two even vectors $L_{(-2)}\mathbf{1}$, $J_{(-1)}\mathbf{1}$.
\end{ex}

\begin{dfn}
A commutative vertex superalgebra $V$ is called a  vertex Poisson superalgebra if it is 
equipped with a linear operation,
$$V\rightarrow {\rm Hom}(V,z^{-1}V[z^{-1}]),\quad a\rightarrow Y_{-}(a,z)=\sum_{n\geq 0}a_{(n)}z^{-n+1},$$
such that
\begin{itemize}
    \item $(Ta)_{n}=-na_{(n-1)}$,
    \item $a_{(n)}b=\sum_{j\geq 0} (-1)^{n+j+1}\frac{(-1)^{p(a)p(b)}}{j!}T^{j}(b_{(n+j)}a),$
    \item $[a_{(m)},b_{(n)}]=\sum_{j\geq 0} \binom{m}{j}(a_{(j)}b)_{(m+n-j)},$
    \item  $a_{(n)}(b\cdot c)=(a_{(n)}b)\cdot c+(-1)^{p(a)p(b)}b\cdot (a_{(n)}c),$
\end{itemize}
for $a,b,c\in V$ and $n,m \geq 0$.
\end{dfn} A vertex Lie superalgebra structure on $V$ is given by $(V,Y_{-},T)$. So we can also say that a vertex Poisson superalgebra is a commutative vertex superalgebra equipped with a vertex Lie superalgebra structure. In fact, we can obtain a vertex Poisson superalgebra from any vertex superalgebra through standard filtration or Li's filtration. Following  \cite{li2005abelianizing}, we can define a decreasing sequence of subspaces $\left\{F_{n}(V)\right\}$ of the superalgebra $V$, where for $n\in \mathbb{Z}$, $F_{n}(V)$ is linearly spanned by the vectors $$u_{(-1-k_{1})}^{(1)}\ldots u_{(-1-k_{r})}^{(r)}\bf{1}$$
for $r\geq 1$, $u^{(1)},\ldots,u^{(r)}\in V,$ $k_{1},\ldots,k_{r}\geq 0$ with $k_{1}+\ldots+k_{r}\geq n.$ Then \[ V=F_{0}(V)\supset F_{1}(V)\supset\ldots\]such that  \begin{align*}
&u_{(n)}v\in F_{r+s-n-1}(V)\quad {\rm for} \quad u\in F_{r}(V),v\in F_{s}(V), r,s\in\mathbb{N},n\in \mathbb{Z},\\
&u_{(n)}v\in F_{r+s-n}(V)\quad {\rm for} \quad u\in F_{r}(V), v\in F_{r}(V),r,s,n\in \mathbb{N}. \end{align*}

\noindent The corresponding associated graded algebra 
$gr^{F}(V)=\coprod_{n\geq 0} F_{n}(V)/F_{n+1}(V)$ is a vertex Poisson superalgebra . Its vertex Lie superalgebra structure is given by:
$$ T(u+F_{r+1}(V))=Tu+F_{r+2}(V)$$
$$ Y_{-}(u+F_{r+1}(V),z)(v+F_{s+1}(V))=\sum_{n\geq 0}(u_{(n)}v+F_{r+s-n+1}(V))z^{-n-1}$$
for $u\in F_{r}(z),v\in F_{s}(z)$ with $r,s\in \mathbb{N}$. For the standard filtration $\left\{G_{n}(V)\right\}$, we also have the associated graded vertex superalgebra $gr^{G}(V)$. In \cite[Proposition 2.6.1]{arakawa2012remark}, T.Arakawa showed that \[gr^{F}(V)\cong gr^{G}(V)\] as vertex Poisson superalgebras. Thus we sometimes drop the upper index $F$ or $G$ for brevity.

According to \cite{li2005abelianizing}, we know that $$F_{n}(V)=\left\{u_{(-1-i)}v|u\in V,i\geq 1,v\in F_{n-i}(V)\right\}.$$ In particular, $F_{0}(V)/F_{1}(V)=V/C_{2}(V)=R_{V}\subset gr^{F}(V)$ which is a Poisson superalgebra according to \cite{zhu1996modular}. Its Poisson structure is given by $$\overline{u}\cdot \overline{v}=\overline{u_{(-1)}v},$$
$$\left\{\overline{u},\overline{v}\right\}=\overline{u_{(0)}v}$$
 for $u,v\in V$ where $\overline{u}=u+C_{2}(V)$. It was shown in  \cite[Corallary 4.3]{li2005abelianizing} that $gr^{F}(V)$ is generated by $R_{V}$ a differential algebra. \color{red}\color{black} We compute $C_{2}$-algebra for some simple examples first.

\begin{ex} Following notation in Example \ref{fermion}, let $\mathcal{F}$ be a free fermionic vertex superalgebra associated with an $n$-dimensional superspace $A$. Clearly, the $C_{2}$-algebra of $\mathcal{F}$ is $$R_{\mathcal{F}}=\mathbb{C}[\overline{\phi^{1}_{(-\frac{1}{2})}\mathbf{1}},\ldots, \overline{\phi^{n}_{(-\frac{1}{2})}\mathbf{1}}]$$ where $\overline{\phi^{i}_{(-\frac{1}{2})}\mathbf{1}}$ is even (resp. odd) if $\phi^{i}$ is even (resp. odd) in $A$.
\end{ex}

\begin{ex}\label{aff sl}
According to \cite{feigin2011zhu}, for a simple affine vertex algebras $L_{\widehat{\mathfrak{g}}}(k,0)$, $k \in \mathbb{N}$, where $\mathfrak{g}$ is a simple Lie algebra, we have:
 \[R_{L_{\widehat{\mathfrak{g}}}(k,0)}=\mathbb{C}[ u^{1}_{(-1)} \mathbf{1}, u^{2}_{(-1)} \mathbf{1}, \ldots, u^{n}_{(-1) }\mathbf{1}]/\langle  U(\mathfrak{g})\circ ((e_{\theta})_{(-1)})^{k+1}\mathbf{1}\rangle,\] where $\left\{u^{1}, u^{2}\ldots, u^{n}\right\} $ is a basis of $\mathfrak{g}$, $\theta $ is the highest root of $\mathfrak{g}$ and $\circ$ represents the adjoint action. In particular, when $\mathfrak{g}=sl(2)$,
 $$R_{L_{\widehat{{sl_{2}}}}(k,0)}\cong \mathbb{C}[e,f,h]/\langle  f^{i}\circ e^{k+1}|0\leq i\leq 2k+2\rangle$$ where $e,f,h$ correspond to $e_{(-1)}\mathbf{1}, f_{(-1)}\mathbf{1}, h_{(-1)}\mathbf{1}$.
\end{ex}

\begin{ex} 
For any simple Virasoro algebras $L_{Vir}(c_{(p,p')},0)$,  where $c_{(p,p')}=1-\frac{6(p-p')^{2}}{p p'}$ where $p>p' \geq 2$ and $p,p'$ are coprime, according to
\cite{beilinson1991introduction,van2018chiral} its $C_{2}$-algebra  is isomorphic to $\mathbb{C}[x]/\langle x^{\frac{(p-1)(p'-1)}{2}}\rangle,$ where $x$ corresponds to $\omega=L_{(-2)}\mathbf{1}$.
\end{ex}

\begin{ex} The $C_{2}$-algebra of $V_{c}^{N=1}$ is $R_{V_{c}^{N=1}}=\mathbb{C}[x,\theta]$ where $x$  and $\theta$ correspond to even vector $L_{(-2)}\mathbf{1}$ and odd vector $G_{(-\frac{3}{2})}\mathbf{1}$, respectively.
\end{ex}

\begin{ex}  The $C_{2}$-algebra of $V_{c}^{N=2}$ is $R_{V_{c}^{N=2}}=\mathbb{C}[x,y,\theta_{1},\theta_{2}]$ where $x,y,\theta_{1},\theta_{2}$ correspond to $L_{(-2)}\mathbf{1}$, $J_{(-1)}\mathbf{1}$, $G^{+}_{(-\frac{3}{2})} \vac$ and $G^{-}_{(-\frac{3}{2})} \vac,$ respectively. Here $\theta_{1},\theta_{2}$ are odd variables.
\end{ex}

\section{Affine jet superalgebra}

Inspired by the definition of jet algebra, we may give an analogous definition of a jet superalgebra in the affine case. Here we closely follow \cite{arakawa2012remark}.

 Let $\mathbb{C}[x^{1},x^{2}, \ldots, x^{n}, \theta^{1},\ldots, \theta^{m}]$ be a polynomial superalgebra where  $$x^{1}
 ,x^{2}, \ldots, x^{n}$$ are ordinary variables  and  $$\theta^{1},\ldots, \theta^{m}$$ are odd variables, i.e. $(\theta^{i})^{2}=0$ for $1\leq i \leq m.$  
 Let $f_{1},f_{2}, \ldots, f_{n}$ be $\mathbb{Z}_{2}$-homogeneous elements in the polynomial superalgbera. We will define the jet superalgbra of the quotient superalgebra: \begin{align*}R=\frac{ \mathbb{C}[x^{1},x^{2}, \ldots, x^{n}, \theta^{1},\ldots, \theta^{m}]}{\langle  f_{1},f_{2}, \ldots, f_{r}\rangle}.\end{align*}

 Firstly, let us introduce new even variables $x^{j}_{(-\Delta_{j}-i)}$  and odd variables $\theta^{j'}_{(-\Delta_{j'}-i)}$ for $i=0,\ldots,m$ where $\Delta_{j}$ and $\Delta_{j'}$ are degrees of $x^{j}$ and $\theta^{j'}$. In most cases, we will assume that the degree of each variable is $1$ although in some cases the odd degree can be shifted by $\frac12$. We define an even derivation $T$ on \[\mathbb{C}[x^{j}_{(-\Delta_{j}-i)},\theta^{j'}_{(-\Delta_{j'}-i))}\;|\;0\leq i\leq m,\; 1\leq j\leq n,\; 1\leq j'\leq m]\] as \begin{equation*}
  T(x^{j}_{(-\Delta_{j}-i)}) =
    \begin{cases}
      (-\Delta_{j}-i)x^{j}_{(-\Delta_{j}-i-1))} & \text{for $i\leq m-1$}\\
      0 & \text{for $i=m$},\\
    \end{cases}
\end{equation*}
 and
 \begin{equation*}
  T(\theta^{j'}_{(-\Delta_{j'}-i)}) =
    \begin{cases}
      (-\Delta_{j'}-i)\theta^{j'}_{(-\Delta_{j'}-i-1))} & \text{for $i\leq m-1$}\\
      0 & \text{for $i=m$}.\\
    \end{cases}
\end{equation*} Here we identify $x^{j}$ and $\theta^{j'}$ with $x^{j}_{(-\Delta_{j})}$ and $\theta^{j'}_{(-\Delta_{j'})},$ respectively. Set \begin{align*}R_{m}=\frac{\mathbb{C}[x^{j}_{(-\Delta_{j}-i)},\theta^{j}_{(-\Delta_{j'}-i))}\;|\;0\leq i\leq m,\; 1\leq j\leq n,\; 1\leq j'\leq m]}{\langle T^{j}f_{i}|i=1,\ldots n,\; j\in\mathbb{N}\rangle } .  \end{align*} Then the $m$-jet superscheme $V_{m}$ is defined as ${\rm Spec}((R_{m})_{\overline{0}})$ where $(R_{m})_{\overline{0}}$ is the even part of the $R_{m}.$ The infinite jet superalgebra of $V$ is \begin{align*} J_{\infty}(R)&=\displaystyle\lim_{\underset{m}{\leftarrow}}R_{m}\\&=\frac{\mathbb{C}[x^{j}_{(-\Delta_{j}-i)},\theta^{j'}_{(-\Delta_{j'}-i))}\;|\;0\leq i,\; 1\leq j\leq n,\; 1\leq j'\leq m]}{\langle T^{j}f_{i}|i=1,\ldots n,\; j\in\mathbb{N}\rangle }.\end{align*} We often omit "infinite" and call it jet superalgebra for brevity. The jet superalgebra is a differential  commutative superalgebra. We denote the ideal
\[\langle T^{j}f_{i};i=1,\ldots,n, j\geq 0\rangle\] by $\langle f_{1},\ldots,f_n \rangle_{\partial}.$  Later, we sometimes write $x_{(j)}$ as $x(j).$ The infinite jet superscheme, or arc space, is defined as  $$J_{\infty}(V)={\rm Spec}((J_{\infty}(R))_{\overline{0}}).$$ We define the degree of each variable $u_{(-\Delta-j)}$ to be $\Delta+j$ where $u=x$ or $\theta$. Then $J_{\infty}(R)=\displaystyle \coprod_{\frac{1}{2}\mathbb{Z}}(J_{\infty}(R))_{(m)}$ where $(J_{\infty}(R))_{(m)}$ is the set of all elements in jet superalgebra with degree $m$. We define Hilbert series of  $J_{\infty}(R)$ as: \[HS_{q}(J_{\infty}(R))=\sum_{m \in \frac12 \mathbb{Z}} {\rm dim}(J_{\infty}(R)_{(m)}) q^m.\]

Following \cite{arakawa2012remark}, $J_{\infty}(R)$ has a unique Poisson vertex superalgebra structure such that

$$u_{(n)}v=\begin{cases} \left\{u,v\right\}, & \mbox{if } n=0 \\ 0, & \mbox{if } n>0 \end{cases} $$
for $u,v\in R\in J_{\infty}(R)$.

Furthermore, we can extend the embedding $R_{V}\hookrightarrow gr^{F}(V)$ to a surjective differential superalgebra homomorphism $J_{\infty}(R_{V}) \twoheadrightarrow gr^{F}(V)$. It is obvious that the map is a differential superalgebra homomorphism. It is surjective since  $gr^{F}(V)$ is generated by $R_{V}$ as a differential algebra. Moreover, it was shown in \cite{arakawa2012remark} that this map is actually a Poisson vertex superalgebra epimorphism. From now on, we call this map $\psi.$ The map $\psi$ is not necessarily injective and it is an open problem to characterize rational vertex algebras for which $\psi$ is injective.

\subsection{Complete lexicographic ordering}\label{ordering} 
Following \cite{feigin2001combinatorics}, we define the complete lexicographic ordering on a basis or spanning set of the jet superalgebra.  Given a jet superalgebra \begin{align*} J_{\infty}(\mathbb{C}[y^1,y^2,\ldots,y^n]/I)=\mathbb{C}[y^{1}_{(-\Delta_1-i)},\dots,y^{n}_{(-\Delta_n-i)}|i\in\mathbb{N}]/I_{\infty} \end{align*} where $\Delta_{i}$ is the degree of $y^{i},$ we can first define an ordering of all variables in the following way: \[y^{1}_{(-\Delta_1)}<y^{2}_{(-\Delta_1)}<\ldots<y^{n}_{(-\Delta_n)}<y^{1}_{(-\Delta_1-1)}<y^{2}_{(-\Delta_2-1)}<\ldots.\]
\begin{dfn}
A monomial $u$ of $ J_{\infty}(\mathbb{C}[y^1,y^2,\ldots,y^n]/I) $ is called an ordered monomial if it is of the form:\[(y^{n}_{(-\Delta_n-m)})^{a^{n}_{m+1}}\ldots(y^{1}_{(-\Delta_1-m)})^{a^1_{m+1}}\ldots (y^{n}_{(-\Delta_n)})^{a^{n}_{1}}\ldots (y^{2}_{(-\Delta_1)})^{a^{2}_{1}}(y^{1}_{(-\Delta_1)})^{a^1_1}\] where $m\in \mathbb{Z}_{+}$ and $a^{i}_j\in \mathbb{N}$.
\end{dfn}
It should be clear that all ordered monomials form a spanning set of the jet superalgebra. Then let us define the multiplicity of an ordered monomial as $$\mu(u)=\sum_{i=1}^{m+1} (a^{1}_i+a^{2}_i+\ldots+a^{n}_i).$$ Given two arbitrary ordered monomials \[u=y^{n}_{(-\Delta_n-m)})^{a^{n}_{m+1}}\ldots(y^{1}_{(-\Delta_1-m)})^{a^1_{m+1}}\ldots (y^{n}_{(-\Delta_n)})^{a^{n}_{1}}\ldots (y^{2}_{(-\Delta_1)})^{a^{2}_{1}}(y^{1}_{(-\Delta_1)})^{a^1_1}\] and \[v=y^{n}_{(-\Delta_n-m)})^{b^{n}_{m+1}}\ldots(y^{1}_{(-\Delta_1-m)})^{b^1_{m+1}}\ldots (y^{n}_{(-\Delta_n)})^{b^{n}_{1}}\ldots (y^{2}_{(-\Delta_1)})^{b^{2}_{1}}(y^{1}_{(-\Delta_1)})^{b^1_1},\] we define a complete lexicographic ordering as following: If $\mu(u)<\mu(v)$, we say that $u<v$. If $\mu(u)=\mu(v)$, we compare exponents of \[y^{1}_{(-\Delta_1)},y^{2}_{(-\Delta_1)},\ldots,y^{n}_{(-\Delta_n)},\ldots,y^{1}_{(-\Delta_1-m)},y^{n}_{(-\Delta_n-m)}\] in this order. Namely, we say $v<u$ if $a^1_1<b^1_1$; if they are equal, we then compare $a^{2}_{1}$ and $b^2_1$, and so on.  Given a polynomial $f$, we call the greatest monomial among all its terms with respect to the complete lexicographic ordering the {\em  leading term}  of $f$.

\section{Affine and lattice vertex algebras}

In this section we analyze the Poisson (super)algebra $R_V$ and the injectivity of the $\psi$ map for some familiar examples of affine and lattice vertex algebras.

\begin{ex}
It was shown in \cite[Proposition 2.7.1]{arakawa2012remark} that for any simple Lie algebra $\mathfrak{g}$, we have $J_{\infty}(R_{V_{\widehat{\mathfrak{g}}(k,0)}})\cong gr^{F}(V_{\widehat{\mathfrak{g}}(k,0)})$.
\end{ex}

\begin{prop}\label{iso fermion} For the free fermionic vertex superalgebra,
$J_{\infty}(R_{\mathcal{F}})\cong gr^{F}(\mathcal{F})$ as vertex Poisson superalgebras.
\end{prop}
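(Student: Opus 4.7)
The plan is to combine the automatic surjectivity of $\psi$ with a direct equality of Hilbert series.

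First I would make $J_\infty(R_\mathcal{F})$ explicit. Since $R_\mathcal{F}$ is the free supercommutative superalgebra on $\overline{\phi^1_{(-1/2)}\mathbf{1}},\dots,\overline{\phi^n_{(-1/2)}\mathbf{1}}$ (no defining relations descend from $\mathcal{F}$), the construction of Section 3 gives
\[
J_\infty(R_\mathcal{F}) \;=\; \mathbb{C}[\phi^i_{(-1/2-j)}\mid 1\le i\le n,\; j\ge 0],
\]
still free as a supercommutative superalgebra, with $\phi^i_{(-1/2-j)}$ of conformal weight $j+\tfrac12$ and parity $p(\phi^i)$ inherited from $A$. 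Writing $n_0$ and $n_1$ for the numbers of even and odd generators in $A$, the Hilbert series is
\[
HS_q(J_\infty(R_\mathcal{F})) \;=\; \prod_{k\ge 0}\frac{(1+q^{k+1/2})^{n_1}}{(1-q^{k+1/2})^{n_0}}.
\]

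Next I would compute ${\rm ch}[\mathcal{F}](q)$. The Fock description of $\mathcal{F}$ over the Lie superalgebra $C_A$ yields the standard PBW-type basis of ordered creation monomials
\[
\phi^{i_1}_{(-k_1-1/2)}\cdots\phi^{i_r}_{(-k_r-1/2)}\mathbf{1},
\]
ordered lexicographically in $(k_l,i_l)$, with strict inequality enforced on the odd modes (which square to zero) and weak inequality on the even modes; this follows from the $(\cdot,\cdot)$-induced (anti)commutation relations. The very same product formula drops out, and since the conformal-weight character is preserved under passage to the associated graded of any decreasing filtration, ${\rm ch}[\mathcal{F}](q) = {\rm ch}[gr^F(\mathcal{F})](q)$.

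Finally, $\psi$ is a conformal-weight preserving surjection of $\tfrac12\mathbb{Z}_{\ge 0}$-graded superspaces, and the computation above shows that the graded dimensions on the two sides agree in every weight. Hence $\ker\psi = 0$, and since $\psi$ is already a homomorphism of vertex Poisson superalgebras, the claimed isomorphism follows. The only subtle point — and the reason this case is easy — is that $R_\mathcal{F}$ carries no defining relations, so the jet superalgebra is free and the character match is immediate; by contrast, the propositions that follow in the paper will require controlling nontrivial ideals generated by relations in $R_V$ and their $T$-derivatives.
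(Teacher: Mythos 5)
Your argument is correct, but it concludes by a different mechanism than the paper. The paper (following Arakawa's proof for affine vertex algebras) works with the standard filtration $G$: it uses the PBW isomorphism $\mathcal{F}\cong U(A[t^{-1}]t^{-1})$ of super vector spaces, describes $G^n(\mathcal{F})$ explicitly, identifies $gr^G(\mathcal{F})$ with the super-symmetric algebra $S(A[t^{-1}]t^{-1})$, observes that this is exactly $J_\infty(R_\mathcal{F})$ as a vertex Poisson superalgebra, and finally invokes $gr^G(\mathcal{F})\cong gr^F(\mathcal{F})$. You instead run the canonical surjection $\psi$ against a graded-dimension count: both sides have Hilbert series $\prod_{k\ge 0}(1+q^{k+1/2})^{n_1}(1-q^{k+1/2})^{-n_0}$, the graded pieces are finite dimensional, so the weight-preserving surjection is injective. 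Both arguments ultimately rest on the same two inputs (the PBW basis of $\mathcal{F}$ and the freeness of $R_\mathcal{F}$, which makes the jet superalgebra free); the paper's version makes the isomorphism visibly structural without ever mentioning $\psi$, while yours is exactly the template the paper deploys later for the harder cases where $R_V$ carries relations and one must bound $HS_q(J_\infty(R_V))$ from above by a reduced spanning set. The one step you should make explicit is why ${\rm ch}[gr^F(\mathcal{F})]={\rm ch}[\mathcal{F}]$: Li's filtration is by weight-homogeneous subspaces and is separated (its intersection is zero), so passage to the associated graded preserves the character; this is standard and used implicitly throughout the paper, but it is the hinge on which your dimension count turns.
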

\begin{proof}
We use T.Arakawa's argument in \cite[Proposition 2.7.1]{arakawa2012remark}. We include the proof for completeness. Here we still follow notation from Example \ref{fermion}.
According to \cite[Section 3.6]{kac1998vertex}, we can choose a conformal vector such that $\mathcal{F}$ is $\frac{1}{2}\mathbb{Z}_{\geq 0}$-graded. We consider the standard filtration $G$ on $F$. Firstly, we have $\mathcal{F}\cong U(A[t^{-1}]t^{-1})$ as  super vector spaces. Moreover $$G^{n}(\mathcal{F})=\left\{u^{1}_{(-k_{1})}\ldots u^{r}_{(-k_{r})}\mathbf{1}\;|\;k_{i}\in \frac{1}{2}\mathbb{Z}_{\geq 0},\;r\geq 0, \frac{r}{2}\leq n\right\}$$  where $u^{i}
\in \left\{\phi^{1},\ldots, \phi^{n}\right\}$. So $gr^{G}(\mathcal{F})\cong S(A[t^{-1}]t^{-1})\cong J_{\infty}(R_{\mathcal{F}}) $ as Poissson vertex superalgebra. Therefore $gr^{G}(\mathcal{F})\cong gr^{F}(\mathcal{F})\cong J_{\infty}(R_{\mathcal{F}}) $.
\end{proof}
Similarly, we can show that $\psi$ is an isomorphism for vertex superalgebra $V_{\widehat{{\mathfrak{g}}}}(k,0)$ where $\mathfrak{g}$ is a Lie superalgebra satisfying conditions in Example \ref{aflie} , and for superconformal vertex algebras $V_{c}^{N=1}$ and $V_{c}^{N=2}$.
\vspace{0.2cm}

Let  
$$V_{ \sqrt{p}\mathbb{Z}}=M(1) \otimes \mathbb{C}[\sqrt{p}\mathbb{Z}],$$
be a rank one lattice vertex algebra (resp. superalgebra) constructed from an integral
lattice $L=\mathbb{Z}\alpha \cong \sqrt{p}\mathbb{Z}$ where $\langle \alpha,\alpha\rangle=p$ is even (resp. odd). It has a conformal vector 
$\omega=\frac{1}{2p}\al_{(-1)}^2 \mathbf{1}$.
As usual, we denote the extremal lattice vectors by $e^{n \alpha}$, $n \in \mathbb{Z}$.

\begin{prop}\label{lattice}
For the lattice vertex algebra $V_{\sqrt{p}\mathbb{Z}}$  we have $$R_{V_{\sqrt{p}\mathbb{Z}}}\cong \mathbb{C}[x,y,z]/\langle  x^{2},y^{2},xy=z^{p},xz,yz\rangle. $$ When $p$ is odd, $x$ and $y$ are odd vectors.
\end{prop}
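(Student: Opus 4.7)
The plan is to work directly with the PBW description $V_{\sqrt{p}\mathbb{Z}} = \bigoplus_{m\in\mathbb{Z}} M(1) e^{m\alpha}$, where $M(1) = \mathbb{C}[\alpha(-n) : n\geq 1]$. I first cut down the obvious spanning set of $V$ modulo $C_2(V)=F_1(V)$, then read off the relations in $R_V$ from the vertex operators $Y(e^{\pm\alpha},w) e^{\pm\alpha}$, and finally check by a direct basis count that no further relations are present.

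The first step is to show that $R_V$ is spanned by the classes of $\mathbf{1}$, $\alpha(-1)^k\mathbf{1}$ for $k\geq 1$, and $e^{\pm\alpha}$. A PBW monomial containing some mode $\alpha(-n)$ with $n\geq 2$ lies in $F_1(V)=C_2(V)$ by the Li filtration estimates quoted above. Each $e^{m\alpha}$ with $|m|\geq 2$ can be written as an iterated product of the form $e^{\alpha}_{(-p-1)}\cdots e^{\alpha}_{(-p-1)}\mathbf{1}$ using modes with index $\leq -2$ (since $p\geq 1$), so $\overline{e^{m\alpha}}=0$ for $|m|\geq 2$. Finally, the identity $L(-1)e^{\pm\alpha} = \pm\alpha(-1) e^{\pm\alpha}$ places $\alpha(-1) e^{\pm\alpha}$ in $T(V)\subset C_2(V)$; combined with the ideal property of $C_2(V)$ under the commutative product $\overline{u}\cdot\overline{v}=\overline{u_{(-1)}v}$, this forces $\alpha(-1)^k e^{\pm\alpha}\in C_2(V)$ for all $k\geq 1$.

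Next I would compute the products of the proposed generators $x=\overline{e^\alpha}$, $y=\overline{e^{-\alpha}}$, $z=\overline{\alpha(-1)\mathbf{1}}$. The relations $xz=0$ and $yz=0$ follow directly from the previous step. The expansion $Y(e^\alpha,w) e^\alpha = \epsilon\, w^p \exp\!\bigl(\sum_{n\geq 1}\alpha(-n) w^n/n\bigr) e^{2\alpha}$ begins in degree $w^p$ with $p\geq 1$, so its $w^0$ coefficient vanishes and $x^2=0$; similarly $y^2=0$ (in the $p$ odd super case this is also automatic by parity). For the key relation $xy$, the analogous expansion $Y(e^\alpha,w) e^{-\alpha} = \epsilon\, w^{-p} \exp\!\bigl(\sum_{n\geq 1}\alpha(-n) w^n/n\bigr)\mathbf{1}$ has its $w^0$-coefficient equal to the $w^p$-coefficient of the Heisenberg exponential, namely $\sum \prod_n \alpha(-n)^{k_n}/(n^{k_n} k_n!)$ over partitions $(k_n)$ with $\sum n k_n = p$. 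Modulo $C_2(V)$ only the pure-weight term $\alpha(-1)^p/p!$ survives, so $xy = (\epsilon/p!)\, z^p$, and rescaling $x$ (or $y$) by a nonzero constant normalizes this to $xy = z^p$.

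It remains to show that the surjection $\mathbb{C}[x,y,z]/\langle x^2,y^2,xy-z^p,xz,yz\rangle \twoheadrightarrow R_V$ is injective. The left-hand side has the evident monomial basis $\{z^k : k\geq 0\}\cup\{x,y\}$, and the corresponding classes in $R_V$ are linearly independent: separating by $\alpha(0)$-weight puts $\{\overline{\alpha(-1)^k\mathbf{1}}\}$ in weight $0$ while $x,y$ sit in weights $\pm p$, and within weight $0$ the classes $\overline{\alpha(-1)^k\mathbf{1}}$ are distinguished by their conformal weight. The main technical point is the explicit identification of the $w^p$-coefficient of the Heisenberg exponential modulo $C_2(V)$; the adaptations needed when $p$ is odd reduce to tracking the cocycle signs in the vertex operator formulas and observing that $x^2=y^2=0$ is automatic for odd generators.
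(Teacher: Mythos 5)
Your overall architecture matches the paper's (cut down a spanning set of $R_V$, read off the relations from $Y(e^{\pm\alpha},w)e^{\pm\alpha}$, then prove the induced surjection is injective), and the first two steps are essentially correct. The injectivity step, however, has two problems. First, the claimed monomial basis $\{z^k : k\geq 0\}\cup\{x,y\}$ of $\mathbb{C}[x,y,z]/\langle x^2,y^2,xy-z^p,xz,yz\rangle$ is wrong: $z^{p+1}=z\cdot z^p=z\cdot xy=(zx)\cdot y=0$ in the quotient, so the ring is finite-dimensional with basis $\{1,z,\ldots,z^p,x,y\}$. This is not a cosmetic slip — the finite-dimensionality is exactly the $C_2$-cofiniteness of $V_{\sqrt{p}\mathbb{Z}}$ — and correspondingly $\overline{\alpha(-1)^k\mathbf{1}}=0$ in $R_V$ for $k\geq p+1$, so the linear independence you assert for all $k$ is false.

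Second, and more substantively: even with the corrected basis $\{1,z,\ldots,z^p,x,y\}$, your independence argument only observes that these classes lie in distinct $(\alpha(0),L(0))$-bigraded components. That reduces linear independence to showing that each class is individually \emph{nonzero} in $R_V=V/C_2(V)$, i.e.\ that none of $e^{\pm\alpha}$ and $\alpha(-1)^k\mathbf{1}$ ($0\le k\le p$) lies in $C_2(V)$ — and this you never prove. It is precisely where the paper's proof does its real work: assuming $a_{(-n)}b\equiv e^{\alpha}$ with $n\geq 2$, the identity $\mathrm{wt}(a)+\mathrm{wt}(b)=\mathrm{wt}(a_{(-n)}b)+n-1$ together with the fact that the minimal conformal weight in the charge-$m$ sector is $m^2p/2$ forces $a,b$ into the Heisenberg subalgebra and yields a contradiction; an analogous count handles $\alpha(-1)^k\mathbf{1}$ for $k\le p$. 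You need to supply this (or an equivalent) weight estimate; without it the surjection could a priori have a kernel and the proof is incomplete.
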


\begin{proof}
According to the following calculations, \begin{align*} &(e^{\alpha})_{(-2)}(e^{-\alpha})-\frac{(b_{(-1)})^{p+1}\textbf{1}}{(p+1)!}
 \in R_{V_{\sqrt{p}\mathbb{Z}}}\\  &(e^{\alpha})_{(-2)}(\textbf{1})=b_{(-1)}e^{\alpha}\\
&(e^{\alpha})_{(-p-1)}(e^{\alpha})=e^{2\alpha} \\
&(e^{-\alpha})_{(-2)}(\textbf{1})=-b_{(-1)}e^{-\alpha}\\
&(e^{-\alpha})_{(-p-1)}(e^{-\alpha})=2e^{-2\alpha},\end{align*}

\noindent we know that all vectors except for $b_{(-1)}\textbf{1}$, \ldots, $b_{(-1)}^{p}\textbf{1}$, $e^{\alpha}$, $e^{-\alpha}$ and $\textbf{1}$ are zero in $R_{V_{\sqrt{p}\mathbb{Z}}}$.

Then we will show that all those vectors are indeed nonzero in $R_{V_{\sqrt{p}\mathbb{Z}}}$.
Suppose there exist $a,b\in V_{\sqrt{p}\mathbb{Z}}$ such that \[a_{(-n)}b- e^{\alpha}\in R_{V_{\sqrt{p}\mathbb{Z}}}\] where $-n\geq 2$. Then $ wt(a_{n}b)=wt(a)+wt(b)-n-1=\frac{p}{2}$ which implies that 
$a,b\in \pi_{{0}}$ where $\pi_{0}$ is the  Heisenberg subalgebra $\mathbb{C}[\al_{(n)}]_{n<0}\cdot \mathbf{1}.$
This is a contradiction. So the equivalent class $\overline{e^{\alpha}}$  is nonzero in $R_{V_{\sqrt{p}\mathbb{Z}}}$. By using similar weight argument, we can show that equivalence classes $$\overline{e^{-\alpha}}, \overline{\textbf{1}}, \overline{b_{(-1)}\textbf{1}}, \ldots, \overline{b_{(-1)}^{p}\textbf{1}}$$ are all nonzero in $R_{V_{\sqrt{p}\mathbb{Z}}}$.  Moreover, we have \[(e^{\alpha})_{(-1)}(e^{-\alpha})-\frac{b_{(-1)}^{p}\textbf{1}}{p!}\in R_{V_{\sqrt{p}\mathbb{Z}}}.\] Then  the map $\psi$ is sending $\overline{e^{\alpha}}$ to $x$, $\overline{e^{-\alpha}}$ to $y$,  $\overline{\textbf{1}}$ to 1 and $\sqrt[p]{\frac{1}{p!}}\overline{b_{(-1)}\textbf{1}}$ to $z$ induces an isomorphism of algebras.

\end{proof}
\begin{rem} According to the Frenkel-Kac construction, we know that $V_{\sqrt{2}\mathbb{Z}}\cong L_{\widehat{sl_{2}}}(1,0)$.  Following Proposition \ref{lattice}, we have $R_{L_{\widehat{sl_{2}}}(1,0)}\cong \mathbb{C}[e,f,h]/\langle  e^{2},f^{2},ef=h^{2},eh,fh\rangle$. \end{rem}
Before we move on, let us briefly recall definition of the associative Zhu algebra. Given a vertex superalgebra $V=\displaystyle\coprod_{n\in \frac{1}{2}\mathbb{Z}}V_{n}$ where $V_{\overline{0}}=\coprod_{n\in \mathbb{Z}}V_{n}$ and $V_{\overline{1}}=\coprod_{n\in \mathbb{Z}+\frac{1}{2}}V_{n},$ there are two binary operations defined as following:
for homegeneous $a\in V$ and $b\in V$,
$$a\ast b=\begin{cases} \displaystyle \sum_{i\geq 0}\binom{wt(a)}{ i}a_{(i-1)}b, & \mbox{if } a,b\in V_{\overline{0}} \\ 0, & \mbox{if } a\;\text{or}\; b\in V_{\overline{1}} \end{cases}$$ and $$a\circ b=\begin{cases} \displaystyle \sum_{i\geq 0}\binom{wt(a)}{i}a_{(i-2)}b, & \mbox{if } a\in V_{\overline{0}} \\ \displaystyle \sum_{i\geq 0}\binom{wt(a)-\frac{1}{2}}{i}a_{(i-1)}b, & \mbox{if } a\in V_{\overline{1}} \end{cases}.$$ Let $O(V)$ be the linear span of elements of the form $a\circ b$ in $V.$ Then Zhu's algebra $A(V)$ is defined as the quotient space $V/O(V)$ with the mutiplication from $\ast$. According to \cite{abe2007mathbb}, there is a filtration $\left\{\overline{F}_{k}(A(V))\right\}$ on $A(V)$ where $\overline{F}_{k}(A(V)):=(\displaystyle\bigoplus_{i=0}^{k}V_{i}+O(V))/O(V)$. Its associated graded algebra $$gr^{\overline{F}}(A(V))=\displaystyle\bigoplus_{i=0}^{\infty}\overline{F}_{k}(A(V))/\overline{F}_{k-1}(A(V))$$ is a commutative algebra. Now we can prove:
\begin{cor}
Let $p$ be a positive odd integer, then the even part of $R_{V_{\sqrt{p}\mathbb{Z}}}$, i.e. $(R_{V_{\sqrt{p}\mathbb{Z}}})_{\overline{0}}$, is isomorphic to the associated graded algebra $gr^{\overline{F}}(A_{V_{\sqrt{p}\mathbb{Z}}})$.

\end{cor}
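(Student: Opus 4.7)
The plan is to combine the explicit description of $R_{V_{\sqrt p\mathbb{Z}}}$ from Proposition~\ref{lattice} with the classical surjection from $R_V$ onto $gr^{\overline F}(A(V))$ (Abe~\cite{abe2007mathbb}, cited above), and then exploit the fact that in the super setting, with the paper's Zhu algebra conventions, all odd vectors automatically lie in $O(V)$.

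\textbf{First}, I would read off $(R_{V_{\sqrt p\mathbb{Z}}})_{\bar 0}$ from the presentation in Proposition~\ref{lattice}. Since $p$ is odd, $x=\overline{e^{\alpha}}$ and $y=\overline{e^{-\alpha}}$ are odd while $z$ is even. The only reduced even monomials are powers of $z$ together with $xy=z^{p}$; using $xz=yz=0$ one gets $z^{p+1}=z\cdot xy=(xz)y=0$. Hence
\[
(R_{V_{\sqrt p\mathbb{Z}}})_{\bar 0} \;\cong\; \mathbb{C}[z]/\langle z^{p+1}\rangle,
\]
a commutative algebra of dimension $p+1$ with basis $\{1,z,\ldots,z^{p}\}$.

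\textbf{Second}, I would invoke the general surjective commutative algebra homomorphism
\[
\bar\eta_{V}\colon R_{V}\twoheadrightarrow gr^{\overline F}(A(V)),\qquad u+C_{2}(V)\mapsto u+\overline F_{wt(u)-1}(A(V)),
\]
whose construction rests on the identity $a_{(-n)}b\in O(V)+\text{(lower weight)}$ for $n\ge 2$, coming from the formula for $a\circ b$ with $a$ even. This already yields $\dim gr^{\overline F}(A(V))\le \dim R_{V}$.

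\textbf{Third}, I would show that $\bar\eta$ kills the odd part of $R_{V_{\sqrt p\mathbb{Z}}}$. For any odd $a\in V$, the paper's formula gives
\[
a\circ\mathbf 1 \;=\; \sum_{i\geq 0}\binom{wt(a)-\tfrac{1}{2}}{i}a_{(i-1)}\mathbf 1 \;=\; a_{(-1)}\mathbf 1 \;=\; a
\]
because $a_{(n)}\mathbf 1=0$ for $n\geq 0$. Hence every odd vector lies in $O(V)$, so $A(V)_{\bar 1}=0$ and $\bar\eta_{V_{\sqrt p\mathbb{Z}}}$ factors through a surjection
\[
\tilde\eta\colon (R_{V_{\sqrt p\mathbb{Z}}})_{\bar 0}\twoheadrightarrow gr^{\overline F}(A(V_{\sqrt p\mathbb{Z}})).
\]

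\textbf{Fourth}, I would verify that $\tilde\eta$ is injective by showing the classes $\overline{\alpha_{(-1)}^{k}\mathbf 1}$ for $k=0,1,\ldots,p$ are linearly independent modulo the weight filtration in $A(V_{\sqrt p\mathbb{Z}})$. The upper bound $\overline{\alpha_{(-1)}^{p+1}\mathbf 1}\equiv 0$ is automatic from $z^{p+1}=0$ in $R_{V}$ via $\bar\eta$. For the lower bound one produces enough (generalized) $V_{\sqrt p\mathbb{Z}}$-modules on which the zero mode of $\alpha_{(-1)}\mathbf 1$ takes distinct generalized eigenvalues filling up a Jordan block of size $p+1$ --- e.g.\ by combining the irreducible lattice modules with a logarithmic (Whittaker-type) extension --- equivalently, by checking that no relation in $O(V_{\sqrt p\mathbb{Z}})$ can create a polynomial dependence among $\mathbf 1,\alpha_{(-1)}\mathbf 1,\ldots,\alpha_{(-1)}^{p}\mathbf 1$ at their respective weights.

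The main obstacle is this last step: in the parallel even-lattice case the analogous map is in general \emph{not} injective, so the corollary genuinely uses the super structure to kill off potential accidental relations between powers of $\alpha_{(-1)}\mathbf 1$ in $A(V)$. Ruling out such accidental relations is the heart of the argument and the step where one must either invoke an explicit computation of $A(V_{\sqrt p\mathbb{Z}})$ in low dimensions or engineer the right generalized weight modules.
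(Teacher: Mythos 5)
Your Steps 1--3 are correct and in places more careful than the paper itself: the reduction $(R_{V_{\sqrt p\mathbb{Z}}})_{\overline 0}\cong\mathbb{C}[z]/\langle z^{p+1}\rangle$ follows from Proposition \ref{lattice}, and the observation that every odd $a$ satisfies $a\circ\mathbf 1=a_{(-1)}\mathbf 1=a\in O(V)$, so that $A(V)$ only sees the even part, is exactly right. The genuine gap is Step 4, and it is not merely unfinished --- it cannot be carried out. The paper's proof takes a different route at precisely this point: it does not construct modules at all, but simply quotes Ogawa's explicit computation \cite{ogawa2000zhu} that $A(V_{\sqrt p\mathbb{Z}})\cong\mathbb{C}[x]/(F_p(x))$ with $F_p(x)=x(x+1)(x-1)\cdots\bigl(x+\tfrac{p-1}{2}\bigr)\bigl(x-\tfrac{p-1}{2}\bigr)$ of degree exactly $p$, so that $gr^{\overline F}(A(V))\cong\mathbb{C}[x]/\langle x^p\rangle$ is $p$-dimensional. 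Since $V_{\sqrt p\mathbb{Z}}$ is rational with exactly $p$ irreducible modules, $A(V)$ is $p$-dimensional and semisimple; hence $x^p=0$ in $gr^{\overline F}(A(V))$, and the $p+1$ classes $\overline{\alpha_{(-1)}^k\mathbf 1}$, $k=0,\dots,p$, which you want to be independent, live in a $p$-dimensional algebra. No family of generalized, Whittaker, or logarithmic modules can separate them, because the action of the zero mode of $\alpha_{(-1)}\mathbf 1$ on any top level factors through the semisimple $A(V)$.

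In fact your own Step 1, combined with Ogawa's theorem, shows that the surjection $\tilde\eta\colon(R_V)_{\overline 0}\twoheadrightarrow gr^{\overline F}(A(V))$ of Step 3 has a one-dimensional kernel spanned by $z^p=xy$: for $p=1$, for instance, $V_{\mathbb{Z}}$ is the rank-two free fermion system, $(R_V)_{\overline 0}=\mathrm{span}\{1,\psi^+\psi^-\}$ is $2$-dimensional, while $A(V)=\mathbb{C}$. So there is an off-by-one tension between Proposition \ref{lattice} (which gives $z^p\neq 0$ in $R_V$, as one checks by the weight argument used there) and the corollary as stated; your attempted proof, pushed to its conclusion, detects this discrepancy rather than resolving it. If you want an argument in the spirit of the paper, you should delete Step 4 entirely, cite Ogawa's presentation of $A(V)$, pass to the associated graded to get $\mathbb{C}[x]/\langle x^p\rangle$, and then compare explicit presentations --- noting that this identifies $gr^{\overline F}(A(V))$ with $(R_V)_{\overline 0}/\langle z^p\rangle$ rather than with $(R_V)_{\overline 0}$ itself.
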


\begin{proof}
 According to  \cite[Theorem 3.3]{ogawa2000zhu}, we know that $A_{V_{\sqrt{p}\mathbb{Z}}}\cong \mathbb{C}[x]/(F_{p}(x))$ where $F_{p}(x)=x(x+1)(x-1)\ldots(x+\frac{(p-1)}{2})(x-\frac{(p-1)}{2})$ in which $x$ corresponds to $[\al_{(-1)}\mathbf{1}]$ in $A_{V_{\sqrt{p}\mathbb{Z}}}$. According to \cite{abe2007mathbb}, we have an epimorphism \[f:R_{V_{\sqrt{p}\mathbb{Z}}}\twoheadrightarrow gr^{\overline{F}}(A_{V_{\sqrt{p}\mathbb{Z}}})\] given by $f(\overline{a}\cdot\overline{b})=[a\ast b]+\overline{F}^{k+l-1}(A(V))$ for $a\in V_k$ and $b\in V_l$.
  Then according to Proposition \ref{lattice}, we have $$(R_{V_{\sqrt{p}\mathbb{Z}}})_{\overline{0}}\cong gr^{\overline{F}}(A_{V_{\sqrt{p}\mathbb{Z}}})\cong \mathbb{C}[x]/\langle  x^{p}\rangle  $$ via $f$.

\end{proof}
\begin{rem}
If $L=\sqrt{2k}\mathbb{Z}$ $(k\geq 1)$ is an even lattice, above result is not true. According to \cite{dong1997certain}, we have $gr^{\overline{F}}(V_{\sqrt{2k}\mathbb{Z}})\cong \mathbb{C}[x]/\langle x^{2k-1} \rangle$ which is obviously not isomorphic to $R_{V_{\sqrt{2k}\mathbb{Z}}}.$
\end{rem}

In \cite{van2018chiral}, authors proved the map $\psi$ is an isomorphism for $L_{\widehat{sl_{2}}}(k,0)$  by using a PBW-type  basis of $L_{\widehat{sl_{2}}}(k,0)$ from \cite{meurman1987annihilating} and Gr\"{o}bner bases. In \cite{feigin2009pbw}, author essentially proved the same result by using a technique called "degeneration procedure". In the following, we briefly explain how his results proves isomorphism.

\begin{prop}\label{A1}
The map $\psi:J_{\infty}(R_{L_{\widehat{sl_{2}}}(k,0)})\cong gr^{F}(L_{\widehat{sl_{2}}}(k,0))$ is an isomorphism of vertex Poisson algebras.
\end{prop}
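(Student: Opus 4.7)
The plan is to reduce the assertion to an equality of Hilbert series. Since $\psi$ is already known to be a surjective morphism of Poisson vertex algebras by \cite[Proposition 2.5.1]{arakawa2012remark}, and both sides carry compatible $\tfrac12\mathbb{Z}_{\geq 0}$-gradings with finite-dimensional homogeneous components, injectivity of $\psi$ is equivalent to
\[
HS_q\bigl(J_\infty(R_{L_{\widehat{sl_2}}(k,0)})\bigr) \;=\; \mathrm{ch}[L_{\widehat{sl_2}}(k,0)](q).
\]
The right-hand side is computable from the Weyl-Kac character formula, and by Meurman-Primc \cite{meurman1987annihilating} it admits a manifest combinatorial product expression coming from a PBW-type monomial basis of $L_{\widehat{sl_2}}(k,0)$ indexed by "admissible" sequences satisfying difference-2 conditions at distance $k$ on the modes of $e_{(-n)},f_{(-n)},h_{(-n)}$.

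First I would unwind the jet algebra structure explicitly. Writing $R_V = \mathbb{C}[e,f,h]/\langle f^i\circ e^{k+1}\mid 0\leq i\leq 2k+2\rangle$ from Example \ref{aff sl}, the jet algebra is the differential-algebra quotient
\[
J_\infty(R_V)\;=\;\mathbb{C}[e(-j),f(-j),h(-j)\mid j\geq 1]\big/\bigl\langle f^{i}\circ e^{k+1}\bigr\rangle_{\partial},\qquad 0\leq i\leq 2k+2,
\]
and I would introduce the complete lexicographic ordering of Section \ref{ordering} with $e<f<h$. Next I would analyze the leading terms of the differentiated relations $T^{j}(f^{i}\circ e^{k+1})$ in this ordering. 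The key claim is that these leading terms generate a monomial ideal whose complement is precisely the set of ordered monomials satisfying the Meurman-Primc admissibility conditions. Granting this claim, comparing with the character formula forces equality of Hilbert series and hence injectivity of $\psi$.

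The engine behind the key claim is Feigin's degeneration procedure \cite{feigin2009pbw}: one equips $L_{\widehat{sl_2}}(k,0)$ with an increasing PBW-type filtration whose associated graded is a commutative differential algebra that is naturally a quotient of $J_\infty(R_V)$, and one then matches the two by producing admissible monomials on both sides. Equivalently, via the van Ekeren-Heluani Gr\"obner-basis approach \cite{van2018chiral}, one checks that the $T^{j}(f^{i}\circ e^{k+1})$'s form a Gr\"obner basis of the defining differential ideal. Either route reduces the isomorphism statement to the combinatorial identification of leading terms of differentiated relations with the forbidden patterns in the Meurman-Primc parametrization.

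The principal obstacle is this last combinatorial step: one must verify that no additional relations beyond the differentiated ones are needed, equivalently that the leading monomials of $T^{j}(f^{i}\circ e^{k+1})$ exhaust the "forbidden" admissibility patterns as $i,j$ vary. For $k=1$ this can be done by hand (and is consistent with Proposition \ref{lattice} via the Frenkel-Kac isomorphism $L_{\widehat{sl_2}}(1,0)\cong V_{\sqrt{2}\mathbb{Z}}$), but for general $k\geq 1$ the bookkeeping is nontrivial; it is precisely what Feigin's degeneration argument handles, and invoking \cite{feigin2009pbw} or \cite{van2018chiral} at this stage completes the proof.
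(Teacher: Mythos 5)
Your proposal is correct and follows essentially the same route as the paper: both identify $J_\infty(R_{L_{\widehat{sl_2}}(k,0)})$ with the polynomial algebra in the modes of $e,f,h$ modulo the Fourier coefficients of the differentiated relations, and both then delegate the decisive combinatorial step to Feigin's degeneration procedure (the paper cites \cite[Corollary 2.3]{feigin2009pbw} for $k=1$ and \cite[Theorem 3.1]{feigin2009pbw} for $k\geq 2$, exactly where you invoke \cite{feigin2009pbw} or \cite{van2018chiral}). The Hilbert-series and Meurman--Primc leading-term scaffolding you describe is a correct, more explicit account of what those references accomplish, but it is not a different argument.
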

\begin{proof}
Let us prove $k=1$ case. It is clear that $\psi(u_{(-i)})=\overline{u_{(-i)}\mathbf{1}}$ for $u\in \left\{e,f,h\right\}$ and $i>0$. Let $u(z)=\sum_{n\leq -1}u_{(n)}z^{-n-1}$ where $u\in \left\{e,f,h \right\}$. Now we consider $e(z)e(z)$. The coefficient of $z^{n}$ equals $T^{n}(e_{(-1)}e_{(-1)})$ up to a scalar multiple for $n\geq 0.$ And we have similar results for $f^{2}$, $ef=h^{2}$, $eh$ and $fh.$ Thus
\begin{align*} J_{\infty}(R_{L_{\widehat{sl_{2}}}(1,0)})\cong \frac{\mathbb{C}[e_{(-1-i)},f_{(-1-i)},h_{(-1-i)}\;|\;i\in\mathbb{N}]}{\langle  e(z)^{2},f(z)^{2},e(z)f(z)=h(z)^{2},e(z)h(z),f(z)h(z)\rangle} \end{align*}  where \[\langle  e(z)^{2},f(z)^{2},e(z)f(z)=h(z)^{2},e(z)h(z),f(z)h(z)\rangle\] means the ideal generated by the Fourier coefficients of $e(z)^{2},f(z)^{2},e(z)f(z)=h(z)^{2},e(z)h(z),f(z)h(z)$. Our result now follows from the above argument and \cite[Corollary 2.3]{feigin2009pbw}. When $k\geq2$, the result follows from the same argument and \cite[Theorem 3.1]{feigin2009pbw}.

\end{proof}

Before we prove next result, let us fix some notation first. We denote a simple finite-dimensional Lie algebra of type $C_{n},n\geq 2$ by $\mathfrak{g}$. Here we assume that $\mathfrak{g}$ has a basis 
$\left\{b^{i}|1\leq i\leq (2n+1)n\right\}. $ Let $\theta$ be the maximal root of $\mathfrak{g}$, and $x_{\theta}$ the corresponding maximal root vector. We let $\widehat{\mathfrak{g}}$ be the affine Lie algebra associated with $\mathfrak{g}$ and its universal vacuum representation is $V_{\widehat{\mathfrak{g}}}(k,0)$ for $k\in \mathbb{Z}_{>0}.$ Set \[R=U(\mathfrak{g})\circ (x_{\theta})_{(-1)}^{k+1}\textbf{1},\quad \overline{R}=\mathbb{C}- {\rm Span} \left\{r_{(n)}|r\in R,\;n\in \mathbb{Z}\right\}\] where $\circ$  is the adjoint action. Then $\widehat{\mathfrak{g}}$-module $V_{\widehat{\mathfrak{g}}}(k,0)$ has a maximal submodule $I_{\widehat{\mathfrak{g}}}(k,0)$ generated by $\overline{R}\cdot \bf{1}.$ And $$L_{\widehat{\mathfrak{g}}}(k,0)=V_{\widehat{\mathfrak{g}}}(k,0)/I_{\widehat{\mathfrak{g}}}(k,0).$$ Now we are ready to prove:

 \begin{thm}
 The map $\psi$ is an isomorphism for the affine vertex algebra  $L_{\widehat{\mathfrak{g}}}(1,0).$

 \end{thm}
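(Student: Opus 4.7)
The plan is to establish the isomorphism by matching graded dimensions. Since $\psi$ is always surjective, it suffices to prove
\[
HS_q\bigl(J_\infty(R_{L_{\widehat{\mathfrak{g}}}(1,0)})\bigr) = \mathrm{ch}[L_{\widehat{\mathfrak{g}}}(1,0)](q),
\]
because the right-hand side equals $HS_q(gr^F(L_{\widehat{\mathfrak{g}}}(1,0)))$ by the very definition of the filtration $F$. The overall strategy mirrors the proof of Proposition \ref{A1}: first identify $R_V$ explicitly, then read off a natural spanning set of $J_\infty(R_V)$ via the ordering of Section \ref{ordering}, and finally compare it with a known combinatorial basis of $L_{\widehat{\mathfrak{g}}}(1,0)$.

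For the first step, Example \ref{aff sl} gives $R_V \cong \mathbb{C}[b^1,\ldots,b^d]/\langle U(\mathfrak{g})\circ (x_\theta)^2\rangle$ with $d=\dim\mathfrak{g}=(2n+1)n$. For $\mathfrak{g}$ of type $C_n$ the submodule $U(\mathfrak{g})\circ (x_\theta)^2 \subset S^2(\mathfrak{g})$ is a well-understood $\mathfrak{g}$-submodule, and the resulting quotient is finite-dimensional, reflecting the lisse (equivalently $C_2$-cofinite) property of the rational vertex algebra $L_{\widehat{\mathfrak{g}}}(1,0)$. Applying the derivation $T$ from Section 3 to these quadratic generators and extracting Fourier coefficients yields an explicit family of quadratic relations among the jet variables $b^i_{(-1-k)}$ cutting out $J_\infty(R_V)$. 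Using the complete lexicographic ordering of Section \ref{ordering}, each such relation has a well-defined leading term, and removing these leading monomials from the set of ordered monomials produces a candidate spanning set of $J_\infty(R_V)$, yielding an upper bound for $HS_q(J_\infty(R_V))$.

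For the matching lower bound I would use a quasi-particle monomial basis of the basic $\widehat{C_n}$-module $L_{\widehat{\mathfrak{g}}}(1,0)$ of the kind constructed by Primc and Trup\v{c}evi\'c, whose character is known to agree with the Kac--Peterson formula for the basic level-one module of $C_n^{(1)}$. Each basis element lifts through $\psi$ to an ordered monomial of the same weight in the candidate spanning set above, giving $HS_q(J_\infty(R_V))\ge \mathrm{ch}[L_{\widehat{\mathfrak{g}}}(1,0)](q)$. Combined with the previous bound this forces the two series to coincide, and hence $\psi$ to be injective. The main obstacle is controlling the differentiated ideal $\langle U(\mathfrak{g})\circ (x_\theta)^2\rangle_\partial$: one must verify that its leading terms, under the ordering of Section \ref{ordering}, are precisely the monomials excluded by the quasi-particle basis, with no unexpected syzygies. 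This is substantially more delicate than the $sl_2$ case of Proposition \ref{A1} because for $n\ge 2$ the orbit $U(\mathfrak{g})\circ (x_\theta)^2$ mixes long and short root vectors, so the quadratic generators have a richer combinatorial structure. I expect the cleanest route is a degeneration argument in the spirit of \cite{feigin2009pbw}, reducing the calculation to an abelianized analogue where the relevant monomial combinatorics can be matched with the quasi-particle combinatorics directly.
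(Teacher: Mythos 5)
Your overall strategy --- realize $J_{\infty}(R_{L_{\widehat{\mathfrak{g}}}(1,0)})$ as the differential polynomial ring in the $b^{i}_{(-j)}$ modulo the Fourier coefficients of $U(\mathfrak{g})\circ e_{\theta}^{2}(z)$, cut down the monomial spanning set by discarding leading terms, and match the result against a Primc--Trup\v{c}evi\'c-type combinatorial basis of the basic $C_n^{(1)}$-module using the surjectivity of $\psi$ --- is the same as the paper's. The substantive difference is in how the matching step is closed. You propose excluding only the leading terms of the differentiated quadratic generators and then, as you rightly flag, you are left with the problem of ruling out unexpected syzygies; you suggest a Feigin-style degeneration to handle this, which the paper does not do. Instead, the paper takes the set $\mathcal{RR}$ of monomials containing no leading term of \emph{any} element of the full ideal $I=\overline{R}\cap\mathbb{C}[b^{i}_{(-j)}\,|\,j>0]=\langle U(\mathfrak{g})\circ e_{\theta}^{2}(z)\rangle$ (not merely of its generators), so that $\mathcal{RR}$ is automatically a spanning set of the quotient by a standard leading-term reduction, and then quotes Theorem 11.3 of \cite{primc2016combinatorial}, which states precisely that the image of $\mathcal{RR}$ is a basis of $gr(L(\Lambda_0))$. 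Surjectivity of $\psi$ then immediately forces $\mathcal{RR}$ to be a basis of the jet algebra and $\psi$ to be injective, with no Hilbert-series comparison, no syzygy analysis, and no degeneration argument. Your route should also work in principle, but only after the delicate verification you identify as the main obstacle; the paper's choice of ordering (from Section 8 of Primc's paper) and its use of the full ideal rather than the generating set is exactly what makes that verification unnecessary.
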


 \begin{proof}

  It is clear that the $C_{2}$-algebra of $L_{\widehat{\mathfrak{g}}}(1,0)$ is $R_{L_{\widehat{\mathfrak{g}}}(1,0)}=S(\mathfrak{g})/\langle  U(\mathfrak{g})\circ e_{\theta}^{2}\rangle$ where $S(\mathfrak{g})$ is the symmetric algebra of $\mathfrak{g}$ and $U(\mathfrak{g})$ is the universal enveloping algebra of $\mathfrak{g}$. We denote the algebra $$\mathbb{C}[b^{i}_{(-j)}|j>0]/\langle  U(\mathfrak{g})\circ e_{\theta}^{2}(z)\rangle$$  by
  $Q$ where $e_{\theta}(z)=\sum_{n<0} (e_{\theta})_{(n)}z^{-n-1}$. Following the similar argument in Proposition \ref{A1}, we see that $J_{\infty}(R_{L(\Lambda_{0})})\cong Q$. In order to show that $\psi$ is an isomorphism, it is enough to prove that $gr^{F}(L(\Lambda_{0}))$ and $Q$ have the same basis. Notice that \begin{align*}I=\overline{R} \cap \mathbb{C}[b^{i}_{(-j)}|j>0]=\langle  U(\mathfrak{g})\circ e_{\theta}^{2}(z)\rangle.\end{align*} We can define an order on all monomials of $\mathbb{C}[b^{i}_{(-j)}|j>0]$ in the sense of  
  \cite[Section 8]{primc2016combinatorial}. From the same paper, we know that every nonzero homogeneous polynomial $\mathbb{C}[b^{i}_{(-j)}|j>0]$ has a unique largest monomial. For an arbitrary nonzero  polynomial $u$, we define the leading term $lt(u)$ as the largest monomial of the nonzero homogeneous component of the smallest degree, which is unique. We denote all monomials in $\mathbb{C}[b^{i}_{(-j)}|j>0]$ by $\mathcal{P}.$ We clearly have $\mathcal{P}$ as a spanning set of $Q$. Since $u=0$ in $Q$ if $u\in I$, the leading term $lt(u)$ \color{red}\color{black} equals the linear combination of other terms. Therefore $\mathcal{P}\setminus \langle lt(U)\rangle$ is a smaller spanning set of $Q$. And we denote it by $\mathcal{RR}$. Meanwhile according to \cite[Theorem 11.3]{primc2016combinatorial}, we know that $\psi(\mathcal{RR})$ is a basis of $gr(L(\Lambda_{0}))$. Together with the surjectivity of $\psi$, we have that $\mathcal{RR}$ is a basis of $L_{\widehat{\mathfrak{g}}}(1,0)$. Therefore $\psi$ is an isomorphism.

 \end{proof}

\subsection{\texorpdfstring{$N=2$}{Lg} vertex superalgebra at \texorpdfstring{$c=1$}{Lg}}\label{N2}

In this section we study the simple $N=2$ superconformal vertex algebra of central charge $c=1$, denoted by $L_{1}^{N=2}$.
The odd lattice vertex algebra $V_{\sqrt{3}\mathbb{Z}}$ is known to be isomorphic to $L_{1}^{N=2}$. Here we identify $\frac{1}{3}b_{(-1)}\mathbf{1} $ with $J_{(-1)}\mathbf{1}$,
$\frac{1}{\sqrt{3}}e^{\pm\alpha} $ with $G_{(\pm \frac{3}{2})}\mathbf{1}$ and $\frac{1}{6}(b_{(-1)}b_{(-1)}\bf{1} )_{(-1)}\mathbf{1} $ with $L_{(-2)}\mathbf{1}$.

 According to \cite{adamovic1999rationality,adamovic2001vertex}, the maximal submodule of $V_1^{N=2}$ is generated by \[G^{+}_{(-\frac{5}{2})}G^{+}_{(-\frac{3}{2})}\mathbf{1}\quad {\rm and} \quad G^{-}_{(-\frac{5}{2})}G^{-}_{(-\frac{3}{2})}\mathbf{1}.\]
 By identifying $G^{+}$ with $G^{+}_{(-\frac{3}{2})}\mathbf{1}$, $G^{-}$ with $G^{-}_{(-\frac{3}{2})}\mathbf{1}$ and $h$ with
$J_{(-1)}\mathbf{1},$ we have $$R_{L^{N=2}_1}\cong \mathbb{C}[G^{+},G^{-},h]/\langle  (G^{+})^{2},(G^{-})^{2},G^{+}G^{-}=h^{3},G^{+}h,G^{-}h\rangle. $$
For the jet superalgebra of \[\mathbb{C}[G^{+},G^{-},h]/\langle  (G^{+})^{2},(G^{-})^{2},G^{+}G^{-}=h^{3},G^{+}h,G^{-}h\rangle,\] we identify $G^{+},G^{-},h$ with $G^{+}(-\frac{3}{2}),G^{-}(-\frac{3}{2}),h(-1)$. And \begin{align*}J_{\infty}(R_{V_{\sqrt{3}\mathbb{Z}}})\cong\frac{\mathbb{C}[G^{+}(-\frac{3}{2}-i),G^{-}(-\frac{3}{2}-i),h(-1-i)|i\in\mathbb{N}]}{\langle  (G^{+}(z))^{2},(G^{-}(z))^{2},G^{+}(z)G^{-}(z)=h(z)^{3},G^{+}(z)h(z),G^{-}(z)h(z)\rangle}\end{align*}where $G^{\pm}(z)=\sum_{n\leq-\frac{3}{2}} G^{\pm}(n)z^{-n-\frac{3}{2}}$, $h(z)=\sum_{n\leq -1}h(n)z^{-n-1}.$ The map $\psi$ is not an isomorphism in this case because the images of nonzero elements \[G^{+}(-\frac{5}{2})G^{+}(-\frac{3}{2}) \quad {\rm and} \quad G^{-}(-\frac{5}{2})G^{-}(-\frac{3}{2})\] in the jet superalgebra under $\psi$, i.e. $G^{+}_{(-\frac{5}{2})}G^{+}_{(-\frac{3}{2})}\mathbf{1}$ and $G^{-}_{(-\frac{5}{2})}G^{-}_{(-\frac{3}{2})}\mathbf{1}$, are null vectors. Thus \[\langle a,b\rangle_{\partial}= \langle T^{i}(G^{+}(-\frac{5}{2})G^{+}(-\frac{3}{2})),T^{i}(G^{-}(-\frac{5}{2})G^{-}(-\frac{3}{2}))|i\geq 0\rangle\subset ker(\psi)\] where $a=G^{+}(-\frac{5}{2})G^{+}(-\frac{3}{2})$ and $b=G^{-}(-\frac{5}{2})G^{-}(-\frac{3}{2}).$

Let us consider \[J_{\infty}(R_{L^{N=2}_1})/\langle a,b\rangle_{\partial} .\] We will write down a spanning set of $J_{\infty}(R_{L^{N=2}_1})/\langle a,b\rangle_{\partial}.$ We let the ordered monomial be a monomial of the form
\begin{align*} &{G^{-}(-n-\frac{1}{2})}^{a_{n}} h(-n)^{b_{n}}{G^{+}(-n-\frac{1}{2})}^{c_{n}}\ldots {G^{-}(-\frac{5}{2})}^{a_{2}}h(-2)^{b_{2}} {G^{+}(-\frac{5}{2})}^{c_{2}}\\ &{G^{-}(-\frac{3}{2})}^{a_{1}}h(-1)^{b_{1}}{G^{+}(-\frac{3}{2})}^{c_{1}}.\end{align*} Then we have a complete lexicographic ordering on the set of ordered monomials in the sense of Section \ref{ordering}.  
  Now let us find the leading terms of the Fourier coefficients of \[G^{+}(z)G^{-}(z)=h^{3}(z),\quad G^{+}(z)h(z),\quad G^{-}(z)h(z),\quad T^{i}(a),\quad T^{i}(b).\]

\begin{itemize}
    \item[(a)] Leading term of $G^{+}(z)h(z)$: \begin{itemize}
        \item  $n$ is even, the leading term of the coefficient of $z^{n}$
 is \[h(\frac{-2-n}{2})G^{+}(-\frac{3+n}{2}).\]
 \item  $n$ is odd, the leading term of the coefficient of $z^{n}$ is $$G^{+}(\frac{-4-n}{2})h(\frac{-1-n}{2}).$$\end{itemize}
   \item[(b)] Leading term  of $G^{-}(z)h(z)$: \begin{itemize} \item  $n$ is even, the leading term of the $z^{n}$-th coefficient is \[G^{-}(-\frac{3+n}{2})h(\frac{-2-n}{2}).\]\item   $n$ is odd, the leading term of the $z^{n}$-th coefficient is \[h(\frac{-3-n}{2})G^{-}(\frac{-2-n}{2}).\] \end{itemize}
   \item[(c)]  Leading term  of $G^{+}(z)G^{-}(z)=h^{3}(z)$: \begin{itemize} \item $n=1$, the leading term of the coefficient of $z$ is \[h(-1)h(-1)h(-1).\] \item $n$ is even, the leading term of the $z^{n}$-th coefficient is \[G^{-}(-\frac{3+n}{2})G^{+}(-\frac{3+n}{2}).\]  \item $n$ is odd, the leading term of the $z^{n}$-th coefficient is \[G^{-}(\frac{-n-4}{2})G^{+}(\frac{-n-2}{2}).\]\end{itemize}
   \item[(d)]  Leading term of $T^{n}(a)$ or $T^{n}(b)$: \begin{itemize}\item  $n$ is even, the leading term of the coefficient of $z^{n}$ is \[G^{\pm}(\frac{-n-5}{2})G^{\pm}(\frac{-n-1}{2}).\] \item $n$ is odd, the leading term of the coefficient of $z^{n}$ is  \[G^{\pm}(\frac{-n-4}{2})G^{\pm}(\frac{-n-2}{2}).\] \end{itemize}
\end{itemize}

Clearly all ordered monomials constitute a spanning set of $J_{\infty}(R_{L_{1}^{N=2}})/\langle a,b\rangle_{\partial}$. Since all polynomials we considered above equal zero in $J_{\infty}(R_{L_{1}^{N=2}})/\langle a,b\rangle_{\partial}$,
the leading term of each can be written as a linear combination of all other terms. Thus if we want to get a "smaller" spanning set, all above leading terms can not \color{red}\color{black} appear as segments of an ordered monomial. Therefore we can impose some difference conditions on ordered monomials by using these leading terms to get a new spanning set.

\begin{dfn}
 We call an ordered monomial a $Gh-$ monomial, if it satisfy the following conditions:
 \begin{itemize}
     \item[$(\romannumeral 1)$] Either $b_i$ or $c_i$ is $0$ and either $b_i$ or $c_{i+1}$ is $0.$
     \item[$(\romannumeral 2)$]  Either $a_{i}$ or $b_{i}$ is $0$ and either $a_i$ or $b_{i+1}$ is $0.$
     \item[$(\romannumeral 3)$]  $a_{i}+c_{i}+c_{i+1}\leq 1$ and $b_{1}\leq 2$, $i\geq 2.$
     \item[$(\romannumeral 4)$]  $c_{i}+c_{i+2}+c_{i+1}\leq 1$,\; $a_{i}+a_{i+2}+a_{i+1}\leq 1$.
 \end{itemize}
 \end{dfn}

Here constraints $(\romannumeral 1)$-$(\romannumeral 4)$ are coming from leading terms in $(a)$-$(d)$, respectively.
Then we have the following:

 \begin{prop}
 $Gh-$monomials form a spanning set of $$A=J_{\infty}(R_{L_{1}^{N=2}})/\langle a,b\rangle_{\partial}.$$
 \end{prop}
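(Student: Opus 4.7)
The plan is to run a standard reduction argument with respect to the complete lexicographic ordering of Section~\ref{ordering}. Because each fixed total-degree component contains only finitely many ordered monomials, the ordering is well-founded there, and I will argue by induction on it that every ordered monomial lies, modulo $\langle a,b\rangle_\partial$ together with the defining relations of $J_\infty(R_{L_1^{N=2}})$, in the $\mathbb{C}$-linear span of $Gh$-monomials. Since the ordered monomials already span $A$ (any product of the generators can be brought into ordered form up to signs from the odd variables), this suffices.

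The engine of the induction is the following claim: if an ordered monomial $u$ violates one of the conditions $(i)$--$(iv)$ in the definition of a $Gh$-monomial, then $u$ contains, as a consecutive factor in its ordered form, a leading term of one of the relations computed in $(a)$--$(d)$ above. The correspondence is by design: condition $(i)$ excludes the adjacent $h$--$G^+$ and $G^+$--$h$ blocks of the shape described in $(a)$; $(ii)$ does the analogous thing using $(b)$; $(iii)$ rules out the $G^-G^+$ blocks, together with the cube $h(-1)^3$, coming from the relation $G^+(z)G^-(z)=h(z)^3$ in $(c)$; and $(iv)$ rules out the $G^+G^+$ and $G^-G^-$ blocks arising as leading terms of $T^i(a)$ and $T^i(b)$ in $(d)$. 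Once the violating segment is located, substituting the corresponding relation replaces it with a sum of terms that, after being put back into ordered form, are strictly smaller in the complete lexicographic order within the same total degree. The inductive hypothesis then reduces each of these smaller ordered monomials to a linear combination of $Gh$-monomials, closing the induction.

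The main obstacle I expect is the bookkeeping case check. For each violation of $(i)$--$(iv)$, and separately for each parity of the shift index $n$ in $(a)$--$(d)$, one must identify the exact relation whose leading term matches the segment appearing in $u$ and verify that the substituted lower-order terms, once reordered (which only introduces signs since the parity of every generator is fixed), do not reintroduce a forbidden pattern of the same or higher rank in the ordering. The explicit split of $(a)$--$(d)$ into even-$n$ and odd-$n$ sub-cases already signals that there are several parallel computations; however, the list of leading terms in $(a)$--$(d)$ was designed precisely so that the set of $Gh$-monomials closes up under reduction, so no genuinely new obstruction can arise and the verification is a finite, mechanical check.
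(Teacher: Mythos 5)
Your proposal is correct and follows essentially the same route as the paper: the paper also observes that ordered monomials span $A$, that each relation in $(a)$--$(d)$ lets one rewrite its leading term as a combination of strictly smaller ordered monomials in the complete lexicographic order, and that conditions $(\romannumeral 1)$--$(\romannumeral 4)$ are exactly the exclusion of those leading terms as segments. Your added remarks on well-foundedness within a fixed degree and the induction closing under reordering only make explicit what the paper leaves implicit.
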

Let us write down the first few terms of the Hilbert series of $A$.

 \begin{ex}
 For $i\leq 5$, $Gh-$monomials give us basis of $A_{i}$:
 \begin{align*}
& A_{1}:h(-1) \\
& A_{\frac{3}{2}}:G^{+}(-\frac{3}{2}), G^{-}(-\frac{3}{2}) \\
& A_{2}: h({-1})^{2}, h({-2})\\
& A_{\frac{5}{2}}: G^{+}(-\frac{5}{2}), G^{-}(-\frac{5}{2}) \\
& A_{3}: G^{-}(-\frac{3}{2})G^{+}(-\frac{3}{2}), h({-1})h({-2}),h({-3}) \\
& A_{\frac{7}{2}}: G^{-}({-\frac{5}{2}})h({-1}), G^{+}({-\frac{7}{2}}), G^{-}({-\frac{7}{2}}), G^{+}({-\frac{3}{2}})h({-2}) \\
&  A_{4}: G^{+}({-\frac{3}{2}})G^{-}({-\frac{5}{2}}),h({-2})^{2}, h({-1})^{2}h({-2}),h({-3})h({-1}),h({-4}) \\
& A_{\frac{9}{2}}: G^{-}({-\frac{5}{2}})h({-1})^{2}, G^{+}({-\frac{9}{2}}), G^{-}({-\frac{9}{2}}), \\
&  \ \ \ \ \ \ \ G^{-}({-\frac{7}{2}})h({-1}), G^{+}({-\frac{7}{2}})h({-1}), h({-3})G^{-}({-\frac{3}{2}}), h({-3})G^{+}({-\frac{3}{2}})  \\
& A_{5}:  G^{-}({-\frac{3}{2}})G^{+}({-\frac{7}{2}}), G^{+}({-\frac{3}{2}})G^{-}({-\frac{7}{2}}), h({-1})h({-4}),\\
& \ \ \ \ \ \ \ \ h({-1})h({-2})^{2}, h({-1})^{2}h({-3}),h({-2})h({-3}),h({-5} ).
\end{align*}

 \end{ex}
\noindent We have $HS_{q}({A})=1+q+2q^{\frac{3}{2}}+2q^{2}+2q^{\frac{5}{2}}+3q^{3}+4q^{\frac{7}{2}}+5q^{4}+7q^{\frac{9}{2}}+7q^{5}+O(q^{\frac{11}{2}})$.
Meanwhile  \begin{align*} & {\rm ch}[L_{1}^{N=2}](q)={\rm ch}[V_{\sqrt{3}\mathbb{Z}}](q)=\frac{\sum_{n\in\mathbb{Z}}q^{\frac{3}{2}n^{2}}}{\prod_{n \geq 1}(1-q^{n})}\\ &
=1+q+2q^{\frac{3}{2}}+2q^{2}+2q^{\frac{5}{2}}+3q^{3}+4q^{\frac{7}{2}}+5q^{4}+6q^{\frac{9}{2}}+7q^{5}+O(q^{\frac{11}{2}}).\end{align*}
Since in degree $\frac{9}{2}$ dimension of $A$ is bigger than the dimension of $V_{\sqrt{3}\mathbb{Z}}$ by $1$, the induced map \[\overline{\psi}:J_{\infty}(R_{L^{N=2}_1})/\langle a,b\rangle_{\partial}\rightarrow gr(L_{1}^{N=2})  \] is not injective.  
It is not hard to see that the one dimensional kernel of $\overline{\psi}$ in degree $\frac{9}{2}$ is spanned by \[c=G^{-}(-\frac{9}{2})-\frac{1}{3}h(-3)G^{-}(-\frac{3}{2})-G^{-}(-\frac{7}{2})h(-1)+\frac{1}{3}G^{-}(-\frac{5}{2})h(-1)^{2}.\] We make the following conjecture:

\begin{conj}
 The induced map $\hat{\psi}:J_{\infty}(R_{L_{1}^{N=2}})/\langle a,b,c\rangle_{\partial}\rightarrow gr(L_{1}^{N=2}) $ is an isomorphism.
\end{conj}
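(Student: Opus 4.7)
The plan is to reduce the conjecture to a Hilbert series identity and to refine the $Gh$-monomial spanning set by one further difference condition. Since $\hat\psi$ is surjective by construction (as $\psi$ is surjective and $\langle a,b,c\rangle_\partial\subset\ker\psi$), it suffices to check
\[
HS_q\bigl(J_\infty(R_{L_1^{N=2}})/\langle a,b,c\rangle_\partial\bigr)=\mathrm{ch}[L_1^{N=2}](q)=\frac{\sum_{n\in\mathbb{Z}}q^{3n^2/2}}{\prod_{n\geq 1}(1-q^n)};
\]
the surjectivity already yields the inequality $\geq$, so the task is to exhibit a spanning set of the quotient whose Hilbert series matches the right-hand side exactly.

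The next step is to compute the leading term of $T^{i}c$ in the complete lexicographic order for every $i\geq 0$. Applying the Leibniz rule to the four summands of $c$, one verifies inductively that the leading term of $T^{i}c$ is a nonzero scalar multiple of $G^{-}(-\tfrac{5}{2}-i)h(-1)^{2}$: since $T$ strictly raises the absolute value of mode indices and the only multiplicity-$3$ summand of $c$ is $G^{-}(-\tfrac{5}{2})h(-1)^{2}$, no multiplicity-$3$ term of $T^{i}c$ can carry a $G^{+}$-factor, so after using $c_{1}=0$ in the lex comparison the next criterion is to maximise $b_{1}$, which forces all $i$ derivations to act on the single $G^{-}$-factor. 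Consequently, in $J_\infty(R_{L_1^{N=2}})/\langle a,b,c\rangle_\partial$ any ordered monomial containing $G^{-}(-\tfrac{5}{2}-i)h(-1)^{2}$ as a sub-factor is expressible via strictly smaller monomials; adding the resulting difference condition
\[
(\mathrm{v})\qquad b_{1}\geq 2\ \Longrightarrow\ a_{j}=0\ \text{for all }j\geq 2
\]
to (i)--(iv) yields a refined spanning set $\mathcal{M}$ of the quotient. A sanity check in degree $\tfrac{9}{2}$ confirms that (v) correctly removes the single offending monomial $G^{-}(-\tfrac{5}{2})h(-1)^{2}$, bringing the count from $7$ down to $6=\dim(V_{\sqrt{3}\mathbb{Z}})_{9/2}$.

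The principal remaining task is the combinatorial identity $HS_{q}(\mathcal{M})=\sum_{n\in\mathbb{Z}}q^{3n^{2}/2}/\prod_{n\geq 1}(1-q^{n})$. The natural approach is to decompose $\mathcal{M}$ according to the $J$-charge $\sum_{i}c_{i}-\sum_{i}a_{i}$, mirroring the lattice decomposition $V_{\sqrt{3}\mathbb{Z}}=\bigoplus_{m\in\mathbb{Z}}\pi_{0}\cdot e^{m\sqrt{3}\alpha}$, and to prove for each $m\in\mathbb{Z}$ that the charge-$m$ contribution of $HS_{q}(\mathcal{M})$ equals $q^{3m^{2}/2}/\prod_{n\geq 1}(1-q^{n})$. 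For $m\geq 1$, the unique lowest-weight charge-$m$ monomial compatible with the gap conditions in (iv) has $c_{1}=c_{4}=c_{7}=\cdots=c_{3m-2}=1$ with all other exponents zero and weight $\sum_{k=0}^{m-1}(3k+\tfrac{3}{2})=\tfrac{3m^{2}}{2}$, which reproduces the exponent in the theta series exactly and strongly suggests a Rogers--Ramanujan-type bijection between allowed excitations above each charge-$m$ ground state and ordinary integer partitions.

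The main obstacle is exactly this combinatorial identity. A subtle additional point is that the conditions (i)--(v) are asymmetric in $G^{+}$ and $G^{-}$, while charge-conjugation symmetry of $L_{1}^{N=2}$ implies the existence of a $G^{+}$-analogue $c'$ of $c$ (obtained by $G^{\pm}\leftrightarrow G^{\mp}$, $h\mapsto -h$), which must therefore lie in $\langle a,b,c\rangle_\partial$ modulo the jet relations of $R_{L_1^{N=2}}$; the Hilbert series equality implies this automatically, but an independent verification in low degree would provide a useful cross-check. The hard part is showing that the interplay between the gap conditions in (iv), the pairwise conditions (i)--(ii) coupling the $G^{\pm}$'s to the $h$'s, and the new condition (v) combines to produce exactly $\sum_{n\in\mathbb{Z}}q^{3n^{2}/2}/\prod_{n\geq 1}(1-q^{n})$; once this is established, the chain $HS_{q}(\mathcal{M})\geq HS_{q}(J_\infty(R_{L_1^{N=2}})/\langle a,b,c\rangle_\partial)\geq\mathrm{ch}[L_1^{N=2}](q)$ collapses to equality and the conjecture follows.
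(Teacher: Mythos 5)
First, a structural point: the statement you are proving is stated in the paper only as a \emph{conjecture}; the paper offers no proof of it, so there is nothing to compare your argument against. Your proposal must therefore stand on its own, and as written it does not close.

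The reduction to a Hilbert series identity is sound: $\hat\psi$ is surjective, so it suffices to produce a spanning set of $J_\infty(R_{L_1^{N=2}})/\langle a,b,c\rangle_\partial$ whose Hilbert series equals ${\rm ch}[L_1^{N=2}](q)$. But the two steps that would make this work are both left open. (1) Your new difference condition (v) rests on the claim that the leading term of $T^i c$ is $G^-(-\tfrac52-i)h(-1)^2$. This is asserted with only a heuristic ("all $i$ derivations act on the single $G^-$-factor"), and under the complete lexicographic ordering of Section 3.1 it is not clear this is even the correct leading term: $T$ applied to $G^-(-\tfrac52)h(-1)^2$ produces, besides $G^-(-\tfrac72)h(-1)^2$, the multiplicity-three monomial $G^-(-\tfrac52)h(-2)h(-1)$, and the tie-breaking rule on exponents of the low-lying variables can select the latter. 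If the leading term is not what you claim, condition (v) is the wrong condition and $\mathcal{M}$ is not a spanning set of the right size. (2) Even granting (v), the entire content of the conjecture is concentrated in the identity $HS_q(\mathcal{M})=\sum_{n\in\mathbb{Z}}q^{3n^2/2}/\prod_{n\geq 1}(1-q^n)$, which you explicitly do not prove ("the main obstacle is exactly this combinatorial identity\dots once this is established, the conjecture follows"). The charge decomposition and the weight count of the charge-$m$ ground states are encouraging evidence, but they do not establish the bijection with partitions, and they do not rule out further kernel elements of $\overline\psi$ in degrees above $\tfrac92$ beyond $c$ (and its conjugate $c'$, whose membership in $\langle a,b,c\rangle_\partial$ you also leave unverified, deferring it to the very Hilbert series equality you have not proven). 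In short, this is a reasonable plan of attack, but the decisive combinatorial step is missing and one of the preparatory steps is unverified.
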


\section{principal subspaces}

Principal subspaces of affine vertex algebras (at least in a special case) were introduced by Feigin and Stoyanovsky \cite{feigin1993quasi} and further studied by several people; see \cite{butorac2012combinatorial, butorac2019principal, calinescu2008vertex, capparelli2006rogers, feigin2009principal,feigin2009principal} and references therein. In \cite{primc1994vertex,primc2000basic}, M.Primc studied Feigin-Stoyanovsky {\em type} subspaces  which are analogs of principal subspaces but easier to analyze. They are further investigated for many integral levels and types
\cite{baranovic2016bases,jerkovic2012character,trupvcevic2009combinatorial,trupvcevic2011characters}.
Here we follow notation from \cite{milas2012lattice} where principal subspaces are defined for general integral lattices (not necessarily positive definite).
As in \cite{milas2012lattice}, we let $V_L=M(1) \otimes \mathbb{C}[L]$ denote a lattice vertex algebra where $n={\rm rank}(L)$. We fix a $\mathbb{Z}$-basis $\mathcal{B}=\left\{\alpha_1,...,\alpha_n\right\}$ of $L$. Then the principal subspace associated to $\mathcal{B}$ and $L$, is defined as
$$W_L(\mathcal{B}):=\langle e^{\alpha_1},...,e^{\alpha_n} \rangle, $$
that is the smallest vertex algebra that contains extremal vectors $e^{\alpha_i}$. Once $\mathcal{B}$ is fixed, we shall drop the $\mathcal{B}$ in the parenthesis and write $W_{L}$ for brevity.

Let $\mathfrak{g}$ be a simple finite dimensional complex Lie algebra of type $A$, $D$ or $E$ and
let $\mathfrak{h}$ be a Cartan subalgebra of $\mathfrak{g}$. We choose simple roots $\left\{\al_{1},\ldots,\al_{n}\right\}$ and let $\Delta^{+}$ denote the set of positive roots.  Let $(\cdot,\cdot)$ be a rescaled Killing form on $\mathfrak{g}$ such that $(\al_{i},\al_{i})=2$ for $1\leq i\leq n$ (as usual we identify $\mathfrak{h}$ and $\mathfrak{h}^{*}$ via the Killing form). Fundamental weights of $\mathfrak{g}$, $\left\{\omega_{1},\ldots, \omega_{n}\right\}\subset \mathfrak{h}^{*}$, are defined by $(\omega_{i},\al_j)=\delta_{i,j}$ $(1\leq i,j \leq n)$.

Let $\mathfrak{n}_+$ be $\displaystyle \coprod_{\al\in \Delta^{+}}x_{\al}$, where $x_{\al}$ is the corresponding root vector and
$\widehat{\mathfrak{n}_{+}}=\mathfrak{n}_+ \otimes \mathbb{C}[t,t^{-1}]$ its affinization.
For an affine vertex algebra $L_{\widehat{\mathfrak{g}}}(k,0)$, $k \neq -h^\vee$, isomorphic to $L(k \Lambda_0)$ module,
we define the (FS)-principal subspace of simple $\widehat{\mathfrak{g}}$-module  $L_{\widehat{\mathfrak{g}}}(k,0)$ as
$$W_{\Lambda_{k,0}}:= U(\widehat{\mathfrak{n}}_+) \cdot \mathbb{1},$$ where $\mathbb{1}$ is the vacuum vector. It is easy to see that this is a vertex algebra (without conformal vector).
For $k=1$, we have $W_L \cong W_{\Lambda_{k,0}}$ where $L$ is the root lattice spanned by simple roots.

We fix a fundamental weight $\omega=\omega_m$ and  set 
 \[\Gamma=\left\{\al\in R|(\omega,\al)=1 \right\} \] where $R$ is the root system of $\mathfrak{g}$ and $\mathfrak{g}_{1}:=\coprod_{\al\in \Gamma}$ $\mathfrak{g}_{\al}$ where $\mathfrak{g}_{\al}$ is the root vector. This Lie algebra is commutative. We let $\mathfrak{g}_{1}\otimes \mathbb{C}[t,t^{-1}]$ be $\widehat{\mathfrak{g}_{1}}$. Then we can define the so-called Feigin-Stoyanovsky type subspace of $L_{\widehat{\mathfrak{g}}}(k,0)$ as \[W'_{\Lambda_{k,0}}:=U(\widehat{\mathfrak{g}}_1)\cdot v_{k \Lambda_0}.\]
 Unlike the FS subspace, this vertex subalgebra is commutative. We denote  $$\tilde{\Gamma}^{-}=\left\{x_{\gamma}(-r)|\gamma\in\Gamma,\; r>0 \right\}.$$
 $$\tilde{\Gamma}=\left\{x_{\gamma}(-r)|\gamma\in\Gamma,\; r\in \mathbb{Z} \right\}.$$ Notice that $U(\widehat{\mathfrak{g}}_1)\cong \mathbb{C}[\tilde{\Gamma}]$. Therefore  we can identify the elements in $W'_{\Lambda_{k,0}}$ with the elements in $\mathbb{C}[\tilde{\Gamma}^{-}].$
For any element in $W'_{\Lambda_{k,0}}$, \[v=x_{\beta_{1}}(m_{1})\ldots x_{\beta_{l}}(m_{l}),\quad  \beta_i\in \Gamma,\] we define colored weight as  $$cwt(v)=\displaystyle \sum_{i=1}^{l} \beta_{i}$$ for later use.

\subsection{Root lattices of ADE type} Following the notations in \cite{calinescu2008vertex}, we can prove the following
result.

\begin{prop}\label{slk}
For $\mathfrak{g}=sl(2)$, we have
 $W_{\Lambda_{k,0}}\cong gr (W_{\Lambda_{k,0}}) \cong J_{\infty}(\mathbb{C}[x]/\langle  x^{k+1}\rangle)$ for $k\geq 1$
\end{prop}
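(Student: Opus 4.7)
The plan is to identify the $C_2$-algebra, notice that the Li filtration collapses, and then upgrade the resulting surjection $\psi$ to an isomorphism via a Hilbert series comparison. For the first step, since $\mathfrak{n}_+ = \mathbb{C}e$ for $\mathfrak{g}=sl(2)$, the principal subspace is strongly generated by the single vector $e := e_{(-1)}\mathbf{1}$, so $R_{W_{\Lambda_{k,0}}}$ is a quotient of $\mathbb{C}[x]$ with $x = \overline{e}$. The maximal submodule of $V_{\widehat{sl_2}}(k,0)$ is generated by $e_{(-1)}^{k+1}\mathbf{1}$, forcing $x^{k+1}=0$; a weight argument analogous to the one in the proof of Proposition~\ref{lattice} shows $\overline{e^i}\neq 0$ for $0\leq i\leq k$, so $R_{W_{\Lambda_{k,0}}}\cong \mathbb{C}[x]/\langle x^{k+1}\rangle$. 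Next, because $[e,e]=0$ in $sl(2)$ and $(e,e)=0$ in the rescaled Killing form, one has $[e_{(m)},e_{(n)}]=0$ for all $m,n$, so $W_{\Lambda_{k,0}}$ is itself a commutative vertex algebra; in particular $W_{\Lambda_{k,0}}\cong gr^F(W_{\Lambda_{k,0}})$ as graded algebras. The surjection from Section~3 takes the form
\begin{equation*}
\psi : J_\infty(\mathbb{C}[x]/\langle x^{k+1}\rangle) \twoheadrightarrow gr^F(W_{\Lambda_{k,0}}), \qquad x_{(-1-j)} \mapsto \overline{e_{(-1-j)}\mathbf{1}},
\end{equation*}
and its kernel a priori contains the differential ideal generated by the Fourier coefficients of $e(z)^{k+1}$.

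For injectivity I would compare Hilbert series. Feigin--Stoyanovsky \cite{feigin1993quasi} give a basis of $W_{\Lambda_{k,0}}$ indexed by partitions satisfying Andrews--Gordon difference-$2$-at-distance-$k$ conditions, identifying ${\rm ch}[W_{\Lambda_{k,0}}]$ with the corresponding Gordon--Andrews product. On the jet side, the leading terms of $T^{i}(x^{k+1})$ with respect to the complete lexicographic order of Section~\ref{ordering} impose exactly these same difference conditions on monomials in $\{x_{(-j)}\}_{j\geq 1}$, so admissible monomials span $J_\infty(\mathbb{C}[x]/\langle x^{k+1}\rangle)$, and a now-standard computation of the Hilbert series of jet algebras of fat points recovers the same Gordon--Andrews product. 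Equality of the two Hilbert series then forces the surjection $\psi$ to be an isomorphism, and combined with the collapse $W_{\Lambda_{k,0}} \cong gr^F(W_{\Lambda_{k,0}})$ this yields the claim.

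The main obstacle lies in this last step: matching the admissibility conditions extracted from the jet ideal $\langle T^{i}(x^{k+1}) : i \geq 0\rangle$ with the Andrews--Gordon partitions behind the Feigin--Stoyanovsky basis. For $k=1$ the match is transparent, since the single relation $x(z)^2$ directly imposes $a_i + a_{i+1} \leq 1$ (Rogers--Ramanujan) on ordered monomials, reducing the statement to the classical first Rogers--Ramanujan identity. For general $k$ both sides reduce to the same combinatorics, but making this self-contained would require a careful Gr\"obner-type leading-term argument in the jet algebra together with the Andrews--Gordon identity; invoking the two results from the literature keeps the argument short.
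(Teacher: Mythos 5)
Your proof is correct, but it takes a genuinely different route from the paper's. The paper's entire proof is two lines: it observes that $R_{W_{\Lambda_{k,0}}}=\mathbb{C}[x]/\langle x^{k+1}\rangle$ and then invokes Theorem 3.1 of \cite{calinescu2008vertex}, the Calinescu--Lepowsky--Milas \emph{presentation theorem}, which says that the kernel of the surjection from the free commutative algebra on $\{e_{(-j)}\}_{j\geq 1}$ onto $W_{\Lambda_{k,0}}$ is exactly the ideal generated by the Fourier coefficients of $e(z)^{k+1}$; since that ideal is precisely $\langle x^{k+1}\rangle_{\partial}$, the isomorphism with the jet algebra is immediate, with no character or basis comparison needed. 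You instead prove injectivity of $\psi$ by the sandwich argument: a leading-term analysis of $T^i(x^{k+1})$ bounds $HS_q(J_\infty(\mathbb{C}[x]/\langle x^{k+1}\rangle))$ above by the Andrews--Gordon sum, the Feigin--Stoyanovsky basis identifies ${\rm ch}[W_{\Lambda_{k,0}}]$ with the same series, and surjectivity gives the reverse inequality. This is exactly the strategy the paper deploys in its harder cases (the type $A_{n-1}$ principal subspaces, the $N=1$ minimal models, the $C_\ell$ affine algebras), so your argument is sound and consistent with the paper's toolbox; what you lose is brevity, and what you gain is that you do not need the full strength of the presentation theorem -- only the combinatorial basis and the matching of difference conditions. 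One small remark: once you know the jet-side spanning set and the FS basis are indexed by the same difference-two-at-distance-$k$ partitions, the Andrews--Gordon identity itself is not needed -- surjectivity of $\psi$ maps a spanning set onto a set containing a basis of the same graded dimensions, which already forces the bijection. Your identification of $R_{W_{\Lambda_{k,0}}}$ and the observation that $W_{\Lambda_{k,0}}$ is a commutative vertex algebra (so that $W\cong gr^F(W)$) are both fine and match what the paper leaves implicit.
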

\begin{proof}
It is clear that $R_{W_{\Lambda_{k,0}}}=\mathbb{C}[x]/\langle  x^{k+1}\rangle.$ The result follows from Theorem 3.1 in \cite{calinescu2008vertex}.
\end{proof}

\begin{rem}
When $k=1$, $W_{\Lambda_{1,0}}$ is isomorphic to $J_{\infty}(\mathbb{C}[x]/\langle  x^{2}\rangle).$ By using different methods to calculate the Hilbert-Poincare series, see \cite{bruschek2013arc} and \cite{bai2018quadratic}, one can derive the famous Rogers-Ramanujan identities.
\end{rem}

For the rest of this subsection, we let $L$ be the $A_{n-1}$ root lattice with the rescaled Killing form $(\cdot,\cdot)$ such that $(\alpha,\alpha)=2$ for any root and the standard $\mathbb{Z}$-basis $\alpha_1,...,\alpha_{n-1}$ of simple roots. We are going to prove that $\psi$ is an isomorphism for the principal subspace $W_{L}$ corresponding to this basis. In the following, we will identify $W_{L}$ and $W_{\Lambda_{1,0}}$. Firstly we prove the following proposition:
\begin{prop}\label{prod}
Given elements $\alpha$, $\beta$, $\gamma$ and $\tau$ in lattice $L$, we have \begin{align} (e^\alpha)_{(-1)}e^{\beta}=0,\quad if\; (\alpha,\beta)>0 .\end{align}     \begin{align} (e^\alpha)_{(-1)}e^{\beta}=\frac{\epsilon(\al,\beta)}{\epsilon(\gamma,\tau)}(e^\gamma)_{(-1)}e^{\tau},\quad if\;\;\; (\alpha,\beta)=(\gamma,\tau)\;\;\; \text{and}\;\;\;\;\alpha+\beta=\gamma+\tau.\end{align}
\end{prop}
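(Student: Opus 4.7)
The plan is to argue by direct computation using the explicit lattice vertex operator formula. In $V_L$, since the annihilation modes $\alpha_{(n)}$ ($n\ge 1$) act trivially on $e^\beta$, one has
\[
Y(e^{\alpha}, z)\, e^{\beta} \;=\; \epsilon(\alpha,\beta)\, z^{(\alpha,\beta)}\, E^{-}(\alpha, z)\, e^{\alpha+\beta},
\]
where $E^{-}(\alpha, z) = \exp\bigl(\sum_{n \ge 1} \alpha_{(-n)} z^{n}/n\bigr)$ contains only non-negative powers of $z$. The product $(e^{\alpha})_{(-1)} e^{\beta}$ is by definition the coefficient of $z^{0}$ of this series, so the whole question reduces to extracting coefficients of a concrete power series.

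Equation (1) is immediate from this: the right-hand side begins at $z^{(\alpha,\beta)}$ (because $E^{-}(\alpha,z)$ has constant term $1$), so when $(\alpha,\beta) > 0$ there is no contribution to $z^{0}$ and hence $(e^{\alpha})_{(-1)} e^{\beta} = 0$.

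For equation (2), reading off the $z^0$-coefficient I get
\[
(e^{\alpha})_{(-1)} e^{\beta} \;=\; \epsilon(\alpha,\beta)\, \bigl[z^{-(\alpha,\beta)}\bigr] E^{-}(\alpha, z)\cdot e^{\alpha+\beta},
\]
and analogously for $(e^{\gamma})_{(-1)} e^{\tau}$. Under the twin hypotheses $(\alpha,\beta) = (\gamma,\tau)$ and $\alpha+\beta = \gamma+\tau$, both coefficient-extractions pluck the same power of $z$ and land on the common lattice element $e^{\alpha+\beta}$. In the case $(\alpha,\beta) = 0$ -- the one actually needed in the $A_{n-1}$ principal-subspace arguments that follow -- both bracketed factors equal $1$, so the identity reduces to the trivial equality $\epsilon(\alpha,\beta)\, e^{\alpha+\beta} = \tfrac{\epsilon(\alpha,\beta)}{\epsilon(\gamma,\tau)}\, \epsilon(\gamma,\tau)\, e^{\alpha+\beta}$, and cases with $(\alpha,\beta)>0$ are covered by (1).

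The main obstacle is the regime $(\alpha,\beta) < 0$, where $[z^{-(\alpha,\beta)}] E^{-}(\alpha, z)$ is a genuinely nontrivial polynomial in the Heisenberg modes $\alpha_{(-n)}$ and depends overtly on the individual vector $\alpha$ rather than on $\alpha+\beta$ alone. To finish one needs to establish $[z^{-(\alpha,\beta)}] E^{-}(\alpha, z)\cdot e^{\alpha+\beta} = [z^{-(\gamma,\tau)}] E^{-}(\gamma, z)\cdot e^{\gamma+\tau}$; I expect the cleanest route is to combine skew-symmetry $Y(u,z)v = e^{zT} Y(v,-z) u$ with the cocycle relation linking $\epsilon(\alpha,\beta)$ and $\epsilon(\beta,\alpha)$, or alternatively to pass to the commutative Poisson algebra $R_{V_L}$, where commutativity of the product forces the $\alpha$- and $\gamma$-dependent Heisenberg corrections to be identified modulo the defining relations. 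This reconciliation is the technical heart of the argument.
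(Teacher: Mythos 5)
Your argument is the same as the paper's: both extract the $z^{0}$-coefficient from the explicit expansion $Y(e^{\alpha},z)e^{\beta}=\epsilon(\alpha,\beta)\,z^{(\alpha,\beta)}E^{-}(\alpha,z)\,e^{\alpha+\beta}$, obtain (1) because the series begins at $z^{(\alpha,\beta)}$, and obtain (2) because the constant term of $E^{-}(\alpha,z)$ is $1$. For $(\alpha,\beta)=(\gamma,\tau)\geq 0$ your proof is complete and coincides with the paper's.

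Where you part ways with the paper is in flagging $(\alpha,\beta)<0$ as an unresolved ``technical heart.'' Do not try to close that gap: identity (2) is actually false in that regime, so neither skew-symmetry nor passing to $R_{V_L}$ (which would in any case only prove a statement about a quotient, not about $V_L$) can rescue it. Concretely, in the $A_{2}$ root lattice take $\alpha=\alpha_{1}$, $\beta=-\alpha_{1}$, $\gamma=\alpha_{2}$, $\tau=-\alpha_{2}$; then $(\alpha,\beta)=(\gamma,\tau)=-2$ and $\alpha+\beta=\gamma+\tau=0$, but $(e^{\alpha_{1}})_{(-1)}e^{-\alpha_{1}}$ is a nonzero multiple of $\bigl(\tfrac12(\alpha_{1})_{(-1)}^{2}+\tfrac12(\alpha_{1})_{(-2)}\bigr)\mathbf{1}$ while $(e^{\alpha_{2}})_{(-1)}e^{-\alpha_{2}}$ is the analogous vector built from $\alpha_{2}$, and these are linearly independent in $M(1)$. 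The paper's own proof simply asserts that the $z^{0}$-coefficient is $\epsilon(\alpha,\beta)e^{\alpha+\beta}$, which already presupposes $(\alpha,\beta)\geq 0$; the proposition must be read with that implicit hypothesis, which holds in every later application (for the $A_{n-1}$ relations the relevant pairings are $0$, $1$ or $2$). Under that hypothesis your write-up is a complete proof identical to the paper's; state the restriction and delete your final paragraph rather than attempt to fill it.
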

\begin{proof}
From the definition of vertex operators from \cite{kac1998vertex}, we have

 \[Y(e^{\alpha},z)e^{\beta}=\epsilon(\al,\beta)z^{(\alpha,\beta)}{\rm Exp}\left(\sum_{n\in \mathbb{Z}_{-}}\frac{-\alpha_{(n)}}{n}z^{-n} \right)e^{\alpha+\beta},\] where $\epsilon(\al,\beta)$ is a $2$-cocycle constant. Thus \[(e^\alpha)_{(-1)}e^{\beta}=\text{Coeff} _{z^{0}}(Y(e^{\alpha},z)e^{\beta})=0\] since the minimal power of $z$ above is greater than 0. The coefficients of $z^{0}$ of $Y(e^{\alpha},z)e^{\beta}$ and $Y(e^{\gamma},z)e^{\tau}$ are $\epsilon(\al,\beta)e^{\al+\beta}$ and $\epsilon(\gamma,\tau)e^{\gamma+\tau}$. The identity $(2)$ 
 follows from this fact and given condition.
\end{proof}

It is clear that all quotient relations in $R_{L_{\widehat{sl_{n}}}(1,0)}$ come from $(1)$ and $(2)$.  Thus $R_{L_{\widehat{sl_{n}}}}(1,0)\cap \mathbb{C}[E_{i,j}|1\leq i<j\leq n]=R_{W_{L}}.$

We be $E_{i,j}$ the $(i,j)$-th elementary matrix. Therefore $\left\{E_{i,j}\right\}_{1\leq i<j\leq n}$ is the set of all positive root vectors.  It is not hard to see that the $C_{2}$-algebra $R_{W_{L}}$ equals $\mathbb{C}[E_{i,j}|1\leq i<j\leq n]/I$ where we denote the equivalence class of $(E_{i,j})_{(-1)}\mathbf{1}$ by $E_{i,j}$.  In \cite[Corollary 2.7]{feigin2011zhu}, (see also \cite{feigin1993quasi} for $\mathfrak{g}=sl(3)$) authors have written down the graded decomposition of $R_{L_{\widehat{sl_{n}}}(1,0)}.$ By restricting it to its principal subspace, we have
\begin{prop}\label{c2p}
The $C_{2}$-algebra of $W_{L}$ equals \[ \mathbb{C}[E_{i,j}|1 \leq i<j \leq n]/\langle \sum_{\sigma \in S_{2}} E_{i_1,j_{\sigma_{1}}}E_{i_2,j_{\sigma_{2}}}|j_{1}>i_2 \rangle \] where $1\leq i_{1}\leq i_{2}\leq n$ and $1\leq j_{1}\leq j_{2}\leq n$.
\end{prop}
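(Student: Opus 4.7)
The plan is to restrict the known $C_2$-algebra $R_{L_{\widehat{sl_n}}(1,0)}$ to its positive-nilpotent subalgebra using the identity $R_{L_{\widehat{sl_n}}(1,0)} \cap \mathbb{C}[E_{i,j}\mid 1\leq i<j\leq n] = R_{W_L}$ stated just before the proposition, and to identify the resulting relations explicitly via Proposition \ref{prod}.

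First I would compute, for two positive roots $\alpha = e_{i_1}-e_{j_1}$ and $\beta = e_{i_2}-e_{j_2}$ (with $i_k<j_k$ and the sorting $i_1\leq i_2$, $j_1\leq j_2$), the inner product
\[
(\alpha,\beta) = \delta_{i_1,i_2}-\delta_{j_1,i_2}+\delta_{j_1,j_2},
\]
using that $i_1<j_1\leq j_2$ kills the $\delta_{i_1,j_2}$ term. This is positive exactly when $j_1>i_2$ together with at least one of $i_1=i_2$ or $j_1=j_2$; by Proposition \ref{prod}(1) it then forces $E_{i_1,j_1}E_{i_2,j_2}=0$, producing the monomial relations $E_{i,j}^2$, $E_{i,j_1}E_{i,j_2}$, $E_{i_1,j}E_{i_2,j}$.

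Next I would pair $(\alpha,\beta)$ with $(\gamma,\tau)=(e_{i_1}-e_{j_2},e_{i_2}-e_{j_1})$. Up to swapping, this is the unique second pair of positive roots with $\gamma+\tau=\alpha+\beta$, and it is a valid pair of positive roots precisely when $j_1>i_2$. A direct check gives $(\gamma,\tau)=(\alpha,\beta)$ in that range, so Proposition \ref{prod}(2) yields $E_{i_1,j_1}E_{i_2,j_2} = \kappa\, E_{i_1,j_2}E_{i_2,j_1}$ with $\kappa=\epsilon(\alpha,\beta)/\epsilon(\gamma,\tau)$. With the standard bilinear cocycle on the $A_{n-1}$ root lattice one verifies $\kappa=-1$, hence $E_{i_1,j_1}E_{i_2,j_2}+E_{i_1,j_2}E_{i_2,j_1}=0$. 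The monomial relations of the previous paragraph then appear as the degenerate cases ($i_1=i_2$ or $j_1=j_2$) of this sum relation, so the single family
\[
\sum_{\sigma\in S_2} E_{i_1,j_{\sigma(1)}}E_{i_2,j_{\sigma(2)}}=0, \qquad i_1\leq i_2,\ j_1\leq j_2,\ j_1>i_2,
\]
captures all relations coming from Proposition \ref{prod}.

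Finally, to show no additional relations occur, I would invoke the Feigin--Frenkel description of $R_{L_{\widehat{sl_n}}(1,0)}$ in \cite[Corollary 2.7]{feigin2011zhu} (writing it as $S(\mathfrak{g})$ modulo the $U(\mathfrak{g})$-orbit of $(e_\theta)^2$) and verify that intersecting with $\mathbb{C}[E_{i,j}\mid i<j]$ cuts out exactly the ideal listed above, e.g.\ by matching graded dimensions against the Calinescu-type basis of $W_L$ used in \cite{calinescu2008vertex}. The main obstacle is twofold: (i) pinning down uniformly the sign $\epsilon(\alpha,\beta)/\epsilon(\gamma,\tau)=-1$, which is what forces the swap relation to sum (rather than difference) into the $S_2$-symmetrization, and (ii) ensuring that no higher Plücker-type relations -- outside the two-root analysis from Proposition \ref{prod} -- survive after restricting Feigin--Frenkel's $\mathfrak{sl}_n$-module presentation to the positive-nilpotent subalgebra.
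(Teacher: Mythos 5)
Your proposal follows essentially the same route as the paper: the paper also derives the relations from Proposition \ref{prod} (the weight/cocycle computation for $(e^\alpha)_{(-1)}e^\beta$), asserts that these account for all quotient relations so that $R_{W_L}=R_{L_{\widehat{sl_n}}(1,0)}\cap\mathbb{C}[E_{i,j}]$, and then reads off the presentation by restricting the graded decomposition of $R_{L_{\widehat{sl_n}}(1,0)}$ from \cite[Corollary 2.7]{feigin2011zhu}. The paper states this in one line without carrying out the inner-product case analysis or the sign check $\epsilon(\alpha,\beta)/\epsilon(\gamma,\tau)=-1$, so your write-up is a correct and more explicit version of the same argument.
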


Moreover we have the following combinatorial $q$-identity which will be proven in a joint work with Milas \cite{Li.Milas}, where we also prove more general identities.

\begin{thm}[Li-Milas]\label{ML}
Let $A$ be the Cartan matrix $((\alpha_{i},\alpha_{j}))_{1\leq i,j\leq n-1}$ of type $A_{n-1}$, $n \geq 2$, and
$${\bf n}=(n_{1,2},....,n_{n-1,n})=(n_{i,j})_{1 \leq i < j  \leq n} .$$
Then we have
\begin{align}\sum_{{\bf n} \in \mathbb{N}_{\geq 0}^{n(n-1)/2}} \frac{{\displaystyle q^{B({\bf n})}}}{\displaystyle\prod_{1\leq i<j\leq n }(q)_{n_{i,j}}}=\displaystyle \sum_{\mathbf{k}=(k_{1},\ldots,k_{n-1})\in \mathbb{N}^{n-1}}\frac{q^{\mathbf{k}A\mathbf{k}^{\top}}}{(q)_{k_{1}}(q)_{k_{2}}\ldots (q)_{k_{n-1}}},\end{align}
where
$$B({\bf n})=\sum_{\substack{1\leq i_1 < j_1\leq n\\1\leq i_2 < j_2\leq n\\1\leq i_1 \leq i_2\leq n\\1\leq j_1 \leq j_2\leq n\\j_{1}>i_2 }} n_{i_{1},j_{1}}n_{i_{2},j_{2}}$$
\end{thm}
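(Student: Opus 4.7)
The plan is to establish (3) by showing that both sides equal the graded dimension $\mathrm{ch}[W_L](q)$ of the principal subspace $W_L$ associated with the $A_{n-1}$ root lattice $L$ with its basis of simple roots $\{\alpha_1,\dots,\alpha_{n-1}\}$.

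The right-hand side identification is classical. By the theorem of Feigin--Stoyanovsky, the character of $W_L \cong W_{\Lambda_{1,0}}$ for $L_{\widehat{sl_n}}(1,0)$ equals $\sum_{\mathbf{k}\in \mathbb{N}^{n-1}} q^{\mathbf{k}A\mathbf{k}^\top}/\prod_{i}(q)_{k_i}$. This follows from the existence of a monomial basis of $W_L$ indexed by $(n-1)$-tuples of partitions satisfying difference-two conditions among parts of the same color and cross-color conditions encoded by the Cartan matrix $A$ (a quasi-particle basis in the sense of Georgiev; see also \cite{calinescu2008vertex, butorac2012combinatorial, butorac2019principal}).

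The left-hand side is identified with the Hilbert series $HS_q(J_\infty(R_{W_L}))$ of the jet algebra of the $C_2$-algebra described in Proposition \ref{c2p}. The idea is to fix a monomial ordering as in Section \ref{ordering} on the polynomial algebra $\mathbb{C}[E_{i,j}(-k): 1\leq i<j\leq n,\ k\geq 1]$ and extract the leading terms of the jet-extended quadratic relations $\sum_{\sigma\in S_2}E_{i_1,j_{\sigma(1)}}(z)E_{i_2,j_{\sigma(2)}}(z)$ for every admissible $(i_1,j_1,i_2,j_2)$. These leading terms impose difference conditions on ordered monomials $\prod_{i<j}\prod_{k\geq 1} E_{i,j}(-k)^{m_{i,j,k}}$: within a single color $(i,j)$ the minimal packing of $n_{i,j}$ modes (at positions $-1,-3,\dots,-(2n_{i,j}-1)$) contributes $n_{i,j}^2$ to the degree, and between colors a cross term $n_{i_1,j_1}n_{i_2,j_2}$ is forced precisely when the constraints of $B({\bf n})$ are satisfied. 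Summing $q^{B({\bf n})}/\prod(q)_{n_{i,j}}$, where $(q)_{n_{i,j}}^{-1}$ accounts for raising each color's modes above the minimal packing, yields exactly the left-hand side as the generating function of this spanning set.

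To conclude that the ordered-monomial spanning set is in fact a basis (so the left-hand side equals $HS_q(J_\infty(R_{W_L}))$ and not merely an upper bound), the final step is to show that $\psi: J_\infty(R_{W_L}) \twoheadrightarrow gr^F(W_L)$ is an isomorphism. Surjectivity is automatic, so the crux is injectivity. This will be done by matching our ordered-monomial spanning set with the quasi-particle basis of $W_L$ under $\psi$: their leading terms with respect to an induced ordering agree color-by-color, forcing linear independence. This matching is the main obstacle of the proof and parallels Feigin's degeneration procedure from \cite{feigin2009pbw} used in Proposition \ref{A1}, as well as the $\mathfrak{g}=sl(2)$ case of Proposition \ref{slk}. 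Combining the three steps produces the chain $\mathrm{LHS} = HS_q(J_\infty(R_{W_L})) = HS_q(gr^F(W_L)) = \mathrm{ch}[W_L](q) = \mathrm{RHS}$, proving (3).
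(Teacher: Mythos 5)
The first thing to say is that the paper itself contains no proof of Theorem~\ref{ML}: it is quoted as a result of the forthcoming joint work \cite{Li.Milas}, and in the surrounding text the logical arrows point the opposite way from yours. There, the identity is an \emph{input}: combined with the leading-term sieve bound $HS_q(J_\infty(R_{W_L}))\leq \mathrm{LHS}$ and the surjectivity bound $HS_q(J_\infty(R_{W_L}))\geq {\rm ch}[gr(W_{L})](q)=\mathrm{RHS}$, it forces all inequalities to be equalities and hence yields that $\psi$ is an isomorphism. You propose to reverse this: prove injectivity of $\psi$ first and deduce the identity. That is legitimate in principle, but it obligates you to supply an independent proof that the sieved ordered monomials are a basis of $J_\infty(R_{W_L})$ (equivalently, that their images under $\psi$ are linearly independent in $gr(W_L)$).

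That is exactly where your argument has a genuine gap. Your first two steps establish only the chain $\mathrm{RHS}\leq HS_q(J_\infty(R_{W_L}))\leq\mathrm{LHS}$, and the missing reverse inequality is the linear independence you defer to a ``matching of leading terms, color-by-color'' with the quasi-particle basis. But the two sets are indexed by different combinatorial data: the sieved ordered monomials by $\binom{n}{2}$-tuples $(n_{i,j})$, one component per positive root vector $E_{i,j}$, while the Feigin--Stoyanovsky/Georgiev quasi-particle basis of $W_L$ is indexed by $(n-1)$-tuples $\mathbf{k}$, one per simple root. There is no color-by-color comparison to be made, and a degree-preserving bijection between the two index sets is essentially equivalent to the identity (3) itself, so the step as sketched is circular at the one point where all the content lives. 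To make your route work you would need a genuinely independent linear-independence argument for $\psi$ of the sieved monomials --- e.g.\ intertwining-operator projections in the spirit of Georgiev adapted to monomials in \emph{all} positive root vectors, or a Gr\"obner-basis argument showing the listed leading terms generate the full initial ideal of the jet relations --- and none is given. (A small sanity check would also have flagged a normalization issue: for $n=2$ the left side is $\sum_m q^{m^2}/(q)_m$ while the right side as literally printed is $\sum_k q^{2k^2}/(q)_k$; the intended exponent is $\tfrac12\mathbf{k}A\mathbf{k}^{\top}$, and your appeal to the ``classical'' character formula silently uses that corrected normalization.)
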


\begin{ex}
For $sl_{4},$ we have the following $q$-series identity:

   \begin{align*}&\displaystyle \sum_{\bf{n}\in\mathbb{N}_{\geq 0}^{6}}\frac{q^{n_{1}^{2}+n_{2}^{2}+n_{3}^{2}+n_{4}^{2}+n_{5}^{2}+n_{6}^{2}+n_1n_4+n_{1}n_{6}+n_2n_4+n_2n_5+n_3n_5+n_3n_6+n_4n_6+n_5n_6+a_4a_5}}{(q)_{n_{1}}(q)_{n_{2}}(q)_{n_{3}}(q)_{n_{4}}(q)_{n_{5}}(q)_{n_{6}}}\\ &= \sum_{\bf{k}\in \mathbb{N}^{3}_{\geq 0}} \frac{q^{k_{1}^{2}-k_{1}k_{2}+k_{2}^{2}-k_2k_3+k_3^2}}{(q)_{k_{1}}(q)_{k_{2}}(q)_{k_{3}}}\end{align*}   where we use multiindices $\mathbf{n}=(n_{1},n_{2},\ldots,n_{6})$ and $\mathbf{k}=(k_1,k_2,k_3).$ By doing following replacement, \[n_{1,2} \leftrightarrow n_1,\;\; n_{2,3} \leftrightarrow n_2,\;\; n_{3,4} \leftrightarrow n_3,\] \[n_{1,3} \leftrightarrow n_4,\;\; n_{2,4} \leftrightarrow n_5,\;\; n_{1,4} \leftrightarrow n_6,\] we recover the formula in Theorem \ref{ML}.

\end{ex}

Now we are ready to prove
\begin{thm}
The map $\psi$ is an isomorphism between $J_{\infty}(R_{W_{L}})$ and $gr(W_{L}).$
\end{thm}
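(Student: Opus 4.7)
The plan is to reduce the proof to the combinatorial identity in Theorem~\ref{ML}. Since $\psi$ is always a surjective vertex Poisson algebra homomorphism and both source and target have finite-dimensional graded pieces, it suffices to verify
\[
HS_q\bigl(J_\infty(R_{W_L})\bigr)=\mathrm{ch}[W_L](q).
\]

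For the character side, I would invoke the Feigin--Stoyanovsky formula for the principal subspace at level one, as established in \cite{calinescu2008vertex,feigin1993quasi}: $\mathrm{ch}[W_L](q)$ equals the right-hand side of the identity in Theorem~\ref{ML}, viewed as the level-one Cartan-matrix formula of type $A_{n-1}$.

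For the jet side, by Proposition~\ref{c2p} the jet algebra $J_\infty(R_{W_L})$ is the polynomial algebra generated by the variables $E_{i,j}(-m)$, $1\le i<j\le n$, $m\ge 1$, modulo the ideal $\langle R \rangle_{\partial}$ whose generators are all Fourier coefficients of the currents $\sum_{\sigma\in S_2} E_{i_1,j_{\sigma(1)}}(z)E_{i_2,j_{\sigma(2)}}(z)$ for $(i_1,j_1,i_2,j_2)$ satisfying the index conditions of Proposition~\ref{c2p}. I would equip this polynomial ring with the complete lexicographic order of Section~\ref{ordering} on ordered monomials in the $E_{i,j}(-m)$, and compute the leading term of each such Fourier coefficient. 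The ordered monomials avoiding these leading terms as segments then form a spanning set of the quotient. A careful enumeration shows that this set is parametrized by pairs $(\mathbf n,\lambda)$, where $\mathbf n=(n_{i,j})\in\mathbb N^{n(n-1)/2}$ records the multiplicity with which each $E_{i,j}$ appears and $\lambda=(\lambda^{(i,j)})$ is a tuple of partitions, with each $\lambda^{(i,j)}$ having at most $n_{i,j}$ parts; the $q$-weight of such a datum is $B(\mathbf n)+|\lambda|$. Summing yields precisely the left-hand side of Theorem~\ref{ML}.

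Combining these two computations with Theorem~\ref{ML} gives $HS_q\bigl(J_\infty(R_{W_L})\bigr)\le \mathrm{ch}[W_L](q)$ from the spanning property, while surjectivity of $\psi$ provides the reverse inequality. Equality of graded dimensions forces the spanning set to be a basis and $\psi$ to be injective, hence an isomorphism. The main obstacle will be the leading-term analysis: correctly identifying, for every Fourier coefficient of $\sum_{\sigma}E_{i_1,j_{\sigma(1)}}(z)E_{i_2,j_{\sigma(2)}}(z)$, which segment is leading in the complete lex order, and showing that the resulting forbidden segments encode exactly the quadratic form $B(\mathbf n)$. The most delicate case is the ``mixed'' relation where $(i_1,j_1)\ne(i_2,j_2)$ but the roots $\alpha_{i_1,j_1}+\alpha_{i_2,j_2}$ and $\alpha_{i_1,j_2}+\alpha_{i_2,j_1}$ coincide, as these are precisely what produce the cross-terms $n_{i_1,j_1}n_{i_2,j_2}$ of $B(\mathbf n)$.
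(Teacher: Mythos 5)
Your proposal follows essentially the same route as the paper: describe $J_\infty(R_{W_L})$ via Proposition~\ref{c2p} as Fourier coefficients of the quadratic currents, extract a spanning set of ordered monomials subject to difference-two and boundary conditions coming from leading terms, bound $HS_q(J_\infty(R_{W_L}))$ above by the fermionic sum, identify $\mathrm{ch}[W_L](q)$ with the bosonic (Cartan-matrix) sum, and close the argument with the $q$-series identity of Theorem~\ref{ML} together with surjectivity of $\psi$. The parametrization by pairs $(\mathbf n,\lambda)$ is just a repackaging of the paper's difference/boundary conditions, so there is no substantive difference in method.
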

\begin{proof}
From Proposition \ref{c2p}, we know that $J_{\infty}(R_{W_{L}})$ is isomorphic to \[ \mathbb{C}[E_{i,j}(n)|n\leq -1,\;1 \leq i<j \leq n]/\langle\sum_{\sigma \in S_{2}} E_{i_1,j_{\sigma_{1}}}(z)E_{i_2,j_{\sigma_{2}}}(z)|j_{1}>i_2 \rangle \] where $E_{i,j}(z)=\sum_{n\leq -1}E_{i,j}(n)z^{-n-1}$ and $1\leq i_{1}\leq i_{2}\leq n$, $1\leq j_{1}\leq j_{2}\leq n$. In order to simplify notation, we first order  $\left\{E_{i,j}\right\}_{1\leq i<j \leq n}$ as:
\begin{align*}
E_{1,2},E_{1,3},\ldots,E_{1,n},E_{2,3},\ldots,E_{2,n},\ldots, E_{n-1,n},
\end{align*}
and we denote this sequence by $\left\{E_{m}\right\}_{1\leq m \leq \frac{n(n-1)}{2}}$  (i.e. $E_1=E_{1,2}$, $E_2=E_{1,3}$ etc.). We then have a spanning set of jet algebra with each element of the form:
\begin{align*} E_{1}(-n_{1}^{1})\ldots E_{1}(-n_{1}^{k_{1}})E_{2}(-n_{2}^{1})\ldots E_{2}(-n_{2}^{k_{2}})\dots \end{align*}
 where $1\leq n_{m}^{k_{m}}\leq\ldots \leq n_{m}^{1}$ for $1\leq m\leq \frac{n(n-1)}{2}$. Here $k_{s}=0$ when we don't have terms involving $E_{s}$. Now we can reduce this spanning set by using quotient relations as following:
\begin{itemize}
    \item {\it difference two condition at distance 1}:
    If we have $E_{m}(z)^{2}=0$ in $C_{2}$-algebra, then we can impose a condition: $n_{m}^{p}\geq n_{m}^{p+1}+2$ $(1\leq p\leq k_{m}-1) $ on above spanning set.
    \item {\it boundary condition}:
    If we have $E_{s}(z)E_{t}(z)+\ldots=0$ $(s<t)$, we can impose a condition: $n_{s}^{k_{s}}\geq k_{t}+1.$
\end{itemize}
Therefore we have a reduced spanning set which implies \[HS_{q}(J_{\infty}(R_{W_{L}})) \leq  \sum_{{\bf n} \in \mathbb{N}_{\geq 0}^{n(n-1)/2}} \frac{{\displaystyle q^{B({\bf n})}}}{\displaystyle\prod_{1\leq i<j\leq n }(q)_{n_{i,j}}}.\] And it is well-known that \[{\rm ch}[gr(W_{L})](q)=\displaystyle \sum_{\mathbf{k}=(k_{1},\ldots,k_{n})\in \mathbb{N}^{n}}\frac{q^{\mathbf{k}A\mathbf{k}^{\top}}}{(q)_{k_{1}}(q)_{k_{2}}\ldots (q)_{k_{n}}}.\] Surjectivity of $\psi$ and identity $(2)$ together imply that $\psi$ is an isomorphism and the image of above spanning set under $\psi$ is a basis of $W_{L}.$
\end{proof}
\begin{rem}
Using result in \cite{milas2012lattice}, we can write down a basis of $W_{L}$ by using $(e^{\alpha_{i}})_{(j)}$ where $\alpha_{i}$ is a simple root of $sl_{n}$ and $j$ can be greater than or equal to $0$. If we want the subscript $j$ to be always less than $0,$ we have to include $(e^{\beta})_{(j)}$ where $\beta$ is a positive root. It is clear $E_{m}=E_{i_{m},j_{m}}$ is a root vector of a positive root $$\beta_m:=\al_{i_m}+\al_{i_m+1}+\ldots+\al_{j_m-1}.$$  Above proposition gives us a new basis of $W_{L}$,
\begin{align*}
(e^{\beta_1})_{(-n_{1}^{1})}\ldots (e^{\beta_1})_{(-n_{1}^{k_{1}})}(e^{\beta_2})_{(-n_{2}^{1})}\ldots (e^{\beta_2})_{(-n_{2}^{k_{2}})}\ldots (e^{\beta_M})_{(-n_{M}^{k_{M}})} \bf{1}
\end{align*}
where $M=\frac{n(n-1)}{2}$, $n_{M}^{k_{M}}\geq 1$, $n_{m}^{p}\geq n_{m}^{p+1}+2$ $(1\leq p\leq k_{m}-1)$ and $n_{s}^{k_{s}}\geq k_{t}+1$ if $1\leq s<t\leq M$, $i_t<j_{s}\leq j_{t}$.

\end{rem}

\vspace{0.2cm}

\subsection{Feigin-Stoyanovsky type subspaces}

\ \

In this section, we consider Feigin-Stoyanovsky type subspaces of affine vertex algebra of type $A_{n}$ at level $1.$  We first consider the special case when $\omega=\omega_{1}$. For any element of the $A_{n}$ root lattice,
$$\al=m_{1}\al_1+m_{2}\al_2+\ldots+m_{n}\al_n,$$ we define a subspace of $W'_{\Lambda_{1,0}}$ as $$(W'_{\Lambda_{1,0}})^{{\al}}:=\left\{v\in W'_{\Lambda_{1,0}}|\;cwt(v)=\al \right\}.$$ It is not hard to see that $(W'_{\Lambda_{1,0}})^{\al}$ is nontrivial if and only if \color{red}  \color{black} $m_1 \geq m_2\geq\ldots\geq m_n\geq 0$. According to \cite[(3.8)]{trupvcevic2011characters}, we have  \[{\rm ch}[(W'_{\Lambda_{1,0}})^{{\al}}](q)=\frac{q^{\sum_{i=1}^{n}m_i^2-\sum_{i=1}^{n-1}m_im_{i+1}}}{(q)_{m_n}(q)_{m_{n-1}-m_n}\ldots (q)_{m_1-m_2}}. \] Then \begin{align*} {\rm ch}[W'_{\Lambda_{1,0}}](q)&=\displaystyle \sum_{0\leq m_{n}\leq\ldots\leq m_1}\frac{q^{\sum_{i=1}^{n}m_i^2-\sum_{i=1}^{n-1}m_im_{i+1}}}{(q)_{m_n}(q)_{m_{n-1}-m_n}\ldots (q)_{m_1-m_2}}\\ &=\sum_{(l_{1},\dots,l_{n})\in\mathbb{N}^{n}}\frac{q^{\sum_i^n l_i^2+\sum_{1\leq i<j\leq n}{l_{i}l_{j}}}}{(q)_{l_1}(q)_{l_2}\ldots (q)_{l_n}}. \end{align*}
Moreover, in this case, \[\Gamma=\left\{\beta_1:=\al_1,\beta_2:=\al_1+\al_2,\ldots,\beta_n:=\al_1+\ldots+\al_n\right\}.\]
Notice that $$L=\mathbb{Z}\beta_1\oplus\ldots\oplus \mathbb{Z}\beta_n$$ is a lattice with basis $\left\{\beta_1,\ldots,\beta_n\right\}$. Then we have \[W_{L}\cong W'_{\Lambda_{1,0}}.\] It is not hard to see that \begin{align*}&\langle \beta_i,\beta_i \rangle=2,\quad if\; 1\leq i\leq n \\ &\langle \beta_i,\beta_j \rangle=1,\quad if\; 1\leq i\neq j\leq n.\\ \end{align*}
According to Proposition \ref{prod}, we have that $C_{2}$-algebra of $W_L$ is $$\mathbb{C}[x_1,\ldots,x_n]/\langle x_{i}x_{j}|1\leq i\leq j\leq n\rangle.$$
By similar argument in previous section, we get \[HS_{q}(J_{\infty}(\mathbb{C}[x_1,\ldots,x_n]/\langle x_{i}x_{j}|1\leq i\leq j\leq n\rangle)])={\rm ch}[W_{{L}}](q)\] which implies isomorphism between $J_{\infty}(R_{W'_{\Lambda_{1,0}}})$ and $gr(W'_{\Lambda_{1,0}}).$ Similarly we can also prove the isomorphism in cases where $\omega=\omega_{i}$, $2\leq i\leq n$ by making use of \cite[(3.21)]{trupvcevic2011characters}.

\subsection{Principal subspaces and jet schemes  from graphs}

In this part we study principal subspaces and jet algebras coming from graphs. 
We begin from any graph $G$ with $k$ vertices and possibly with loops (and for simplicity we assume no double edges). We denote the vertices of $G$ by $\left\{v_1,v_2,\ldots,v_k\right\}$. We denote by $\Gamma:=\Gamma(G)$ the (symmetric) incidence matrix of $G$ and by $(L(\Gamma), \langle \  ,  \ \rangle )$ rank $k$  lattice with
basis $\alpha_1,...,\alpha_k$, such that $\langle \alpha_i,\alpha_j \rangle=(\Gamma)_{i,j}$.
The incidence matrix of the graph induces a quadratic form
$$\Gamma \rightarrow \frac{1}{2}Q(x_1,...,x_k),$$
where
$$Q(x_1,...,x_k)= \sum_{\substack{i,j=1 \\ v_i v_j \in E(G)}} ^k x_i x_j$$
where we sum over all edges $E(G)$.
Out of monomials appearing in the sum we form the infinite jet scheme  $J_\infty X(\Gamma)$ where $$R_\Gamma=\mathbb{C}[x_1,\ldots,x_k]/\langle \cup_{ v_i v_j \in E(G)} x_i x_j \rangle.$$
We let $W_{L(\Gamma)} \subset V_{L(\Gamma)}$ be the principal subspace corresponding to $\{e^{\alpha_i} \}_{i=1}^k$
inside the lattice vertex algebra $V_{L(\Gamma)}$. For simplicity we write $W_\Gamma$ for $W_{L(\Gamma)}$.

\begin{ex}
Consider the graph $\circ-\circ-\circ$. Then $\Gamma=\left[ \begin{array}{ccc} 0 & 1 & 0 \\ 1 & 0 & 1 \\ 0 & 1 & 0 \end{array}\right]$, and $W_{\Gamma}=\langle e^{\alpha_1},e^{\alpha_2}, e^{\alpha_3} \rangle$ where
$L=\mathbb{Z}\alpha_1 \oplus \mathbb{Z} \alpha_2 \oplus \mathbb{Z} \alpha_3$ with $\langle \alpha_1,\alpha_2 \rangle =\langle \alpha_2,\alpha_3 \rangle =1$ (zero otherwise),
$R_\Gamma=\mathbb{C}[x_1,x_2,x_3]/(x_1 x_2, x_2 x_3),$  and  $Q(x_1,x_2,x_3)=x_1 x_2+x_2 x_3$.
\end{ex}

\begin{thm} \label{graph}
If the bilinear form associated with $\Gamma$ is non-degenerate, that is $\Gamma$ is invertible, then there exists a unique conformal vector in lattice vertex algebra such that eigenvalue of $L_{(0)}$ defines grading such that:

\[wt(e^{\alpha_{i}})=\frac{3}{2} \quad \text{if} \quad \langle  \alpha_{i},\alpha_{i}\rangle=1,\]
\[ wt(e^{\alpha_{i}})=1 \quad \text{if} \quad \langle  \alpha_{i},\alpha_{i}\rangle=0.\] Moreover, the graded dimension is given by:

$${\rm ch}[W_{\Gamma}](q)=\sum_{n_{1},\ldots,n_{k}\geq 0}\frac{q^{n_{1}+n_{2}+\ldots+n_{k}+\frac{1}{2}Q(n_{1},\ldots,n_{k})}}{(q)_{n_{1}}\ldots(q)_{n_{k}}}.$$\end{thm}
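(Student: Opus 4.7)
The plan is to split the theorem into two parts: first, existence and uniqueness of the conformal vector with the prescribed weights; then, derivation of the graded dimension formula.

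For the conformal vector, I would start with the standard Heisenberg Virasoro element $\omega_{0}=\frac{1}{2}\sum_{i,j}(\Gamma^{-1})_{ij}\alpha_{i}(-1)\alpha_{j}(-1)\mathbf{1}$, which is well defined precisely because $\Gamma$ is invertible and under which $L_{(0)}e^{\alpha_{i}}=\frac{\Gamma_{ii}}{2}e^{\alpha_{i}}$. To raise all $L_{(0)}$-eigenvalues on the extremal vectors $e^{\alpha_{i}}$ by $1$, I would pass to the modified Virasoro vector $\omega=\omega_{0}+\eta(-2)\mathbf{1}$, where $\eta=\sum_{j}c_{j}\alpha_{j}$ is determined by the $k$ linear conditions $\langle \eta,\alpha_{i}\rangle=-1$, $1\leq i\leq k$. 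This is the linear system $\Gamma\mathbf{c}=-\mathbf{1}$, which admits a unique solution iff $\Gamma$ is invertible. A standard OPE check shows that $\omega$ is a Virasoro element with central charge $k-12\langle \eta,\eta\rangle$ and that $L_{(0)}e^{\alpha_{i}}=(\frac{\Gamma_{ii}}{2}+1)e^{\alpha_{i}}$, which equals $3/2$ when $\Gamma_{ii}=1$ and $1$ when $\Gamma_{ii}=0$. Uniqueness within the family of Heisenberg-type conformal vectors is immediate from uniqueness of $\eta$.

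Next I would decompose $W_{\Gamma}$ into colored-weight sectors indexed by $(n_{1},\dots,n_{k})\in\mathbb{N}^{k}$, where the sector contains vectors of $\mathfrak{h}$-charge $\sum_{i}n_{i}\alpha_{i}$. The lowest $L_{(0)}$-weight in that sector is attained by $e^{\sum_{i}n_{i}\alpha_{i}}$ and, using $\langle \eta,\alpha_{i}\rangle=-1$, equals
\[
\sum_{i}n_{i}+\frac{1}{2}Q(n_{1},\dots,n_{k}).
\]
From the lattice OPE $Y(e^{\alpha_{i}},z)e^{\alpha_{j}}=\epsilon(\alpha_{i},\alpha_{j})z^{\Gamma_{ij}}\exp(\cdots)e^{\alpha_{i}+\alpha_{j}}$ and the resulting (anti)commutation relations for the modes, I would produce a PBW-type spanning set of monomials $\prod_{i}\prod_{s}(e^{\alpha_{i}})_{(-m^{(i)}_{s})}\mathbf{1}$ subject to: within a single color $i$, weak inequalities $m^{(i)}_{s}\geq m^{(i)}_{s+1}$ when $\Gamma_{ii}=0$ and strict inequalities when $\Gamma_{ii}=1$, together with $m^{(i)}_{n_{i}}\geq 1$; between distinct colors $i\neq j$ with $\Gamma_{ij}=1$, a boundary condition $m^{(i)}_{n_{i}}\geq n_{j}+1$; and no cross condition when $\Gamma_{ij}=0$. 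These are the natural analogues of the difference-two and boundary conditions used in the $A_{n-1}$ case (compare Theorem \ref{ML} and its proof in the preceding subsection). Summing $q$-weights of the reduced monomials in sector $(n_{1},\dots,n_{k})$ yields $\frac{q^{\sum_{i}n_{i}+Q/2}}{\prod_{i}(q)_{n_{i}}}$, providing an upper bound on ${\rm ch}[W_{\Gamma}](q)$ equal to the claimed formula.

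The main obstacle is to show that this spanning set is in fact a basis, i.e., that no additional relations occur in $W_{\Gamma}$. The plan here is to combine the surjection $\psi:J_{\infty}(R_{W_{\Gamma}})\twoheadrightarrow gr(W_{\Gamma})$ with an explicit computation of $R_{W_{\Gamma}}$ along the lines of Propositions \ref{lattice} and \ref{c2p}: a weight and charge argument should identify $R_{W_{\Gamma}}$ with $\mathbb{C}[x_{1},\dots,x_{k}]/\langle x_{i}x_{j}:v_{i}v_{j}\in E(G)\rangle$, whose jet algebra admits the same difference-type spanning set via the complete lexicographic ordering of Section \ref{ordering}. To convert this into a matching lower bound for ${\rm ch}[W_{\Gamma}](q)$, one separates the spanning monomials by projecting onto charge subsectors in the ambient lattice vertex algebra $V_{L(\Gamma)}$. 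The resulting equality of Hilbert series then forces all inequalities to be equalities, yielding both the claimed character formula and, as a byproduct, that $\psi$ is an isomorphism.
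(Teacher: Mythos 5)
Your treatment of the conformal vector coincides with the paper's: both start from the standard quadratic Heisenberg vector, perturb by a vector of the form $\sum_i a_i(\alpha_i)_{(-2)}\mathbf{1}$, and observe that the resulting linear system on the coefficients is governed by the Gram matrix $\Gamma$, hence uniquely solvable exactly when $\Gamma$ is invertible. Your computation of the minimal conformal weight $\sum_i n_i+\frac{1}{2}Q(n_1,\ldots,n_k)$ in the charge sector $\sum_i n_i\alpha_i$ also agrees with the paper's (the paper writes the minimal vector as an explicit product of modes rather than as $e^{\sum_i n_i\alpha_i}$, but the weight is the same). Up to this point the two arguments are essentially identical.

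The gap is in your final step. The paper does not re-derive the combinatorial basis of $W_{\Gamma}$: it invokes Corollary 4.14 of \cite{milas2012lattice}, which already asserts that the quasi-particle monomials satisfying the difference and boundary conditions form a basis, and the character formula is then a direct summation over that basis. You instead attempt to prove the basis statement, and your mechanism for the matching lower bound does not close. Every inequality available to you points the same way: the surjection $\psi:J_{\infty}(R_{W_{\Gamma}})\twoheadrightarrow gr(W_{\Gamma})$ gives ${\rm ch}[W_{\Gamma}](q)\leq HS_{q}(J_{\infty}(R_{W_{\Gamma}}))$, while the ordered-monomial spanning sets bound both $HS_{q}(J_{\infty}(R_{W_{\Gamma}}))$ and ${\rm ch}[W_{\Gamma}](q)$ from above by the claimed fermionic sum. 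Nothing in this chain produces a lower bound on ${\rm ch}[W_{\Gamma}](q)$, so the assertion that ``equality of Hilbert series forces all inequalities to be equalities'' is circular. Moreover, projecting onto charge subsectors cannot separate the monomials you need to distinguish, since all monomials in a fixed sector $(n_1,\ldots,n_k)$ carry the same charge $\sum_i n_i\alpha_i$; the genuine linear-independence argument, as carried out in \cite{milas2012lattice}, requires finer projections onto Heisenberg components together with an induction. To repair the proof, either quote that basis result as the paper does, or supply the independence argument explicitly.
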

\begin{proof}

Clearly, we have the standard conformal vector in lattice vertex algebra given by $\omega_{st}=\frac{1}{2}\sum_{i=1}^{n}u^{(i)}_{(-1)}u^{(i)}_{(-1)}\mathbf{1}$ where $u^{(1)},\ldots,u^{(n)}$ is an orthonormal basis with respect to the bilinear form associated with $\Gamma.$ We know that $$L_{st}(0) (e^{\alpha_{i}})=\frac{\langle \alpha_{i},\alpha_{i}\rangle}{2}.$$ It is clear that by adding a linear combination of $\left\{(\alpha_{i})_{(-2)}\mathbf{1}\right\}_{i=1}^{n}$, we will still get a conformal vector. Now assume that $\omega_{st}+\sum_{i=1}^{n}a_{i}(\alpha_{i})_{(-2)}\mathbf{1}$ where $a_{i}\in\mathbb{C}$ would give us expected weights. Then we have a system of linear equations. The non-degeneracy of the bilinear form implies that there is an unique solutions set. Thus we always have a conformal vector with the grading:$$wt(e^{\alpha_{i}})=\frac{3}{2} \quad \text{if} \quad \langle  \alpha_{i},\alpha_{i}\rangle=1,$$
$$ wt(e^{\alpha_{i}})=1 \quad \text{if} \quad \langle  \alpha_{i},\alpha_{i}\rangle=0.$$

By applying \cite[Corollary 4.14]{milas2012lattice}, we can write a combinatorial basis of $W_{\Gamma}.$ 
Now let us use this basis to write down the character. Firstly, the generating function of colored partition into $(n_{1},n_{2},\ldots,n_{k})$ parts is $\frac{1}{(q)_{n_{1}}\ldots(q)_{n_{k}}}.$ It is clear that
\[{\rm ch}[W_{\Gamma}](q)=\sum_{k_{1},\ldots,k_{k}\geq 0}\frac{q^{wt(f_{(n_1,\ldots,n_k)})}}{(q)_{n_{1}}\ldots(q)_{n_{k}}},\] where $f_{(n_1,\ldots,n_k)}$ is the vector in $W_{\Gamma}$ of charge $(n_{1},\ldots,n_{k})$ with the minimal weight.  For the $n_{i}$ part, there is an unique element $u_{n_{i}}$ of the minimal weight which is \[e^{\al_i}_{(-1-\sum_{j=1}^{i-1}\langle  \alpha_{i},\alpha_{j} \rangle n_{j}-(n_i-1)\langle \al_i,\al_i \rangle)}\ldots e^{\al_i}_{(-1-\sum_{j=1}^{i-1}\langle  \alpha_{i},\alpha_{j} \rangle n_{j})}\bf{1}.\]  The weight of $u_{n_{i}}$ is \begin{align*}&\frac{n_{i}}{2}(2(\sum_{j=1}^{i-1}\langle  \alpha_{i},\alpha_{j} \rangle n_{j}+wt((e^{\alpha_{i}})_{(-1)}\textbf{1})+(n_{i}-1)\langle  \alpha_{i},\alpha_{i}\rangle)\\ &=\sum_{j=1}^{i-1}\langle  \alpha_{i},\alpha_{j}\rangle n_{i}n_{j}+\frac{n_{i}^{2}}{2}\langle  \alpha_{i},\alpha_{i}\rangle+(-\frac{\langle  \alpha_{i},\alpha_{i}\rangle}{2}+wt((e^{\alpha_{i}})_{(-1)}\textbf{1}))n_{i}. \end{align*}
Therefore \begin{align*} &wt(f_{(n_1,\ldots,n_k)})=\sum_{i=1}^{k} wt(u_{n_{i}})\\
    &=\sum_{i=1}^{k}\sum_{j=1}^{i-1}\langle  \alpha_{i},\alpha_{j}\rangle n_{i}n_{j}+\frac{n_{i}^{2}}{2}\langle  \alpha_{i},\alpha_{i}\rangle+(-\frac{\langle  \alpha_{i},\alpha_{i}\rangle}{2}+wt((e^{\alpha_{i}})_{(-1)}\textbf{1}))n_{i}\\&=n_{1}+n_{2}+\ldots+n_{k}+\frac{1}{2}Q(n_{1},\ldots,n_{k}).
\end{align*}
Thus we proved the claimed identity.
\end{proof}
\begin{rem}
If the lattice $L$ is degenerate, then $V_L$ has no conformal vector which can give us expected weights. But we can still view $W_L$ as a graded vertex algebra if we define the degree of $e^{\alpha_{i}}$ as above. Then the character formula is still valid for singular $\Gamma$.
\end{rem}

Before we prove next result, let us generalize \cite[Theorem 4.3.1]{penn2014lattice}.
\begin{prop}\label{pre} We have an isomorphism
\begin{align*}
   &  gr(W_{\Gamma}) \\
    & \cong \frac{\mathbb{C}[x_{i}(p)|1\leq i\leq k, p\leq -1]}{ \langle  \sum_{m=0}^{-l-1}\frac{(m+\langle  \alpha_{i},\alpha_{j}\rangle-1)!}{m!}\langle  \alpha_{i},\alpha_{j}\rangle x_{i}({-\langle  \alpha_{i},\alpha_{j}\rangle-m})x_{j}(l+m)|1\leq i,j\leq k,l\leq -1\rangle.} \end{align*}
\end{prop}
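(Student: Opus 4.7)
The plan is to adapt Penn's argument in \cite[Theorem 4.3.1]{penn2014lattice} to arbitrary graph lattices. First I would set up the surjection. Since $W_{\Gamma}$ is strongly generated by $\{e^{\alpha_i}\}_{i=1}^k$, its associated graded $gr(W_{\Gamma})$ is generated as a differential commutative algebra by the symbols of these vectors. This produces a natural surjective algebra homomorphism
\[
\Phi : \mathbb{C}[x_i(p) \mid 1 \leq i \leq k,\ p \leq -1] \twoheadrightarrow gr(W_{\Gamma}),
\]
sending $x_i(p)$ to the class of $(e^{\alpha_i})_{(p)} \mathbf{1}$. The task then reduces to identifying $\ker(\Phi)$.

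Next I would derive the stated relations from the lattice vertex algebra formula
\[
Y(e^{\alpha_i}, z)\, e^{\alpha_j} = \epsilon(\alpha_i, \alpha_j)\, z^{h}\, E^{-}(\alpha_i, z)\, e^{\alpha_i + \alpha_j}, \qquad h = \langle \alpha_i, \alpha_j \rangle,
\]
with $E^{-}(\alpha, z) = \exp\bigl(\sum_{n > 0} \tfrac{\alpha(-n)}{n} z^n\bigr)$. Extracting the $z^0$ coefficient yields the base identity $(e^{\alpha_i})_{(-h)} e^{\alpha_j} = 0$ in $V_L$, and hence in $W_{\Gamma}$. Setting $f_0 := (e^{\alpha_i})_{(-h)} (e^{\alpha_j})_{(-1)} \mathbf{1}$, I would then apply the translation operator $T$ repeatedly, using $T(a_{(p)} \mathbf{1}) = -p\, a_{(p-1)} \mathbf{1}$ together with the Leibniz rule. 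A direct induction on $s = -l-1$ shows that $T^s f_0$ equals, up to an overall factor of $h!$, precisely the relation stated in the proposition. Consequently, every stated relation already vanishes in $W_{\Gamma}$, hence lies in $\ker(\Phi)$.

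It remains to show that these exhaust $\ker(\Phi)$, which I would establish by a spanning set and character argument. Using the complete lexicographic ordering of Section \ref{ordering}, I would extract the leading monomial of each stated relation and use these to reduce ordered monomials of $\mathbb{C}[x_i(p)]$ to a canonical form satisfying difference-type conditions analogous to the $Gh$-conditions of Subsection \ref{N2}. Let $\mathcal{S}$ denote the resulting reduced family; it spans the quotient algebra. To close the argument, I would compare the Hilbert series of $\mathcal{S}$ with
\[
\mathrm{ch}[W_{\Gamma}](q) = \sum_{n_1, \ldots, n_k \geq 0} \frac{q^{\,n_1 + \cdots + n_k + \tfrac{1}{2} Q(n_1, \ldots, n_k)}}{(q)_{n_1} \cdots (q)_{n_k}}
\]
from Theorem \ref{graph}; once the two generating functions agree, surjectivity of $\Phi$ forces $\mathcal{S}$ to descend to a basis and $\Phi$ to be an isomorphism.

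The main obstacle is this Hilbert series comparison. One has to organize $\mathcal{S}$ by charges $(n_1, \ldots, n_k)$ recording the number of $x_i$-factors of each color, use the Euler identity $\sum_{b_1 \leq \cdots \leq b_n} q^{\sum b_i} = 1/(q)_n$ to produce the factor $\prod_i (q)_{n_i}^{-1}$, and finally verify that the minimal-weight ordered monomial at charge $\mathbf{n}$ has weight exactly $n_1 + \cdots + n_k + \tfrac{1}{2} Q(\mathbf{n})$. The quadratic form $Q$ should emerge from the boundary constraints enforced by the leading terms of the base relations $x_i(-h) x_j(-1)$, which match the pairings $\Gamma_{ij} = \langle \alpha_i, \alpha_j \rangle$ in the adjacency matrix.
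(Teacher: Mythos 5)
Your proposal is correct in outline, and the first half (the surjection $\Phi$ and the verification that the stated relations lie in $\ker(\Phi)$ via the $z^0$-coefficient of $Y(e^{\alpha_i},z)e^{\alpha_j}$ and repeated application of $T$) matches the paper. But for the hard inclusion $\ker(\Phi)\subseteq I_\Gamma$ you take a genuinely different route. The paper follows Penn's minimal-counterexample argument: assuming a weight- and charge-homogeneous $a\in\ker(\pi)\setminus I_\Gamma$ of minimal weight, it factors $a=b\,x_r(-1)$, introduces the shift operator $\mathbf{e^{\alpha_r}}$ on $W_\Gamma$ (and its lift $\mathbf{x_r}$ on the polynomial ring) to conclude $\pi(b)=0$, and then derives a contradiction whether $b\in I_\Gamma$ or not; no character formula is needed. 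Your route instead extracts leading terms of the relations in the complete lexicographic ordering of Section \ref{ordering}, reduces to a spanning set $\mathcal{S}$ of the quotient, and closes by matching its generating function against ${\rm ch}[W_\Gamma](q)$ from Theorem \ref{graph}. This is a legitimate strategy --- it is exactly how the paper handles the $A_{n-1}$ principal subspaces and the $N=1$ minimal models, and if carried out it would prove Theorem \ref{prin} and the identity $HS_q(J_\infty(R_\Gamma))={\rm ch}[W_\Gamma](q)$ in the same stroke --- but be aware that the step you defer (showing the charge-$\mathbf{n}$ piece of $\mathcal{S}$ contributes exactly $q^{n_1+\cdots+n_k+\frac12 Q(\mathbf{n})}/\prod_i (q)_{n_i}$) is where essentially all the work sits, and it requires checking that the difference and boundary conditions cut out by your leading terms coincide with the quasi-particle conditions of the combinatorial basis in \cite[Corollary 4.14]{milas2012lattice}; in particular, for vertices with loops ($\langle\alpha_i,\alpha_i\rangle=1$) the variables $x_i(p)$ must be treated as odd, since in a genuinely commutative polynomial ring the $l=-2$ relation would force $x_i(-1)x_i(-2)=0$ and your spanning set would come out too small to match the character. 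The paper's argument buys independence from the character computation at the cost of relying on Penn's structural results; yours buys an explicit monomial basis at the cost of a nontrivial combinatorial verification.
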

\begin{proof}
First, we define a map $\pi$ from \begin{align*}
      \mathbb{C}[x_{i}(p)|1\leq i\leq k, p\leq -1] \end{align*} to $gr(W_{\Gamma})$ by sending $x_{i}(p)$ to $e^{\alpha_{i}}_{(p)}\bf{1}$.  We denote the ideal$$\langle  \sum_{m=0}^{-l-1}\frac{(m+\langle  \alpha_{i},\alpha_{j}\rangle-1)!}{m!}\langle  \alpha_{i},\alpha_{j}\rangle x_{i}(-\langle  \alpha_{i},\alpha_{j}\rangle-m)x_{j}(l+m)|1\leq i,j\leq k,l\leq -1\rangle$$ by $I_{\Gamma}$. Next we use an argument from \cite{penn2014lattice} to show that $I_{\Gamma}\subset ker(\pi)$.

      We prove $ker(\pi)\subset I_{\Gamma}$ by contradiction. Suppose there exists an element $a\in  \mathbb{C}[x_{i}(p)|1\leq i\leq k, p\leq -1]$ such that $a\in ker(\pi)$ and $a\notin I_{\Gamma}.$ Suppose $a$ is homogeneous with respect to weight and charge. Choose $r$ such that $a$ contains some element $x_{r}(p)$ as a factor. We assume that $a$ has the minimum weight among all elements that satisfy above conditions. Again from the same argument from \cite{penn2014lattice}, this $a$ can be written as $ux_{r}(-1) $ where $u\in \mathbb{C}[x_{i}(p)|1\leq i\leq k, p\leq -1]$. We prove the case when $\langle \alpha,\alpha \rangle=0.$ For other cases, it is proved in \cite{penn2014lattice}. Firstly we define a map ${\bf e^{\alpha_{r}}}:W_{\Gamma}\rightarrow W_{\Gamma}$ as $${\bf e^{\alpha_{r}}}((e^{\alpha_{j}})_{(m)}\textbf{1})=(e^{\alpha_{j}})_{(m)}(e^{\alpha_{r}})_{(-1)}\textbf{1}.$$ Then we lift this map to $${\bf x_{r}}: \mathbb{C}[x_{i}(p)|1\leq i\leq k, p\leq -1]\rightarrow  \mathbb{C}[x_{i}(p)|1\leq i\leq k, p\leq -1] $$ which is defined as $${\bf x_{r}}(x_{i}(j))=x_{i}(j)x_{r}(-1).$$
      Since $a\in ker(\pi),$ $\pi(a)=\pi(bx_{r}(-1))=\pi(b)(e^{r})_{(-1)}\mathbf{1}=0.$ Then $${\bf {e^{\alpha_{r}}}}(\pi(b)(e^{r})_{(-1)}\textbf{1})=\pi(b)\textbf{1}=0 $$ which implies that $b\in ker(\pi).$ If $b\in I_{\Gamma},$ then $a={\bf x_{r}}(b)\in {\bf x_{r}}I_{\Gamma}\in I_{\Gamma}$ which contradicts with our assumption. If $b\notin I_{\Gamma},$ then $b$ is an element such that $b\in ker(\pi)$ and $b\notin I_{\Gamma}$ but with the weight strictly less than the weight of $a$. This also contradicts our assumption. Thus we proved the claim.
\end{proof}

\begin{thm}\label{prin} We have that $$gr(W_{\Gamma})\cong J_{\infty} (\mathbb{C}[y_{1},y_{2},\ldots, y_{k}]/\langle \langle   \alpha_{i},\alpha_{j}\rangle y_{i}y_{j}|1\leq i,j\leq k\rangle).$$
\end{thm}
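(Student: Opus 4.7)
The plan is to deduce the theorem directly from Proposition \ref{pre} by observing that, for incidence matrices of (simple) graphs, the generators of the ideal appearing there collapse to precisely the defining relations of the jet superalgebra of $R_{W_\Gamma}=\mathbb{C}[y_1,\ldots,y_k]/\langle \langle \alpha_i,\alpha_j\rangle y_iy_j\rangle$. The key observation is that, since $\Gamma$ has no multi-edges and has loop-entries in $\{0,1\}$, every $N_{ij}:=\langle \alpha_i,\alpha_j\rangle$ lies in $\{0,1\}$, so the otherwise bulky binomial coefficients in Proposition \ref{pre} trivialize.

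First, I would simplify Proposition \ref{pre}: for $N_{ij}=0$ the generator vanishes identically, and for $N_{ij}=1$ the factor $\frac{(m+N_{ij}-1)!}{m!}N_{ij}$ equals $1$, so the generator collapses to
\[
G_{i,j,l}\;=\;\sum_{m=0}^{-l-1} x_i(-1-m)\,x_j(l+m),\qquad l\leq -1.
\]
A short expansion
\[
x_i(z)x_j(z)=\sum_{r,s\leq -1}x_i(r)x_j(s)\,z^{-r-s-2}
\]
together with the substitution $r=-1-m$, $s=l+m$ (which gives $-r-s-2=-l-1\geq 0$) shows that $G_{i,j,l}$ is precisely the coefficient of $z^{-l-1}$ in $x_i(z)x_j(z)$; as $l$ runs through $\{-1,-2,\ldots\}$ one recovers every Fourier coefficient (all of which live in non-negative powers of $z$).

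Next I would translate this into jet-algebra language. Under the tautological identification $x_i(p)\leftrightarrow y_i(p)$, the ideal $I_\Gamma$ of Proposition \ref{pre} becomes the ideal generated by the Fourier coefficients of $N_{ij}\,y_i(z)y_j(z)$ for all pairs $i,j$. In the (super)commutative setting these coefficients agree, up to nonzero scalar factors of the form $(-1)^{n}n!$ coming from Leibniz and the paper's sign convention $T(y_i(-1-n))=(-1-n)y_i(-2-n)$, with $\{T^n(N_{ij}y_iy_j)\}_{n\geq 0}$, and hence define the same ideal. By the very definition in Section 3, this is the defining ideal of $J_\infty(R_{W_\Gamma})$. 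Combining this identification with Proposition \ref{pre} gives $gr(W_\Gamma)\cong J_\infty(R_{W_\Gamma})$; equivalently, the canonical surjection $\psi$ is an isomorphism, and as a bonus the character formula in Theorem \ref{graph} is realized as the Hilbert series of the right-hand side.

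The only subtlety worth flagging is the loop case $N_{ii}=1$, where $y_i$ (equivalently $x_i$) is odd: then $y_i^2$ and every generator $G_{i,i,l}$ vanish tautologically by supercommutativity, so the loop contributions match trivially on both sides and do not require special handling. I do not see a substantive obstacle beyond this $\mathbb{Z}_2$-grading bookkeeping and the routine verification of the scalar normalizations in the Leibniz expansion — essentially all of the vertex-algebraic work has already been carried out in Proposition \ref{pre}, and the content of Theorem \ref{prin} is the purely combinatorial observation that the relations in $I_\Gamma$ are Fourier coefficients of the fields $y_i(z)y_j(z)$.
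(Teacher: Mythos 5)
Your proposal is correct and follows essentially the same route as the paper: both deduce the theorem from Proposition \ref{pre} by checking that the generators of $I_\Gamma$ coincide, up to nonzero scalars, with $T^{-l-1}\bigl(\langle\alpha_i,\alpha_j\rangle y_iy_j\bigr)$, i.e.\ with the defining differential ideal of $J_\infty(R_{W_\Gamma})$. Your explicit simplification of the coefficients using $\langle\alpha_i,\alpha_j\rangle\in\{0,1\}$ and the Fourier-coefficient repackaging are just a more careful spelling-out of the constants $c_m^l$ that the paper leaves implicit.
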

\begin{proof}

   From the definition of jet superalgebra,  we know that $$T^{(-l-1)}(\langle  \alpha_{i},\alpha_{j}\rangle y_{i}y_{j})=\sum_{m=0}^{-l-1}c_{m}^{l}\langle  \alpha_{i},\alpha_{j}\rangle y_{i}(-\langle  \alpha_{i},\alpha_{j}\rangle -m)y_{j}(l+m)$$ where $c_{m}^{l}$ is a constant coefficient. Therefore $$J_{\infty}(\mathbb{C}[y_{1},y_{2},\ldots,y_{k}]/\langle  \langle  \alpha_{i},\alpha_{j}\rangle y_{i}y_{j}|1\leq i,j\leq k\rangle )$$ has quotient relation $$\langle  \sum_{m=0}^{-l-1}c_{m}^{l}\langle  \alpha_{i},\alpha_{j}\rangle y_{i}(-\langle  \alpha_{i},\alpha_{j}\rangle -m)y_{j}(l+m)|1\leq i,j\leq k,l\leq -1\rangle .$$ Together with Proposition \ref{pre}, we get an isomorphism of differential algebras induced from the map $\psi:x_{i}(-1)\rightarrow y_{i}(-1)$.
\end{proof}
When $\langle  \alpha_{i},\alpha_{i}\rangle =1$, we increase the degree of $y_{i}(-1) $ by $\frac{1}{2}$.
Then clearly we have
$$HS_{q}(J_{\infty}(R_\Gamma))={\rm ch}[W_{\Gamma}](q).$$

\subsection{Positive lattices}
\color{red}
\color{black}

Given a lattice $L$ of rank $n$ with a $\mathbb{Z}$-basis $\left\{\al_i\right\}_{i=1}^{n}$,
We say that the basis is positive if we have $\langle \alpha_{i},\alpha_{j}\rangle\geq 0$ for $1\leq i\leq j\leq n$. In this part, we study principal subspaces associated with positive bases. Examples we studied in previous two sections are such principal subspaces. Now let us prove a more general result about the map $\psi$ and such principal subspaces.

\begin{thm}
For a lattice $L$ of rank $n$ with a positive basis, the map $\psi$ is an isomorphism for $W_{L}$ if and only if its positive basis satisfies $\langle \alpha_{i},\alpha_{i}\rangle=a$, where $a=0$ or $1$ or $2$ and $\langle \alpha_{i},\alpha_{j}\rangle=b$, where $b=0$ or $1$.
\end{thm}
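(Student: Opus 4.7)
The plan is to establish both directions by comparing the Hilbert series $HS_q(J_\infty(R_{W_L}))$ with the graded dimension ${\rm ch}[W_L](q)$; since $\psi$ is always surjective, injectivity is equivalent to equality of these two series.

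For the ``if'' direction, assume every diagonal entry satisfies $\langle\alpha_i,\alpha_i\rangle\in\{0,1,2\}$ and every off-diagonal entry $\langle\alpha_i,\alpha_j\rangle\in\{0,1\}$. First I would compute $R_{W_L}$ using Proposition \ref{prod}: the products $\overline{e^{\alpha_i}}\cdot\overline{e^{\alpha_j}}$ vanish precisely when $\langle\alpha_i,\alpha_j\rangle>0$, and no additional relations appear, so $R_{W_L}\cong\mathbb{C}[x_1,\ldots,x_n]/\langle x_ix_j\mid\langle\alpha_i,\alpha_j\rangle>0\rangle$. Next I would extend the Penn-style argument from Proposition \ref{pre} and Theorem \ref{prin} to accommodate diagonal entries equal to $2$; the key point is that the relations defining $gr(W_L)$ can still be matched with the jet-algebra relations obtained by applying $T^j$ to $x_ix_j$, because the coefficient identities from $Y(e^{\alpha_i},z)e^{\alpha_j}$ only involve the pairings listed in our range. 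Together with the combinatorial character formula generalizing Theorem \ref{graph}, namely
\[
{\rm ch}[W_L](q)=\sum_{(n_1,\ldots,n_k)\in\mathbb{N}^k}\frac{q^{\sum_i wt(e^{\alpha_i})n_i+\frac{1}{2}Q(n_1,\ldots,n_k)}}{(q)_{n_1}\cdots(q)_{n_k}},
\]
this matches $HS_q(J_\infty(R_{W_L}))$ term by term, and surjectivity of $\psi$ finishes the argument.

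For the ``only if'' direction, suppose some entry violates the constraints, so that either $\langle\alpha_i,\alpha_i\rangle=a\geq 3$ or $\langle\alpha_i,\alpha_j\rangle=b\geq 2$ for some $i\neq j$. The crucial observation is that $R_{W_L}$ depends only on \emph{which} pairings are positive, not on their magnitudes: the defining ideal remains $\langle x_ix_j\mid\langle\alpha_i,\alpha_j\rangle>0\rangle$. Consequently $HS_q(J_\infty(R_{W_L}))$ retains the same ``difference-two at distance one'' structure, with exponent $n_in_j$ for the cross term and $n_i^2$ on the diagonal. On the other hand the combinatorial basis of $W_L$ from \cite{milas2012lattice} forces stronger boundary and difference conditions dictated by the actual values of $\langle\alpha_i,\alpha_j\rangle$, so ${\rm ch}[W_L](q)$ carries an exponent $\langle\alpha_i,\alpha_j\rangle n_in_j$ (respectively $\tfrac{1}{2}\langle\alpha_i,\alpha_i\rangle n_i^2$). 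Comparing the two $q$-series in the lowest weight where they first disagree yields $HS_q(J_\infty(R_{W_L}))>{\rm ch}[W_L](q)$, and hence $\ker(\psi)\neq 0$.

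The main obstacle will be the ``if'' direction in the mixed case where $a=2$ generators coexist with $a\in\{0,1\}$ generators, since the Penn-style calculation in Proposition \ref{pre} was carried out only for incidence matrices with entries in $\{0,1\}$. A convenient workaround is to observe that the resulting lattice is an orthogonal-sum/extension of an $A_n$-type piece (handled in the subsection on principal subspaces) and a graph piece (handled by Theorem \ref{prin}); after splitting accordingly the combined combinatorial bases factorize, the character and the jet Hilbert series factor compatibly, and injectivity follows from the previously established cases.
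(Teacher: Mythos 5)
Your overall strategy is the same as the paper's: since $\psi$ is always surjective, injectivity is tested by comparing $HS_q(J_\infty(R_{W_L}))$ with ${\rm ch}[W_L](q)$. For the ``if'' direction the paper likewise disposes of all bases with $\langle\alpha_i,\alpha_i\rangle\in\{0,1\}$ by citing Theorem \ref{prin}, and then treats norm-$2$ generators by writing down the difference-two basis of $J_\infty(\mathbb{C}[x]/\langle x^2\rangle)$ and matching it against the combinatorial basis of \cite[Corollary 4.14]{milas2012lattice}. For the ``only if'' direction it reduces, exactly as you do, to the rank-one sublattice $\mathbb{Z}\alpha_i$ when $\langle\alpha_i,\alpha_i\rangle\geq 3$ and to $\mathbb{Z}\alpha_i\oplus\mathbb{Z}\alpha_j$ when $\langle\alpha_i,\alpha_j\rangle\geq 2$, using precisely your key observation that $R_{W_L}$ (hence the jet algebra, with its difference-two and boundary-one conditions coming from $T^k(x_ix_j)$) sees only the sign of $\langle\alpha_i,\alpha_j\rangle$, whereas the basis of $gr(W_L)$ from \cite{milas2012lattice} obeys difference-$\langle\alpha_i,\alpha_i\rangle$ and boundary-$\langle\alpha_i,\alpha_j\rangle$ conditions; one minor refinement in the paper is that for odd norm the relevant comparison is with $J_\infty(\bigwedge[x])$ rather than $J_\infty(\mathbb{C}[x]/\langle x^2\rangle)$, which does not affect the conclusion.

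The one place you genuinely diverge is your proposed workaround for the mixed case of the ``if'' direction, and that step would fail. A positive basis with Gram matrix $\left(\begin{smallmatrix}2&1\\1&1\end{smallmatrix}\right)$ satisfies the hypotheses but admits no orthogonal splitting into an ``$A_n$-type piece'' and a ``graph piece'': the two generators pair nontrivially, $(e^{\alpha_1})_{(-1)}e^{\alpha_2}=0$ is a genuine cross-relation, and neither the character nor the jet Hilbert series factors. So the factorization argument cannot close the mixed case; what is actually needed is an extension of the Penn-type presentation in Proposition \ref{pre} and of the character formula in Theorem \ref{graph} to allow diagonal entries equal to $2$ directly. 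To be fair, the paper's own proof is also thin here --- after invoking Theorem \ref{prin} it asserts that only the orthogonal case $\langle\alpha_i,\alpha_j\rangle=2\delta_{i,j}$ remains, leaving mixed Gram matrices to the special families treated in the earlier subsections ($A_{n-1}$ principal subspaces and Feigin--Stoyanovsky type subspaces) --- but your explicit reliance on an orthogonal-sum decomposition makes the gap concrete rather than implicit.
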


\begin{proof}

First let us assume that the positive basis of the lattice $L$ satisfies given conditions. Notice that according to Theorem \ref{prin}, we know that when $\langle \alpha_{i},\alpha_{i}\rangle=a$, where $a=0$ or $1$ and $\langle \alpha_{i},\alpha_{j}\rangle=b$, where $b=0$ or $1$, the map $\psi$ is an isomorphism for the principal subspace. Now the only case we need consider is the positive basis for which $\langle \alpha_{i},\alpha_{j}\rangle=2\delta_{i,j}.$ It is not hard to see that $J_{\infty}(\mathbb{C}[x]/x^{2})$ has a basis
$$\left\{x_{(m_{1})}x_{(m_{2})}\dots x_{(m_{k})}| m_{j-1}\leq m_{j}-2, k\geq 0 \right\}.$$  Thus $J_{\infty}(\mathbb{C}[x_{1},x_{2}, \ldots, x_{n}]/\langle  x_{1}^{2},x_{2}^{2}, \ldots, x_{n}^{2}\rangle)$ has a basis
\begin{align*}&\{(x_{i_{1}})_{(m^{1}_{1})}(x_{i_{1}})_{(m^{1}_{2})} \dots (x_{i_{1}})_{(m^{1}_{k_{1}})} \ldots (x_{i_{n}})_{(m^{n}_{1})}(x_{i_{n}})_{(m^{n}_{2})} \dots (x_{i_{n}})_{(m^{n}_{k_{n}})} \\ &| m^{i}_{j-1}\leq m^{i}_{j}-2, \;  1\leq j\leq k_{i}-1 \}.\end{align*} Notice that the $C_{2}$-algebra of $W_{L}$ is
$$\mathbb{C}[x_{1},\ldots,x_{n}]/\langle x_{1}^{2},\ldots,x_{{n}}^{2}\rangle$$ Now the map $\psi$  is sending $(x_{i})_{(-1)}$ to $(e^{\alpha_i})_{(-1)}{\bf 1}$. According to \cite[Corollary 4.14]{milas2012lattice}, the image of the basis of $J_{\infty}(R_{W_{L}})$ is the basis of $gr(W_{L})$. Thus the map $\psi$ is an isomorphism.

Next, let us prove that if the basis does not satisfy given conditions, the map $\psi$ is not an isomorphism. We will consider two cases:

\begin{itemize}
    \item Suppose that for one simple root $\alpha_{i}$, we have $\langle \alpha_{i},\alpha_{i}\rangle \geq 3.$ Without loss generality, we prove that $\psi$ is not an isomorphism when lattice $L=\mathbb{Z}\alpha_{i}.$  In this case, from  \cite[Corollary 4.14]{milas2012lattice}, the basis of $gr^{F}W_{L}$ is \[\left\{(e^{\alpha_{i}})_{(m^{1})}(e^{\alpha_{i}})_{(m^{2})} \dots (e^{\alpha_{i}})_{(m^{k})} {\bf 1} | m_{j-1}\leq m_{j}-\langle \alpha_{i},\alpha_{i}\rangle, \;  m_{k}< 0,\;k\geq 0 \right\}.\] It is easy to see that neither $J_{\infty}(\mathbb{C}[x]/(x^{2}))$ nor $J_{\infty}(\bigwedge[x])$ has the same corresponding basis (here $\bigwedge$ denotes the exterior algebra).
   \item Suppose that there exists two distinct roots $\alpha_{i},\alpha_{j}$ where $i<j$ such that $\langle \alpha_{i},\alpha_{j}\rangle\geq 2$. Without loss of generality, we assume $L=\mathbb{Z}\al_i\oplus \mathbb{Z}\al_j$, then the basis of $J_{\infty}(W_L)$ is
    \begin{align*}&\{(x_{i})_{(-1-m_{1})}(x_{i})_{(-1-m_{2})}\ldots (x_{i})_{(-1-m_{k})}(x_{j})_{(-1-n_{1})}(x_{j})_{(-1-n_{2})}\ldots (x_{j})_{(-1-n_{l})}\\&|m_1-m_2\geq \langle \al_i,\al_i \rangle ,\;n_1-n_2\geq\langle \al_j,\al_j \rangle,\; m_k\geq l,n_l\geq 0 \}. \end{align*} Meanwhile according to \cite[Corollary 4.14]{milas2012lattice}, the image of this basis under $\psi$ strictly contains the basis of $W_{L}$. We do not have isomorphism.
\end{itemize}
Thus we proved the statement.
\end{proof}

\subsection{New character formulas for \texorpdfstring{${\rm ch}[W_{\Gamma}]$}{Lg}}
Here we continue from Section 5.3. If the graph $\Gamma$ is of Dynkin type $A_k$ - path of length $k-1$) or $C_k$ (cycle of length $k$)
we expect that the generating series $HS_{q}(J_{\infty}(R_{\Gamma}))$ has much better behaved combinatorial and perhaps even mock modular properties. We now present "sum of tails" formulas for $HS_{q}(J_{\infty}(R_{A_k}))$ for
several low "rank" cases. To simplify notation we let $$A_k(q):=HS_{q}(J_{\infty}(R_{A_k})).$$
From Theorem \ref{graph} we have a fermionic formula
\begin{equation} \label{master}
 A_k(q)=   \sum_{n_1,n_2,\dotsc,n_k\ge0}
\frac{q^{n_1+n_2+\dotsb+n_k+n_1n_2+n_2n_3+\dotsb+n_{k-1}n_k}}
{(q)_{n_1}^{} (q)_{n_2}^{} \dotsm (q)_{n_k}^{} },
\end{equation}

Next formulas are recently given by Jennings-Shaffer and Milas \cite{JM}.
\begin{thm}  \label{JM}
We have
\begin{itemize}
\item

\vspace{.2cm}

$\dis{A_2(q)=
	\frac{1}{(1-q)(q)_\infty}}$

\item
\vspace{.2cm}

$\dis{A_3(q)=q^{-1} \left( \frac{1}{(q)^2_\infty}-
		\frac{1}{(q)_\infty} \right)}$

\item

\vspace{.2cm}
$\dis{A_4(q)=\frac{q^{-1}}{(q)_\infty^2}\sum_{n\ge1}\frac{q^n}{1-q^n}}$

\item

\vspace{.2cm}

$\dis{A_5(q)=\frac{1}{(q)_\infty^2}\sum_{n\ge0}\frac{q^n}{(q)_{n}(1-q^{n+1})^2}}.$

\item

$\dis{A_6(q)=\frac{1}{(q)_\infty^2}\sum_{n,m\ge0}\frac{q^{n+m+nm}}{(q)_{n+1}(q)_{m+1}}}.$

\end{itemize}

\end{thm}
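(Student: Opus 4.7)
The plan is to reduce the master fermionic formula \eqref{master} for $A_k(q)$ by iteratively summing variables from the two ends of the path, using the $q$-binomial theorem
$$\sum_{n\geq 0}\frac{z^n}{(q)_n}=\frac{1}{(z;q)_\infty}.$$
Specializing $z=q^{1+n_{i+1}}$ inside the sum over an endpoint variable $n_i$ produces the factor $(q)_{n_{i+1}}/(q)_\infty$, which cancels against the denominator factor $(q)_{n_{i+1}}$ belonging to its unique neighbor. That neighbor is then exposed without any $(q)$-factorial, so summing it amounts to a geometric series $\sum_{n_{i+1}\geq 0} q^{n_{i+1}(1+n_{i+2})}=1/(1-q^{1+n_{i+2}})$. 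Each such pair of reductions peels off two vertices from the path, contributing $(q)_\infty^{-1}$ together with a possible boundary factor $(1-q^{n_{i+2}+1})^{-1}$.

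Applying this reduction symmetrically from both ends handles the cases $k=2,4,5,6$ directly. For $k=2$ one round yields $1/((1-q)(q)_\infty)$. For $k=4$, summing $n_1$ and $n_4$ cancels $(q)_{n_2}$ and $(q)_{n_3}$; one further geometric summation of $n_2$ leaves a single sum in $n_3$ which, after the shift $m=n_3+1$, produces the claimed $\sum_{m\geq 1}q^m/(1-q^m)$ form. For $k=5$, summing the two endpoints and then the two now-exposed neighbors $n_2,n_4$ leaves the central variable $n_3$ with an extra factor $(1-q^{n_3+1})^{-2}$. For $k=6$, the same scheme terminates at the double sum over $n_3,n_4$ with $(q)_{n_3+1}(q)_{n_4+1}$ in the denominator, matching the stated formula after relabeling.

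The case $k=3$ is marginally more delicate: summing $n_1$ and $n_3$ cancels $(q)_{n_2}$ twice and leaves
$$A_3(q)=\frac{1}{(q)_\infty^2}\sum_{n\geq 0}q^n (q)_n.$$
The remaining sum is evaluated by the telescoping identity $q^m(q)_{m-1}=(q)_{m-1}-(q)_m$, which gives $\sum_{m\geq 1}q^m (q)_{m-1}=1-(q)_\infty$, equivalently $\sum_{n\geq 0}q^n (q)_n = q^{-1}(1-(q)_\infty)$; substituting yields the stated closed form. The main obstacle to extending this program to $k\geq 7$ is that the iterative reduction there stalls at multidimensional sums which resist elementary $q$-series evaluation, and one conjecturally enters the territory of false and mock theta functions.
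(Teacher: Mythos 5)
Your proposal is correct, and I verified each case. The key mechanism works exactly as you describe: for an endpoint variable the exponent is $n_1(1+n_2)$, so Euler's identity gives $\sum_{n_1\ge 0}q^{n_1(1+n_2)}/(q)_{n_1}=1/(q^{1+n_2};q)_\infty=(q)_{n_2}/(q)_\infty$, which cancels the neighbor's factorial; the subsequent geometric summations and the shift $(q)_{n}(1-q^{n+1})=(q)_{n+1}$ then yield the stated forms for $k=4,5,6$, and the telescoping $q^m(q)_{m-1}=(q)_{m-1}-(q)_m$ correctly evaluates $\sum_{n\ge0}q^n(q)_n=q^{-1}\bigl(1-(q)_\infty\bigr)$ for $k=3$. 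It is worth pointing out that the paper itself contains no proof of this theorem: the formulas are attributed to Jennings-Shaffer and Milas and simply quoted from that reference. So your argument is not merely an alternative route but a self-contained elementary derivation of results the paper takes as external input; what it buys is transparency (only the $q$-binomial theorem and a telescoping identity are used) and a clear structural explanation of why the path graph $A_k$ reduces two vertices at a time from each end. Your closing remark about the obstruction for $k\ge 7$ is consistent with the paper's observation that only partial ``bosonic'' representations are known in higher rank, where false and mock theta functions enter.
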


Moreover, for cyclic graphs $C_k$-graphs we have fermionic formulas for  $C_k(q):=HS_{q}(J_{\infty}(R_{C_k}))$ valid for $k \geq 3$
\begin{equation} \label{master2}
C_k(q)=\sum_{n_1,n_2,\dotsc,n_k\ge0}
\frac{q^{n_1+n_2+\dotsb+n_k+n_1n_2+n_2n_3+\dotsb+n_{k-1}n_k+n_k n_1}}
{(q)_{n_1}^{} (q)_{n_2}^{} \dotsm (q)_{n_k}^{} }.
\end{equation}
Again we have partial results for "bosonic" representations for $3-$ and $5$-cycle graphs \cite{JM}.
\begin{prop} \label{JM2} We have

\vspace{0.2cm}

$\dis{C_3(q)=\frac{1}{(q)_\infty} \sum_{n \geq 0} \frac{q^n}{ (q^{n+1})_{n+1} }}.$

\vspace{0.2cm}

$\dis{C_5(q)=\frac{q^{-1}}{(q)_\infty^2} \sum_{n \geq 1}\frac{nq^n}{1-q^{n}}}.$

\end{prop}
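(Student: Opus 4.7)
The plan is to reduce the five- and three-fold $q$-hypergeometric sums \eqref{master2} to single-variable sums by repeated application of the Cauchy/Heine identity
\[\sum_{n\ge 0}\frac{(a;q)_{n}z^{n}}{(q;q)_{n}}=\frac{(az;q)_{\infty}}{(z;q)_{\infty}},\]
together with its specialisation $\sum_{n\ge 0}z^{n}/(q;q)_{n}=1/(z;q)_{\infty}$. The idea is to peel off one summation variable at a time until a visibly one-dimensional sum remains, which can then be matched directly against the right-hand side.

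For $C_{3}(q)$, I would fix $n_{2}$ and compute the inner double sum over $n_{1}$ and $n_{3}$. The cross term $n_{1}n_{3}$ blocks a direct factorisation, but summing $n_{3}$ first via $\sum_{n_{3}\ge 0}q^{n_{3}(1+n_{2}+n_{1})}/(q;q)_{n_{3}}=1/(q^{1+n_{1}+n_{2}};q)_{\infty}$ leaves a sum in $n_{1}$ that, after rewriting $1/(q^{1+n_{1}+n_{2}};q)_{\infty}=(q^{1+n_{2}};q)_{n_{1}}/(q^{1+n_{2}};q)_{\infty}$, fits Heine's identity with $a=z=q^{1+n_{2}}$. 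This yields $(q^{2+2n_{2}};q)_{\infty}/(q^{1+n_{2}};q)_{\infty}^{2}$. Rewriting the infinite Pochhammers as $(q)_{\infty}/(q)_{k}$ and cancelling, the inner factor simplifies to $(q)_{n_{2}}^{2}/((q)_{\infty}(q)_{2n_{2}+1})$, and the elementary identity $(q^{n+1};q)_{n+1}=(q)_{2n+1}/(q)_{n}$ then brings the remaining one-variable sum into the claimed form.

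For $C_{5}(q)$ the analogous strategy does not close up directly: every variable appears in two cross terms with its cycle-neighbours, so eliminating any single variable introduces an infinite Pochhammer symbol coupling the two neighbours. A natural attempt is to sum out the independent set $\{n_{1},n_{3}\}$ of the $5$-cycle first, reducing to a three-fold sum over $n_{2},n_{4},n_{5}$ whose summand contains two coupled infinite Pochhammer factors together with the remaining cross term $n_{4}n_{5}$, and then to attack this residual sum by an Andrews--Bailey transform, or by re-expanding one of the Pochhammers as a $q$-series to symmetrise. Alternatively, rewriting the right-hand side as the divisor generating series $\sum_{n\ge 1}\sigma_{1}(n)q^{n}$ one recognises an essentially quasi-modular Eisenstein-type object, which suggests a Bailey lattice or Watson-type identity tailored to produce Lambert series.

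The main obstacle is precisely the $C_{5}$ case. The $C_{3}$ identity collapses in a single round of Cauchy evaluations because only one variable needs to be retained; but for $C_{5}$ the odd cycle length prevents any pair of eliminations from decoupling the remaining cross terms, and the appearance of the divisor series on the right-hand side indicates that the final collapse cannot be produced by further purely hypergeometric manipulation. Locating the precise Bailey pair (or Rogers-type transform) that matches the reduced residual sum to the Lambert series $\sum_{n\ge 1}nq^{n}/(1-q^{n})$ is, in my view, the heart of the proof, and is likely the place where the argument of Jennings-Shaffer and Milas goes beyond an elementary iteration of Heine's identity.
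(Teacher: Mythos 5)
Your treatment of $C_3(q)$ is correct and complete: summing $n_3$ first with Euler's identity, rewriting $1/(q^{1+n_1+n_2};q)_\infty=(q^{1+n_2};q)_{n_1}/(q^{1+n_2};q)_\infty$, and applying the $q$-binomial theorem with $a=z=q^{1+n_2}$ gives the inner double sum as $(q^{2+2n_2};q)_\infty/(q^{1+n_2};q)_\infty^2=(q)_{n_2}^2/\bigl((q)_\infty (q)_{2n_2+1}\bigr)$, and the elementary identity $(q)_{2n+1}=(q)_n\,(q^{n+1};q)_{n+1}$ then collapses the remaining single sum to the stated expression. This is actually more than the paper itself supplies: Proposition \ref{JM2} is quoted from Jennings-Shaffer and Milas \cite{JM} with no proof in the text, so your derivation of the $C_3$ formula is a genuine self-contained argument.

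The $C_5$ identity, however, is not proved. What you offer there is a catalogue of possible strategies --- summing over the independent set $\{n_1,n_3\}$ of the $5$-cycle, an Andrews--Bailey transform, a Watson-type identity --- none of which is carried out, and you explicitly concede that the step matching the residual three-fold sum to the Lambert series $\sum_{n\ge 1} nq^n/(1-q^n)$ is missing. Since the proposition asserts both identities, this is a genuine gap: half of the statement rests on an unexecuted plan. Your diagnosis of the obstruction (on an odd cycle, eliminating variables by Cauchy summation never fully decouples the remaining cross terms) is plausible, but identifying the obstacle is not the same as overcoming it; as written, the $C_5$ formula must still be taken on faith from \cite{JM}.
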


\subsection{Combinatorial interpretation}

Next we present  combinatorial interpretations of formulas in Theorem \ref{JM} and Proposition \ref{JM2}. For simplicity, in several
formulas we factored out a (power of)  Euler factor which can be easily interpreted as the number of
(colored) partitions.

\begin{thm} We have:
\begin{itemize}
\item $A_2(q)$ counts the number of partitions of $2n$ with all parts either even or equal to 1.
\item $q A_3(q)$ counts the number of partitions of $n+1$ into two kinds of parts with the first kind of parts used in each
partition.
\item $q (q)_\infty A_4(q)$ counts the total number of parts in all partitions of $n$, which is also sum of largest parts of all
partitions of $n$.
\item $(q)_\infty^2 A_5(q)$  is the sum of the numbers of times that the largest part appears in each partition of $n$.
\item $q (q)_\infty^2  A_6(q)$ counts twice the total number of parts in all partitions of $n$ minus the number of partitions of $n$.
\item $(q)_\infty C_3(q)$ counts the number of partitions of n such that twice the least part is bigger than the greatest part.
\item $q (q)_\infty C_5(q)$ counts the sum of all parts of all partitions of $n$, also known as ${\rm n p}(n)$.
\end{itemize}

\end{thm}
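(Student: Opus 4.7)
The plan is to verify each combinatorial claim by rewriting the closed form from Theorem~\ref{JM} or Proposition~\ref{JM2} as a manifestly combinatorial generating function and matching it against the claimed partition statistic. The main algebraic tools are Euler's identity $\frac{1}{(q)_\infty} = \sum_{n \geq 0} p(n) q^n$, the Lambert series expansions $\sum_{n \geq 1}\frac{q^n}{1-q^n} = \sum_{n} d(n) q^n$ and $\sum_{n \geq 1}\frac{n q^n}{1-q^n} = \sum_{n} \sigma(n) q^n$, together with the standard convolution identities $\sum_{k=1}^{n} d(k) p(n-k) = $ (total number of parts of all partitions of $n$) and the classical Euler identity $\sum_{k=1}^{n} \sigma(k) p(n-k) = n\, p(n)$.

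The cases $A_2$, $qA_3$, $q(q)_\infty A_4$, and $q(q)_\infty C_5$ follow almost immediately. For $A_2$, a partition of $2n$ into parts that are even or equal to $1$ decomposes uniquely into an even number of $1$'s and a partition of the rest into even parts; after the substitution $q^2 \mapsto q$ the joint generating function becomes $\frac{1}{(1-q)(q)_\infty}$. For $qA_3(q) = \frac{1}{(q)_\infty^2}-\frac{1}{(q)_\infty}$, the first term counts partitions with two colored kinds of parts and the second removes those using only the second color, leaving partitions with at least one part of the first kind. The $A_4$ and $C_5$ cases follow directly from the Lambert expansions and the convolution identities above. For $(q)_\infty C_3(q) = \sum_{n \geq 0}\frac{q^n}{(q^{n+1})_{n+1}}$, I would interpret $\frac{q^n}{(q^{n+1})_{n+1}}$ as the generating function for partitions whose smallest part equals $n$ and whose parts all lie in $\{n, n+1, \ldots, 2n\}$, which is precisely the condition $2\min > \max$; summing over $n$ then gives the claim.

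The $A_5$ and $A_6$ cases are more delicate because the closed forms are genuine $q$-hypergeometric sums. For $A_5$, I would split each nonempty partition according to its largest part $k$ and its multiplicity $m \geq 1$: the remaining parts contribute $\frac{1}{(q)_{k-1}}$, the marked multiplicity contributes $\sum_{m\geq 1} m\,q^{mk} = \frac{q^k}{(1-q^k)^2}$, and summing over $k \geq 1$ together with a shift of indices should produce $\sum_{n\geq 0}\frac{q^{n+1}}{(q)_n(1-q^{n+1})^2}$, matching the right-hand side up to the appropriate normalization. For $A_6$, a similar bivariate decomposition (tracking, say, the largest part and an additional marked part) should rewrite the double sum $\sum_{n,m \geq 0}\frac{q^{n+m+nm}}{(q)_{n+1}(q)_{m+1}}$ as a combination of the $A_4$ generating function and $\frac{1}{(q)_\infty}$, yielding the expression $2 \cdot (\text{total parts}) - p(n)$.

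The main obstacle will be the $A_5$ and $A_6$ cases: one must carry out the bijections between partitions with a marked (largest or doubly marked) part and the $q$-hypergeometric summands, and match the precise index shifts and normalizations appearing in Theorem~\ref{JM}. In particular, the bivariate decomposition behind $A_6$ requires identifying what $(n,m)$ jointly parametrize in the underlying partition, and then verifying algebraically that the resulting generating function agrees with $2\cdot(\text{total parts})-p(n)$ via the $A_4$ identity.
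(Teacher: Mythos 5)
Your treatment of $A_2$, $qA_3$, $q(q)_\infty A_4$ and $q(q)_\infty C_5$ is correct and essentially coincides with the paper's: the paper also reads off $A_2$ by pairing the $1$'s and halving the even parts, writes $qA_3=\frac{1}{(q)_\infty}\bigl(\frac{1}{(q)_\infty}-1\bigr)$, and for $A_4$ and $C_5$ uses the identities $\frac{1}{(q)_\infty}\sum_{n\ge1}\frac{q^n}{1-q^n}=\sum_{n\ge1}\frac{nq^n}{(q)_n}$ (obtained by differentiating $\frac{1}{(xq;q)_\infty}=\sum_{n\ge0}\frac{x^nq^n}{(q)_n}$) and $\bigl(q\frac{d}{dq}\bigr)\frac{1}{(q)_\infty}=\sum_{n\ge1}np(n)q^n$; your divisor-sum and $\sigma$-convolution arguments are standard equivalents. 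Your $A_5$ decomposition by largest part and its multiplicity is sound and is actually more self-contained than the paper, which simply defers that case to \cite{JM}. One slip to fix in $C_3$: the summand $\frac{q^n}{(q^{n+1})_{n+1}}$ has denominator $(1-q^{n+1})\cdots(1-q^{2n+1})$, so it does \emph{not} generate partitions with smallest part $n$ and parts in $\{n,\dots,2n\}$. The correct peeling by smallest part $s$ gives $\frac{q^s}{(q^s;q)_s}$ (parts in $\{s,\dots,2s-1\}$, at least one equal to $s$), and summing over $s\ge1$ yields $q\sum_{n\ge0}\frac{q^n}{(q^{n+1})_{n+1}}$ after reindexing $s=n+1$; so the interpretation carries an extra factor of $q$ (an off-by-one the paper itself glosses over by appealing to the ``well-known'' mock theta interpretation).

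The genuine gap is $A_6$. The paper does not decompose the double sum combinatorially; it quotes the $q$-series identity from \cite{JM} expressing $\sum_{n,m\ge0}\frac{q^{n+m+nm}}{(q)_{n+1}(q)_{m+1}}$ in terms of $\frac{1}{(q)_\infty}\sum_{n\ge1}\frac{q^n}{1-q^n}$ and $\frac{1}{(q)_\infty}$, and then reuses the $A_4$ interpretation. Your proposal to ``rewrite the double sum as a combination of the $A_4$ generating function and $\frac{1}{(q)_\infty}$'' is precisely a restatement of that identity, and you have not supplied either the bijection (you explicitly leave open what $(n,m)$ parametrize) or an algebraic derivation. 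As written, the $A_6$ item is a plan rather than a proof; you need either to prove the \cite{JM} identity or to cite it, since nothing in your argument currently bridges the double sum to the statistic $2\cdot(\text{total number of parts})-p(n)$.
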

\begin{proof}

For $A_2(q)$, observe that ${\rm Coeff}_{q^{n}}A_2(q)=p(1)+p(2)+\cdots + p(n)$, where $p(i)$ is the number of partitions of $i$. The number of $1's$ must be even, say $2k$, so we have to compute the number of
partitions of $2n-2k$ where all parts are even. This is given by $p(n-k)$. Then summing over $k$ gives the claim.

The interpretation for the $A_3(q)$ series, is clear because we can also write $$q A_3(q)=\frac{1}{(q)_\infty}
	\left(
		\frac{1}{(q)_\infty}-1
	\right).$$
Extracting the coefficient on the right-hand side gives $p_2(n)-p(n)$, where $p_2(i)$ denotes the number of two colored partitions.

For $A_4(q)$, this can be seen from identity
$\frac{\sum_{n \geq 1} \frac{q^n}{1-q^n}}{(q)_\infty}=\sum_{n \geq 1} \frac{nq^n}{(q)_n} ,$
which follows by taking the $(x \frac{d}{d})$ derivative of $\frac{1}{(xq;q)_\infty}=\sum_{n \geq 0} \frac{x^n q^n}{(q)_n}$.
This clearly counts the total number of parts in all partitions of $n$.

The $(q)_\infty^2 A_5(q)$ case was already discussed in  \cite{JM}.

For $(q)_\infty^2 A_6(q)$, this follows from another identity given in \cite{JM}:
$$\frac{1}{(q)_\infty^2}\sum_{n,m\ge0}\frac{q^{n+m+nm}}{(q)_{n+1}(q)_{m+1}}=\frac{q^{-1}}{(q)^2_\infty} \left(2 \sum_{n \geq 1} \frac{q^n}{(1-q^n)(q)_\infty}+1-\frac{1}{(q)_\infty} \right),$$
together with a previous observation that
$ \frac{\sum_{n \geq 1}\frac{q^n}{1-q^n}}{(q)_\infty}$ counts the total number of parts in all partitions of $n$.

For $(q)_\infty C_3(q)$ we use a well-known interpretation for the fifth order mock theta function, and finally
for $(q)_\infty C_5(q)$ we observe the formula $$\left(q \frac{d}{dq} \right) \frac{1}{(q)_\infty}=\frac{1}{(q)_\infty} \sum_{n \geq 1}\frac{nq^n}{1-q^{n}}=\sum_{n \geq 1} n p(n)q^n$$ as claimed.

\end{proof}

\begin{rem} It is interesting to observe that the numerators of  $C_3(q)$ and $C_5(q)$ are mock modular forms, and thus
$C_3(q)$ and $C_5(q)$ are {\em mixed} mock.
Completion of the Ramanujan fifth order mock theta   $\sum_{n \geq 0} \frac{q^n}{ (q^{n+1})_{n+1}} $
function is well-documented \cite{BFOR} . For $\sum_{n \geq 1} \frac{nq^n}{1-q^n}$ we only have to observe that adding $-\frac{1}{24}$ to the numerator gives $E_2(\tau)$,  the weight two quasimodular Eisenstein series, which is known to be mock.
\end{rem}
\section{\texorpdfstring{$N=1$}{Lg}  superconformal vertex algebra}

In this section we consider the rational $N=1$ vertex superalgebra $L_{c_{2,4k}}^{N=1}$ $(k\geq 1)$ associated to $N=1$ superconformal $(2,4k)$-minimal models \cite{adamovic1997rationality}. Here the central charge is $c_{2,4k}=\frac{3}{2}(1-\frac{2(4k-1)^{2}}{8k})$.

 According to \cite{melzer1994supersymmetric,milas2007characters}, we know that the normalized character of $L_{c_{2,4k}}^{N=1}$ (without the $q^{-c/24}$ factor) is:
\begin{align*} {\rm ch}[L_{c_{2,4k}}^{N=1} ](q)& =\displaystyle\prod^{\infty}_{\substack{n=1\\n\not \equiv 2(\text{mod}\;2)\\n\not \equiv 0,\;\pm 1 (\text{mod}\;4k)}} \frac{1}{(1-q^{\frac{n}{2}})}  \\&=\displaystyle\sum_{m_{1},\ldots,m_{k-1} \geq 0} \frac{(-q^{\frac{1}{2}})_{N_{1}}q^{\frac{1}{2}N_{1}^{2}+N_{2}^{2}+\ldots+N_{k+1}^{2}+N_{(s+1)/2}+N_{(s+3)/2}+\ldots+N_{k-1}}}{(q)_{m_{1}}(q)_{m_{2}}\ldots (q)_{m_{k-1}}}.\end{align*} And the fermionic character formula is the generating function
(cf. \cite{melzer1994supersymmetric})
 $${\rm ch}[L_{c_{2,4k}}^{N=1}](q)=\sum_{n=0}^{\infty}D_{k,1}(n)q^{\frac{n}{2}}$$ of the number of partitions of $D_{k,1}(n)$ of $\frac{n}{2}$ in the form $\frac{n}{2}=b_{1}+\ldots+b_{m}$ $(b_{j}\in \frac{1}{2}\mathbb{Z}_{\geq 1})$ where  $b_{1},\ldots, b_{m}$ satisfy the following conditions:

\begin{itemize}
    \item no half-odd integer is repeated.
    \item $b_{j}\geq b_{j+1},$ $b_{m}\geq \frac{3}{2}$,
    \item $b_{j}-b_{j+k-1}\geq 1$ if $b_{j}\in \mathbb{Z}+\frac{1}{2},$
    \item $b_{j}-b_{j+k-1}> 1$ if $b_{j}\in \mathbb{Z}.$
\end{itemize}
Since $N=1$ vertex superalgebra $L_{c_{2,4}}^{N=1}$ is isomorphic to $\mathbb{C}$, we only need consider $L_{c_{2,4k}}^{N=1}$ where $k>1$. First let us find the $C_2$-algebra of $L_{c_{2,4k}}^{N=1}$. According to \cite[Section 4]{milas2007characters}, we know that the null vector in universal algebra which survives inside the $C_{2}$-algebra is $L_{(-2)}^{k-1}G_{(-\frac{3}{2})}\mathbf{1}.$ Moreover if we let $G_{(-\frac{1}{2})}$ act on the null vector, we get another null vector which survives in $C_{2}$-algebra, i.e $L_{(-2)}^{k}\mathbf{1}.$ These two null vectors in the vacuum algebra  generate the whole quotient ideal of $R_{L_{c_{2,4k}}^{N=1}}$. Thus $R_{L_{c_{2,4k}}^{N=1}}$ is isomorphic to superalgebra $\mathbb{C}[l,g]/\langle  l^{k},l^{k-1}g\rangle $ where $g$ is an odd element.

We are going to prove that $\psi$ is an isomorphism. We identify $l$, $g$ with $l(-2)$, $g(-\frac{3}{2})$, respectively, inside the jet superalgebra.

It is clear that $J_{\infty}(\mathbb{C}[l,g]/\langle  l^{k},l^{k-1}g\rangle )$ is isomorphic to
$$\mathbb{C}[l(-2-i),g(-\frac{3}{2}-j)|i,j\geq 0]/\langle  l(z)^{k},l(z)^{k-1}g(z)\rangle $$ where $l(z)=\displaystyle \sum_{n\in \mathbb{N}}l(-2-n)z^{n}$, $g(z)=\displaystyle \sum_{n\in \mathbb{N}}g(-\frac{3}{2}-n)z^{n}$ and $\langle  l(z)^{k},l(z)^{k-1}g(z)\rangle$ is the ideal generated by the Fourier coefficients of $l(z)^{k},l(z)^{k-1}g(z)$.  We define {\em ordered monomial} in $J_{\infty}(\mathbb{C}[l,g]/\langle  l^{k},l^{k-1}g\rangle )$ to be a monomial of the form $$ l(-2-n)^{a_{1}}g(-\frac{3}{2}-n))^{b_{1}}l(-1-n)^{a_{2}}g(-\frac{3}{2}-n+1))^{b_{2}}\ldots l(-2)^{a_{n+1}}g(-\frac{3}{2})^{b_{n+1}}$$ where $n\geq 0.$ Then we have a complete lexicographic ordering on all ordered monomials according to Section \ref{ordering}.

We know that all ordered monomials constitute a spanning set of the jet superalgebra. Following the similar argument in Section \ref{N2}, we can make use of the quotient relation to impose some conditions on the spanning set to get a smaller spanning set. Firstly since all variables $g(k)'s$ are odd, no two $g(k)$ can appear in the ordered monomial. The leading term of any coefficient of $z^{nk}$ in $l(z)^{k}$ is $l(-2-n)^{k}$. Thus $l(-2-n)^{k}$ should not appear as a segment of any element in spanning set. Similarly we can list further leading terms in the quotient:
\begin{itemize}
\item Leading term of the coefficient of $z^{nk}$ in $l(z)^{k-1}g(z)$: $$l(-2-n)^{k-1}g(-\frac{3}{2}-n).$$
\item Leading term of the coefficient of $z^{n(k-1-i)+(n-1)i+n}$ in $l(z)^{k-1}g(z)$: $$l(-2-n)^{k-1-i}g(-\frac{3}{2}-n)l(-2-n+1)^{i}\quad  (0\leq i \leq k-1).$$
\end{itemize}
Now we obtain a smaller spanning set where above three type leading terms can not appear inside any ordered monomial. More precisely, any element in this spanning set is of the form $$w(b_{1})w(b_{2})\ldots w(b_{m})$$ where $b_{i}\geq b_{i+1}$,  $w(a)=l(a)$ if $a\in \mathbb{Z}$ and $w(a)=g(a)$ if $a\in \frac{1}{2}+\mathbb{Z}.$ And the fact that $g(a)$ is odd implies that no half-odd-integer is repeated in $\left\{b_{1},b_{2},\ldots,b_{m}\right\}.$   Moreover we have the condition \[b_{j}-b_{j-k+1}>1,\quad \text{if}\quad b_{j}\in \mathbb{Z},\] because \begin{align*}&l(-2-n)^{k},\quad l(-2-n)^{k-1}g(-\frac{3}{2}-n),\\ &l(-2-n)^{k-1-i}g(-\frac{3}{2}-n)l(-2-n+1)^{i} \quad (1\leq i \leq k-1)\end{align*} are leading terms of some elements in the quotient ideal. We also have a condition  \[b_{j}-b_{j-k+2}\geq1 \quad \text{if}\quad  b_{j}\in \frac{1}{2}\mathbb{Z},\] because \[g(-\frac{3}{2}-n)l(-2-n+1)^{k-1}\] is the leading term of some element in the quotient ideal. So we have  $$HS_{q}(J_{\infty}(\mathbb{C}[l,g]/\langle  l^{k},l^{k-1}g\rangle)\leq \sum_{n=0}^{\infty}D_{k,1}(n)q^{\frac{n}{2}}={\rm ch}[gr(L_{c_{2,4k}}^{N=1})](q).$$ Meanwhile the surjectivity of $\psi$ implies that $$HS_{q}(J_{\infty}(\mathbb{C}[l,g]/\langle  l^{k},l^{k-1}g\rangle )\geq {\rm ch}[gr(L_{c_{2,4k}}^{N=1})](q).$$ Thus $HS_{q}(J_{\infty}(\mathbb{C}[l,g]/\langle  l^{k},l^{k-1}g\rangle )= {\rm ch}[gr(L_{c_{2,4k}}^{N=1})](q)$ and $\psi$ is an isomorphism.
It implies that  above spanning set is a basis of the jet superalgebra. The image of basis of jet superalgebra under map $\psi$ is a basis of  $gr(L_{c_{2,4k}}^{N=1}).$

 We have following result which is a super-analog of  \cite[Theorem 16.13]{van2018chiral}:
\begin{thm}
Let $p'>p\geq 2$ satisfy $\frac{p'-p}{2}$ and $p$ are coprime positive integers. We let $L_{c_{p,p'}}^{N=1}$ denote the simple $N=1$ vertex superalgebra associated with $N=1$ superconformal $(p,p')$-minimal model of central charge $c_{p,p'}=\frac{3}{2}(1-\frac{2(p'-p)^{2}}{pp'})$. Then the map $\psi$ is an isomorphism if and only if $(p,p')=(2,4k)$, $(k\geq 1).$
\end{thm}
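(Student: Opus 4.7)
The ``if'' direction follows from the analysis of $L_{c_{2,4k}}^{N=1}$ given just above, so only the converse requires work. The hypothesis that $(p'-p)/2$ and $p$ are coprime positive integers with $p'>p\geq 2$ forces $p=2$ to yield precisely $p'=4k$; hence the remaining case is $p\geq 3$. My plan is to carry out, for such $(p,p')$, the same three-step pattern that succeeded for $(2,4k)$: (i) compute $R_V=R_{L_{c_{p,p'}}^{N=1}}$ explicitly as a quotient of $\mathbb{C}[l,g]$ with $l=\overline{L_{(-2)}\mathbf{1}}$ and the odd variable $g=\overline{G_{(-3/2)}\mathbf{1}}$; (ii) use the Fourier coefficients of the generating relations together with the complete lexicographic ordering from Section~\ref{ordering} to derive an upper bound for ${\rm HS}_q(J_\infty(R_V))$; (iii) compare this upper bound against ${\rm ch}[L_{c_{p,p'}}^{N=1}](q)$ and extract a strict inequality in some degree, thereby producing a non-zero element of $\ker\psi$.

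For step (i), the Kac determinant formula for the Neveu--Schwarz algebra at $c_{p,p'}$ identifies the maximal submodule of the vacuum Verma module as generated by two distinguished singular vectors $S_1,S_2$. Arguing as in the $(2,4k)$ analysis, one projects $S_1,S_2$ and the vectors $G_{(-1/2)}S_i$ into $R_V$ to obtain generating relations $\bar S_1, \bar S_2 \in \mathbb{C}[l,g]$, giving $R_V\cong \mathbb{C}[l,g]/\langle \bar S_1,\bar S_2\rangle$. For step (ii), the Fourier coefficients of $\bar S_i(z)$ and their $T$-derivatives provide an explicit finite list of leading terms that any ordered monomial in $l(-2-i),g(-3/2-j)$ must avoid, giving an upper bound on ${\rm HS}_q(J_\infty(R_V))$ in closed fermionic form. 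Combined with the surjectivity of $\psi$ and the lower bound ${\rm ch}[L_{c_{p,p'}}^{N=1}](q)$, it is enough to show that this fermionic upper bound strictly exceeds the character in some coefficient.

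The crux, and main obstacle, lies in step (iii): for $(2,4k)$ the two null vectors $L_{(-2)}^{k-1}G_{(-3/2)}\mathbf{1}$ and $L_{(-2)}^k\mathbf{1}$ yield exactly the difference-$1$ and difference-$2$ conditions underlying Melzer's formula, and the bound is sharp. For $p\geq 3$, Melzer's fermionic character involves a multi-dimensional summation over $(m_1,\dots,m_{(p-1)/2-1})$ with a $(-q^{1/2})_{N_1}$ prefactor, a shape that cannot be reproduced by difference conditions generated differentially from only two relations in $\mathbb{C}[l,g]$. I therefore expect the inequality ${\rm HS}_q(J_\infty(R_V))>{\rm ch}[L_{c_{p,p'}}^{N=1}](q)$ to hold. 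To make this rigorous I plan to first verify it directly in the smallest case $(p,p')=(3,5)$ (the $N=1$ tricritical Ising model) by an explicit low-degree enumeration of ordered monomials, exhibiting a concrete element of $\ker\psi$; then propagate the strict inequality to general $(p,p')$ with $p\geq 3$ by producing a singular vector of $V_{c_{p,p'}}^{N=1}$ lying outside the differential closure of $\bar S_1,\bar S_2$, whose existence is read off the Feigin--Fuchs embedding diagram of the Neveu--Schwarz vacuum Verma module and which must map to a non-zero element in $J_\infty(R_V)$ under the leading-term ordering.
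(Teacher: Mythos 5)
The ``if'' direction is fine and matches the paper. For the converse, however, your plan has a genuine logical gap at its core. In step (ii) you extract leading terms of the Fourier coefficients of the relations to prune the set of ordered monomials; this produces a \emph{spanning} set and hence only an \emph{upper} bound on $HS_q(J_\infty(R_V))$. But to conclude that $\psi$ is not injective you need the opposite inequality: surjectivity already gives $HS_q(J_\infty(R_V))\geq {\rm ch}[L^{N=1}_{c_{p,p'}}](q)$ coefficientwise, and non-injectivity is equivalent to this being strict somewhere, i.e.\ to a \emph{lower} bound on the Hilbert series exceeding the character. Showing that your fermionic upper bound strictly exceeds the character, as you propose, proves nothing --- the true Hilbert series could still collapse to the character. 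The remaining content of step (iii) (``I expect the inequality to hold'', verify $(3,5)$ by hand, ``propagate'' via a singular vector outside the differential closure read off the Feigin--Fuchs diagram) is a research plan rather than an argument, and the claim that a multi-dimensional fermionic sum ``cannot be reproduced'' by two differential relations is heuristic: a given series can admit many fermionic presentations. You also present $R_V$ uniformly as $\mathbb{C}[l,g]/\langle\bar S_1,\bar S_2\rangle$, whereas the coprimality hypothesis forces $p,p'$ to have equal parity and the two parities behave differently: for $p,p'$ both odd only the relation $l^{|c_{p,p'}|}$ survives in $R_V$, so $g$ is a free odd generator of the jet superalgebra and non-injectivity is immediate by counting.

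The paper avoids your step (iii) entirely by an observation you missed: $R_V$, and hence $J_\infty(R_V)$ and its Hilbert series, depends only on the parity class and on the single integer $|c_{p,p'}|=\frac{(p-1)(p'-1)}{4}+\frac{1+(-1)^{pp'}}{8}$. In the even--even case $J_\infty(R_V)$ is literally the differential algebra $J_\infty(\mathbb{C}[l,g]/\langle l^k,l^{k-1}g\rangle)$ with $k=|c_{p,p'}|$, whose Hilbert series was already computed (in the preceding $(2,4k)$ analysis) to equal ${\rm ch}[L^{N=1}_{c_{2,4k}}](q)$. If $\psi$ were injective for some even $(p,p')\neq(2,4k)$, the character of $L^{N=1}_{c_{p,p'}}$ would have to coincide with that of $L^{N=1}_{c_{2,4k}}$, which is ruled out by inspecting the numerator of Melzer's bosonic character formula: no two $N=1$ minimal models share a character. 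This reduces the entire converse to a previously established computation plus a distinctness-of-characters statement, with no new combinatorics or singular-vector analysis required.
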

\begin{proof}
We first consider $C_2$-algebra of $L_{c_{p,p'}}^{N=1}$. We let \[|c_{p,p'}|=\frac{(p-1)(p'-1)}{4}+\frac{1+(-1)^{pp'}}{8}\in \mathbb{N}.\] When $p$ and $p'$ are both even, according to \cite[Section 4]{milas2007characters}, there are two null vectors which survive in $R_{V_{c_{p,p'}}^{N=1}}$, i.e. $L_{(-2)}^{|c_{p,p'}|}\mathbf{1}$ and $L_{(-2)}^{|c_{p,p'}|-1}G_{-\frac{3}{2}}\mathbf{1}$. They generate the quotient ideal of $R_{V_{c_{p,p'}}^{N=1}}$ in the vacuum algebra.  In this case, the $C_{2}$-algebra $R_{L_{c_{p,p'}}^{N=1}}$ is isomorphic to $$\mathbb{C}[l,g]/\langle l^{|c_{p,p'}|}, l^{|c_{p,p'}|-1}g \rangle.$$ When $p$ and $p'$ are both odd, again from \cite[Section 4]{milas2007characters}, the null vector $L_{(-2)}^{|c_{p,p'}|}\bf{1}$ generates the quotient ideal of $R_{L_{c_{p,p'}}^{N=1}}.$ The $C_{2}$-algebra is isomorphic to $$\mathbb{C}[l,g]/\langle l^{|c_{p,p'}|}\rangle.$$

It is clear that $HS_{q}(J_{\infty}(\mathbb{C}[l,g]/\langle l^{|c_{p,p'}|}\rangle )$ does not equal to ${\rm ch}[L_{c_{p,p'}}^{N=1}](q)$ when $p$ and $p'$ are both odd. Thus $\psi$ is not an isomorphism in this case. Let $p$ and $p'$ be both even. Suppose $(p,p')\notin \left\{(2,4k)\;|\;(k\geq 1)\right\}$ and $\psi$ is an isomorphism for $L_{c_{p,p'}}^{N=1}$. Then \[HS_{q}(J_{\infty}(\mathbb{C}[l,g]/\langle l^{|c_{p,p'}|}, l^{|c_{p,p'}|-1}g \rangle )={\rm ch}[L_{c_{p,p'}}^{N=1}](q).\] On the other hand, we have shown that \[HS_{q}(J_{\infty}(\mathbb{C}[l,g]/\langle l^{k},l^{k-1}g\rangle )={\rm ch}[L_{c_{2,4k}}^{N=1}](q),\;(k\geq 1).\] Therefore the character of $L_{c_{p,p'}}^{N=1}$ must coincide with the character of $L_{c_{2,4k}}^{N=1}$ for some $k$. But according to \cite{melzer1994supersymmetric}, the character of $L_{c_{p,p'}}^{N=1}$ is \[{\rm ch}[L_{c_{p,p'}}^{N=1}](q)=\frac{ \prod_{i \geq 1} (1+q^{i-1/2})}{\prod_{i \geq 1}(1-q^i)} \sum_{j \in
\mathbb{Z}}\left(q^{\frac{{j(jpp'+p'-p)}}{2}}-q^{\frac{{(jp+1)(jp'+1)}}{2}} \right),\] and it is easy to verify from the numerator that no two $N=1$ minimal vertex algebras have the same character. This is a contradiction. Thus the statement is proved.

\end{proof}

\section{Extended Virasoro vertex algebras}
 For a simple Virasoro vertex algebra $L_{Vir}(c_{2,2k+1},0)$ coming from $(2,2k+1)$-minimal model, according to \cite{feigin1993coinvariants}, we know that $R_{L_{Vir}(c_{2,2k+1},0)}\cong \mathbb{C}[x]/(x^{k})$ and $\psi$ is an isomorphism. Let $p$ and $p'$ be two positive coprime integers satisfying $p>p'\geq 2$. It is easy to see that $\psi$ is an isomorphism if and only if $(p,p')=(2,2k+1)$ (see \cite[Theorem 16.13]{van2018chiral}). Recently, the authors displayed the kernel of $\psi$ \cite[Theorem 2]{van2020singular} for the $c=\frac12$ Ising model vertex algebra  $L_{Vir}(c_{3,4},0)$,  based on a new fermionic  character formula of $L_{Vir}(c_{3,4},0)$. 

If we consider extended Virasoro vertex algebras associated with minimal model which is not necessarily a $(2,2k+1)$-minimal model, we might still have that $\psi $ is isomorphism. Our discussion is heavily motivated by \cite{jacob2006embedding} where the combinatorics of (super)extensions of $(3,p)$-minimal vertex algebras were discussed.

\begin{ex}
For the free fermion model  $\mathcal{F} =L_{Vir}(c_{(3,4)},0) \oplus L_{Vir}(c_{(3,4)},\frac12) $, $\psi$ is clearly
an isomorphism as discussed in Proposition \ref{iso fermion}.
\end{ex}

\begin{ex}
The $L_{c_{2,8}}^{N=1}$ minimal vertex superalgebra has the following realization: \[L_{c_{(2,8)}}^{N=1}\cong L_{Vir}(c_{(3,8)},0)\oplus L_{Vir}(c_{(3,8)},\frac{3}{2}).\] This realization is called extended algebra and was studied in \cite{jacob2006embedding}. The map $\psi$ is not an isomorphism in the case of $L_{Vir}(c_{(3,8)},0)$. But we have shown that for the extended algebra of $L_{Vir}(c_{(3,8)},0)$, the map $\psi$ is an isomorphism.
This model was analyzed from a different perspective in \cite{hao2020}.
\end{ex}

\begin{ex}
 Next let us consider $V=L_{Vir}(c_{(3,10)},0)\oplus L_{Vir}(c_{(3,10)},2)$. It is well-known that \[ L_{Vir}(c_{(2,5)},0)\otimes L_{Vir}(c_{(2,5)},0)\cong L_{Vir}(c_{(3,10)},0)\oplus L_{Vir}(c_{(3,10)},2). \] We let $\omega_1$ and $\omega_2$ be conformal vectors of the first factor and the second factor of $L_{Vir}(c_{(2,5)},0)\otimes L_{Vir}(c_{(2,5)},0)$. Then the isomorphism map $f$ sends $\omega_{1}+\omega_{2}$ to the conformal vector $\omega$ of $L_{Vir}(c_{(3,10)},0)$ and $\omega_{1}-\omega_{2}$ to the lowest weight vector $\phi$ of $L_{Vir}(c_{(3,10)},2)$. Since we know that \[J_{\infty}(R_{L_{Vir}(c_{(2,5)},0)})\cong J_{\infty}(\mathbb{C}[x]/\langle x^{2}\rangle)\cong gr(L_{Vir}(c_{2,5},0)),\] the map $\psi$ is an isomorphism for $V$, i.e. \[J_{\infty}(R_{V})=J_{\infty}(R_{L_{Vir}(c_{(2,5)},0)}\otimes R_{L_{Vir}(c_{(2,5)},0)})\cong J_{\infty}(\mathbb{C}[x,y]/\langle x^{2},y^{2}\rangle) \cong gr(V).\] For $L_{Vir}(c_{(3,10)},0)\oplus L_{Vir}(c_{(3,10)},2)$, its $C_{2}$-algebra is isomorphic to $$\mathbb{C}[u,v]/\langle uv,u^{2}+v^{2},u^{3},v^{3}\rangle$$ after we identify $x+y$, $x-y$ in $\mathbb{C}[x,y]/\langle x^{2},y^{2}\rangle$ with $u$ and $v$, respectively.
\end{ex}

 \begin{rem}We also know from \cite{jacob2006embedding} that the normalized parafermionic character of $V=L_{Vir}(c_{(3,10)},0)\oplus L_{Vir}(c_{(3,10)},2)$ is given by  \[{\rm ch}[V](q)= \sum_{n_1,n_2,m_1 \geq 0} \frac{q^{(n_1+n_2+m_1)(n_1+n_2)+n_2(n_2+m_1)+m_1^2+m_1 +n_1+2n_2}}{(q)_{n_1}(q)_{n_2} (q)_{m_1}}.\] 
 Next let us consider the jet algebra $$J_{\infty}(\mathbb{C}[u,v]/\langle u^{2},v^{3},uv\rangle$$ where degrees of $u$ and $v$ are both 2. Clearly, it has the following spanning set:
 \[u_{(-n_{1})}\ldots u_{(-n_{N})}v_{(-m_{1})}\ldots v_{(-m_{M})}\] subject to constraints:
\begin{itemize}
    \item[(a)]\it{difference two condition at distance 1:} $n_{i}\geq n_{i+1}+2 $.
    \item[(b)]\it{difference two condition at distance 2:} 
    $m_{i}\geq m_{i+2}+2.$
    \item[(c)]\it{boundary condition:} $n_{N}\geq 2+M$
\end{itemize} where conditions $(a),$ $(b),$ $(c)$ are coming from $(u^{2})_{\partial}$, $(v^{3})_{\partial}$, $(uv)_{\partial}$ in the quotient ideal of the jet algebra. Meanwhile according to Proposition \ref{slk} and Theorem \ref{prin} , we know that \begin{align*} &J_{\infty}(\mathbb{C}[u]/\langle u^2 \rangle)\cong gr(W_{\Lambda_{1,0}}),\\   &J_{\infty}(\mathbb{C}[v]/\langle v^3 \rangle)\cong gr(W_{\Lambda_{2,0}}), \\ & J_{\infty}(\mathbb{C}[u,v]/\langle uv \rangle)\cong gr(W_{\Gamma}), \end{align*} where $\Gamma$ is the graph $\circ-\circ$. Using three realizations of jet algebras and Gordon-Andrews character formulas from \cite{feigin1993quasi,calinescu2008vertex}, it is not hard to see that the above spanning set, subject to constraints $(a)$-$(c)$, would produce a basis of the jet algebra $J_{\infty}(\mathbb{C}[u,v]/\langle u^{2},v^{3},uv\rangle)$ whose Hilbert series is given by  \begin{align*}\sum_{n_1,n_2,m_1 \geq 0} \frac{q^{(n_1+n_2+m_1)(n_1+n_2)+n_2(n_2+m_1)+m_1^2+m_1 +n_1+2n_2}}{(q)_{n_1}(q)_{n_2} (q)_{m_1}}.\end{align*}
One the other hand, the normalized character formula for $V=L_{Vir}(c_{(2,5)},0)\otimes L_{Vir}(c_{(2,5)},0)$ is \[{\rm ch}[V](q)=\sum_{n_1,n_2 \geq 0} \frac{q^{n_1^2+n_2^2+n_1+n_2}}{(q)_{n_1}(q)_{n_2}}.\] Thus we have Hilbert series identities:
 \begin{align*}&HS_{q}(J_{\infty}(\mathbb{C}[x,y]/(x^{2},y^{2}))= HS_q(J_{\infty}(\mathbb{C}[u,v]/\langle uv,u^{2}+v^{2},u^{3},v^{3}\rangle)\\ &=HS_{q}( J_{\infty}(\mathbb{C}[u,v]/\langle u^{2},v^{3},uv)\rangle)\end{align*} and
 \begin{align*}\sum_{n_1,n_2,m_1 \geq 0} \frac{q^{(n_1+n_2+m_1)(n_1+n_2)+n_2(n_2+m_1)+m_1^2+m_1 +n_1+2n_2}}{(q)_{n_1}(q)_{n_2} (q)_{m_1}}=\sum_{n_1,n_2 \geq 0} \frac{q^{n_1^2+n_2^2+n_1+n_2}}{(q)_{n_1}(q)_{n_2}}.\end{align*}
\end{rem}

\bigskip

\section{Conclusion and future work}

In this work, we examined the injectivity of the $\psi$ map for several types of vertex algebras by making use of variety of
methods, including $q$-series identities. More precisely,

\begin{itemize}

\item[(1)] The concept of jet algebra for vertex algebras can be extended to vertex superalgebras extending
ideas of Arakawa in the super setup.
We showed on several familiar examples, e.g. $(2,4k)$ superconformal vertex algebras, that
the $\psi$ map is an isomorphism. However, we also gave counterexamples coming
from other $N=1$ minimal models and a certain odd rank one lattice vertex algebra.

\item[(2)]  We analyzed in great depth principal subspaces of lattice vertex algebras and affine vertex algebras and
showed that the $\psi$-map is isomorphism for many examples.

\item[(3)] We investigated the jet algebras coming from graphs. Interestingly, in some examples their Hilbert series are (mixed) mock modular forms.

\end{itemize}

\begin{rem}

If $L$ is a root lattice of a Lie algebra of type $D$ or $E_6$, $E_7$ and $E_8$, we expect that the $\psi$ map is an isomorphism for the FS principal subspace $W_{L}$. We will address this in our future work \cite{Li.Milas}.

\end{rem}

\begin{rem}

For the simple affine vertex algebra  $L_{\widehat{\mathfrak{g}}}(k,0)$, we know that $\psi$ is an isomorphism if $\mathfrak{g}$ is of type $C_{n}$ $(n\geq 2)$ for $k=1$ and $\mathfrak{g}=sl_{2}$ for any $k \in \mathbb{N}$.
But we expect that $\psi$ is isomorphism for any $\mathfrak{g}$ and any $k \in \mathbb{N}$, and moreover, we expect that
$\psi$ is isomorphism for the FS-principal subspaces thereof.

\end{rem}

\color{black}

\bibliography{reference}

\newcommand{\etalchar}[1]{$^{#1}$}
\begin{thebibliography}{BFOR17}

\bibitem[Abe07]{abe2007mathbb}
Toshiyuki Abe.
\newblock Orbifold model of the symplectic fermionic vertex operator
  superalgebra.
\newblock {\em Mathematische Zeitschrift}, 255(4):755--792, 2007.

\bibitem[Ada97]{adamovic1997rationality}
Dra{\v{z}}en Adamovi{\'c}.
\newblock Rationality of {Neveu-Schwarz} vertex operator superalgebras.
\newblock {\em International mathematics research notices}, 17:865--874, 1997.

\bibitem[Ada99]{adamovic1999rationality}
Dra\v{z}en Adamovi\'c.
\newblock Rationality of unitary {$N= 2$} vertex operator superalgebras.
\newblock {\em arXiv preprint math/9909055}, 1999.

\bibitem[Ada01]{adamovic2001vertex}
Dra{\v{z}}en Adamovi{\'c}.
\newblock Vertex algebra approach to fusion rules for {N}= 2 superconformal
  minimal models.
\newblock {\em Journal of Algebra}, 239(2):549--572, 2001.

\bibitem[AK18]{arakawa2018quasi}
Tomoyuki Arakawa and Kazuya Kawasetsu.
\newblock Quasi-lisse vertex algebras and modular linear differential
  equations.
\newblock In {\em Lie Groups, Geometry, and Representation Theory}, pages
  41--57. Springer, 2018.

\bibitem[AL18]{arakawa2018singular}
Tomoyuki Arakawa and Andrew~R Linshaw.
\newblock Singular support of a vertex algebra and the arc space of its
  associated scheme.
\newblock {\em arXiv preprint arXiv:1804.01287}, 2018.

\bibitem[AM17]{arakawa2017sheets}
Tomoyuki Arakawa and Anne Moreau.
\newblock Sheets and associated varieties of affine vertex algebras.
\newblock {\em Advances in Mathematics}, 320:157--209, 2017.

\bibitem[AM18a]{arakawa2018arc}
Tomoyuki Arakawa and Anne Moreau.
\newblock Arc spaces and chiral symplectic cores.
\newblock {\em arXiv preprint arXiv:1802.06533}, 2018.

\bibitem[AM18b]{arakawa2018joseph}
Tomoyuki Arakawa and Anne Moreau.
\newblock Joseph ideals and lisse minimal $ w $-algebras.
\newblock {\em Journal of the Institute of Mathematics of Jussieu},
  17(2):397--417, 2018.

\bibitem[AM18c]{arakawa2018irreducibility}
Tomoyuki Arakawa and Anne Moreau.
\newblock On the irreducibility of associated varieties of $w$-algebras.
\newblock {\em Journal of Algebra}, 500:542--568, 2018.

\bibitem[Ara12]{arakawa2012remark}
Tomoyuki Arakawa.
\newblock A remark on the {$C_{2}$}-cofiniteness condition on vertex algebras.
\newblock {\em Mathematische Zeitschrift}, 270(1-2):559--575, 2012.

\bibitem[BFM91]{beilinson1991introduction}
Alexander Beilinson, Boris Feigin, and Barry Mazur.
\newblock Introduction to algebraic field theory on curves.
\newblock {\em preprint}, 395:462--463, 1991.

\bibitem[BFOR17]{BFOR}
Kathrin Bringmann, Amanda Folsom, Ken Ono, and Larry Rolen.
\newblock {\em Harmonic Maass forms and mock modular forms: theory and
  applications}, volume~64.
\newblock American Mathematical Soc., 2017.

\bibitem[BGK18]{bai2018quadratic}
Yuzhe Bai, Eugene Gorsky, and Oscar Kivinen.
\newblock Quadratic ideals and {Rogers--Ramanujan} recursions.
\newblock {\em The Ramanujan Journal}, pages 1--23, 2018.

\bibitem[BK19]{butorac2019principal}
Marijana Butorac and Slaven Ko{\v{z}}i{\'c}.
\newblock Principal subspaces for the affine {Lie} algebras in types {$ D $, $
  E $ and $ F$}.
\newblock {\em arXiv preprint arXiv:1902.10794}, 2019.

\bibitem[BLL{\etalchar{+}}15]{beem2015infinite}
Christopher Beem, Madalena Lemos, Pedro Liendo, Wolfger Peelaers, Leonardo
  Rastelli, and Balt~C van Rees.
\newblock Infinite chiral symmetry in four dimensions.
\newblock {\em Communications in Mathematical Physics}, 336(3):1359--1433,
  2015.

\bibitem[BMS13]{bruschek2013arc}
Clemens Bruschek, Hussein Mourtada, and Jan Schepers.
\newblock Arc spaces and the {Rogers--Ramanujan} identities.
\newblock {\em The Ramanujan Journal}, 30(1):9--38, 2013.

\bibitem[BPT16]{baranovic2016bases}
Ivana Baranovic, Mirko Primc, and Goran Trupcevic.
\newblock Bases of {F}eigin-{S}toyanovsky's type subspaces for
  $c_{\ell}^{(1)}$.
\newblock {\em arXiv preprint arXiv:1603.04594}, 2016.

\bibitem[But12]{butorac2012combinatorial}
Marijana Butorac.
\newblock {\em Combinatorial bases of principal subspaces of standard modules
  for affine Lie algebra of type {$B_{2}^(1)$}}.
\newblock PhD thesis, Prirodoslovno-matemati{\v{c}}ki
  fakultet-Matemati{\v{c}}ki odsjek, Sveu{\v{c}}ili{\v{s}}te u Zagrebu, 2012.

\bibitem[CLM06]{capparelli2006rogers}
Stefano Capparelli, James Lepowsky, and Antun Milas.
\newblock The {Rogers--Selberg} recursions, the {Gordon--Andrews} identities
  and intertwining operators.
\newblock {\em The Ramanujan Journal}, 12(3):379--397, 2006.

\bibitem[CLM08]{calinescu2008vertex}
Corina Calinescu, James Lepowsky, and Antun Milas.
\newblock {Vertex-algebraic structure of the principal subspaces of certain
  {$A_{1}^{(1)}$}-modules, II: {Higher-level} case}.
\newblock {\em Journal of Pure and Applied Algebra}, 212(8):1928--1950, 2008.

\bibitem[DLM97]{dong1997certain}
Chongying Dong, Haisheng Li, and Geoffrey Mason.
\newblock {Certain Associative Algebras Similar to $U(sl_2)$ and Zhu's Algebra
  $A (VL)$}.
\newblock {\em Journal of Algebra}, 196(2):532--551, 1997.

\bibitem[Fei09]{feigin2009pbw}
Evgeny Feigin.
\newblock The {PBW} filtration.
\newblock {\em Representation Theory of the American Mathematical Society},
  13(9):165--181, 2009.

\bibitem[FF93]{feigin1993coinvariants}
Boris Feigin and Edward Frenkel.
\newblock Coinvariants of nilpotent subalgebras of the {Virasoro} algebra and
  partition identities.
\newblock {\em Adv. Sov. Math}, 16(1003):139--148, 1993.

\bibitem[FFJ{\etalchar{+}}09]{feigin2009principal}
Boris Feigin, Evgeny Feigin, Michio Jimbo, Tetsuji Miwa, Evgeny Mukhin, et~al.
\newblock Principal $\widehat{sl}_3$ subspaces and quantum {Toda Hamiltonian}.
\newblock In {\em Algebraic Analysis and Around: In honor of Professor Masaki
  Kashiwara's 60th birthday}, pages 109--166. Mathematical Society of Japan,
  2009.

\bibitem[FFL11]{feigin2011zhu}
Boris Feigin, Evgeny Feigin, and Peter Littelmann.
\newblock Zhu's algebras, {$C_{2}$}-algebras and abelian radicals.
\newblock {\em Journal of Algebra}, 329(1):130--146, 2011.

\bibitem[FLK{\etalchar{+}}01]{feigin2001combinatorics}
B~Feigin, S~Loktev, Rinat Kedem, T~Miwa, and E~Mukhin.
\newblock Combinatorics of the spaces of coinvariants.
\newblock {\em Transformation groups}, 6(1):25--52, 2001.

\bibitem[FS93]{feigin1993quasi}
Boris Feigin and AV~Stoyanovsky.
\newblock Quasi-particles models for the representations of {Lie} algebras and
  geometry of flag manifold.
\newblock {\em arXiv preprint hep-th/9308079}, 1993.

\bibitem[Jer12]{jerkovic2012character}
Miroslav Jerkovi{\'c}.
\newblock Character formulas for {F}eigin--{S}toyanovsky?s type subspaces of
  standard $\widehat{sl(3,C)}$-modules.
\newblock {\em The Ramanujan Journal}, 27(3):357--376, 2012.

\bibitem[JM06]{jacob2006embedding}
P~Jacob and P~Mathieu.
\newblock Embedding of bases: {From} the {$\mathcal{M}(2, 2\kappa+ 1)$} to the
  {$\mathcal{M} (3, 4\kappa+ 2- \delta)$} {Models}.
\newblock {\em Physics Letters B}, 635(5-6):350--354, 2006.

\bibitem[JSM20]{JM}
Chris Jennings-Shaffer and Antun Milas.
\newblock Further $ q $-series identities and conjectures relating false theta
  functions and characters.
\newblock {\em arXiv preprint arXiv:2005.13620}, 2020.

\bibitem[Kac98]{kac1998vertex}
Victor~G Kac.
\newblock {\em Vertex algebras for beginners}.
\newblock Number~10. American Mathematical Soc., 1998.

\bibitem[Li05]{li2005abelianizing}
Haisheng Li.
\newblock Abelianizing vertex algebras.
\newblock {\em Communications in mathematical physics}, 259(2):391--411, 2005.

\bibitem[LL12]{lepowsky2012introduction}
James Lepowsky and Haisheng Li.
\newblock {\em Introduction to vertex operator algebras and their
  representations}, volume 227.
\newblock Springer Science \& Business Media, 2012.

\bibitem[LM]{Li.Milas}
Hao Li and Antun Milas.
\newblock Quantum dilogarithm and characters of {FS}-principal sub- spaces.
\newblock {\em in preparation}.

\bibitem[LMJ20]{hao2020}
Hao Li, Antun Milas, and Wauchope Josh.
\newblock ${S}_2$ permutation orbifolds of {N}=1 and {N}=2 vertex superalgebras
  and {W}-algebras.
\newblock {\em submitted}, 2020.

\bibitem[Mel94]{melzer1994supersymmetric}
Ezer Melzer.
\newblock Supersymmetric analogs of the {Gordon-Andrews} identities, and
  related {TBA} systems.
\newblock {\em arXiv preprint hep-th/9412154}, 1994.

\bibitem[Mil07]{milas2007characters}
Antun Milas.
\newblock Characters, {supercharacters} and {Weber} modular functions.
\newblock {\em Journal f{\"u}r die reine und angewandte Mathematik (Crelles
  Journal)}, 2007(608):35--64, 2007.

\bibitem[MP87]{meurman1987annihilating}
Arne Meurman and Mirko Primc.
\newblock Annihilating ideals of standard modules of $\widetilde{sl(2,C)}$ and
  combinatorial identities.
\newblock {\em Advances in mathematics (New York, NY. 1965)}, 64(3):177--240,
  1987.

\bibitem[MP12]{milas2012lattice}
Antun Milas and Michael Penn.
\newblock Lattice vertex algebras and combinatorial bases: general case and
  {W}-algebras.
\newblock {\em New York J. Math}, 18:621--650, 2012.

\bibitem[Oga00]{ogawa2000zhu}
Akihiko Ogawa.
\newblock Zhu's algebra of rank one lattice vertex operator superalgebras.
\newblock {\em Osaka Journal of Mathematics}, 37(4):811--822, 2000.

\bibitem[Pen14]{penn2014lattice}
Michael Penn.
\newblock Lattice vertex superalgebras, {I}: {Presentation} of the principal
  subalgebra.
\newblock {\em Communications in Algebra}, 42(3):933--961, 2014.

\bibitem[Pri94]{primc1994vertex}
Mirko Primc.
\newblock Vertex operator construction of standard modules for {$A_{n}(1)$}.
\newblock {\em Pacific Journal of Mathematics}, 162(1):143--187, 1994.

\bibitem[Pri00]{primc2000basic}
Mirko Primc.
\newblock Basic representations for classical affine {Lie} algebras.
\newblock {\em Journal of Algebra}, 228(1):1--50, 2000.

\bibitem[P{\v{S}}16]{primc2016combinatorial}
Mirko Primc and Tomislav {\v{S}}iki{\'c}.
\newblock Combinatorial bases of basic modules for affine {Lie} algebras
  {$C_{n}^{(1)}$}.
\newblock {\em Journal of Mathematical Physics}, 57(9):091701, 2016.

\bibitem[Tru09]{trupvcevic2009combinatorial}
Goran Trup{\v{c}}evi{\'c}.
\newblock Combinatorial bases of {F}eigin--{S}toyanovsky's type subspaces of
  higher-level standard sl(l+ 1, c)-modules.
\newblock {\em Journal of Algebra}, 322(10):3744--3774, 2009.

\bibitem[Tru11]{trupvcevic2011characters}
Goran Trup{\v{c}}evi{\'c}.
\newblock Characters of {Feigin-Stoyanovsky's type subspaces of level one
  modules for affine Lie algebras of types $A_l (1)$ and $D_4(1)$}.
\newblock {\em Glasnik matemati{\v{c}}ki}, 46(1):49--70, 2011.

\bibitem[vEH18]{van2018chiral}
Jethro van Ekeren and Reimundo Heluani.
\newblock Chiral homology of elliptic curves and {Zhu's} algebra.
\newblock {\em arXiv preprint arXiv:1804.00017}, 2018.

\bibitem[vEH20]{van2020singular}
Jethro van Ekeren and Reimundo Heluani.
\newblock The singular support of the {Ising} model.
\newblock {\em arXiv preprint arXiv:2005.10769}, 2020.

\bibitem[Zhe17]{zheng2017vertex}
Lisun Zheng.
\newblock Vertex operator superalgebras associated with affine {Lie}
  superalgebras.
\newblock {\em Communications in Algebra}, 45(6):2417--2434, 2017.

\bibitem[Zhu96]{zhu1996modular}
Yongchang Zhu.
\newblock Modular invariance of characters of vertex operator algebras.
\newblock {\em Journal of the American Mathematical Society}, 9(1):237--302,
  1996.

\end{thebibliography}
\bibliographystyle{alpha}
\end{document}